\documentclass[11pt]{article}

\usepackage[utf8]{inputenc}
\usepackage[margin=1in]{geometry}
\usepackage[dvipsnames]{xcolor}
\usepackage{authblk}
\usepackage[en-US]{datetime2}

\usepackage{amsmath}
\usepackage{amssymb}
\usepackage{mathtools}
\usepackage{amsthm}
\usepackage{thm-restate}   % restating results in the appendix
\usepackage{mathrsfs}
\usepackage{dsfont}
\usepackage{upgreek}
\usepackage{relsize}
\usepackage{physics}
\allowdisplaybreaks                 % let long calculations break across pages

\usepackage{enumitem}
\usepackage{booktabs, multirow, makecell}
\usepackage{graphicx}
\usepackage{subcaption}
\usepackage{float}
\usepackage{placeins}
\usepackage{capt-of}
\usepackage{tikz}

\usepackage{mdframed}
\usepackage[most]{tcolorbox}
\usepackage{algorithm}
\usepackage{algpseudocode}
\usepackage{matlab-prettifier}
\usepackage{listing}

\usepackage{comment}
\usepackage{mwe}

\usepackage[sort&compress,numbers]{natbib}
\usepackage{hyperref}

\theoremstyle{plain}
\newtheorem{remark}{Remark}
\newtheorem{theorem}{Theorem}
\newtheorem{lemma}{Lemma}
\newtheorem{proposition}{Proposition}
\newtheorem{corollary}{Corollary}

\theoremstyle{definition}
\newtheorem{definition}{Definition}

\newcommand{\id}{\mathds{1}}                        % identity operator
\newcommand{\Tra}{\mathrm{Tr}}                      % trace
\newcommand{\EE}{\mathop{\mathbb{E}}\nolimits}      % expectation (subscript, not limits)
\renewcommand{\Re}{\operatorname{Re}}
\newcommand{\arccosh}{\operatorname{arccosh}}

\newcommand{\Dsand}{\widetilde{D}_{\alpha}}                 % sandwiched Renyi divergence
\newcommand{\Hup}{\widetilde{H}^{\uparrow}_{\alpha}}        % uparrow sandwiched conditional entropy
\newcommand{\sigmaopt}{\Tilde{\sigma}^{*}_{B}(\alpha)}      % optimizer in \Hup
\newcommand{\Sig}{\Theta}                                   % reference operator 1_A \otimes \sigma_B (change symbol here)

\newcommand{\Renyi}{R\'enyi}

\DeclarePairedDelimiterX{\infdivx}[2]{(}{)}{#1\;\delimsize\|\;#2}

\newcommand{\cH}{\mathcal{H}}

\title{Fully quantum fidelity-based entropic continuity bounds}

\author[1]{Matthew Simon Tan}
\author[1]{Ernest Y.-Z.\ Tan}
\affil[1]{Department of Physics, National University of Singapore}

\date{}

\begin{document}

\maketitle

\begin{abstract}
We present a continuity bound for quantum conditional entropies in terms of fidelity, with no dependence on the dimension of the conditioning system. Our results encompass both the von Neumann entropy and one version of sandwiched Rényi entropy, which is particularly relevant for applications in quantum key distribution (QKD). In such applications, this bound should provide improved results as compared to previous continuity bounds based on trace distance. We also show that the bounds can be sharpened as expected for classical-quantum states, and verify that they hold when the conditioning system has countably infinite dimension. Finally, we present several variants tailored for exponential functions of the Rényi entropies instead, which should be relevant for QKD applications. 
\end{abstract}

\section{Introduction}

In quantum information theory, one of the most critical concepts is the notion of entropy for quantum states. Many different versions of quantum entropy have been proposed, with a wide range of important applications~\cite{tomamichel2015quantum,khatri2024principles}. Broadly, however, they can be qualitatively viewed as measures of how much ``randomness'' or ``uncertainty'' there is in a quantum state, which can then be translated into operational applications such as data compression or secret key extraction.

In particular, one major application of quantum entropies is in the field of quantum key distribution (QKD), where they quantify the rate at which secret key bits can be extracted from a QKD protocol (in short, the \emph{keyrate}). 
Basically, the achievable keyrate in the asymptotic limit of infinitely many QKD protocol rounds can be bounded in terms of the well-known 
\emph{von Neumann entropy}~\cite{NC10}, via information-theoretic methods~\cite{devetak2005distillation,rennerthesis}. Subsequently, more recent work has shown that to analyze QKD performance for realistic, finite numbers of rounds, a more directly relevant quantity is a generalized entropy family known as the \emph{{\Renyi} entropies}~\cite{dupuis2023,fawzi2026additivity,arqand2025marginal,kamin2025renyi,navarro2026finite}.
For this work, we focus on these types of entropies.

However, in various security proofs, there arise situations where one does not analyze exactly some well-characterized state, but instead a different state ``close'' to it under some measure~\cite{kamin2025renyi,wiesemann2026security,upadhyaya2022improved,kanitschar2023finite,navarro2026effective,sekatski2021device}. For instance, this occurs~\cite{kamin2025renyi,wiesemann2026security} in the study of devices with small imperfections, so they do not produce the ideal desired state but rather one close to it. It also arises in ``dimension-reduction'' techniques~\cite{upadhyaya2021dimension,upadhyaya2022improved,kanitschar2023finite,navarro2026effective} in QKD, where infinite-dimensional states are tackled by showing they are close to more tractable finite-dimensional ones. In such situations, one commonly used approach is to apply a ``continuity'' argument --- qualitatively, the idea that if two states are close, then their entropies (and thus the resulting keyrates) should also not be too different. This is formalized by  \emph{entropic continuity bounds}, which rigorously bound the difference in entropy between states that are close according to some measure. 
% However, physical QKD implementations face a practical issue that realistic QKD devices are generally ``imperfect'', differing at least slightly from the models used in idealized security proofs. Hence substantial effort has been put into modifying the proofs to accommodate such imperfections. When the imperfections are small, one commonly used approach is to apply a ``continuity'' argument --- qualitatively, the idea that if the imperfect states are close to the idealized ones, then their entropies (and thus the resulting keyrates) should also not be too different. This is formally captured by  \emph{entropic continuity bounds}, which rigorously bound the difference in entropy between states that are ``close" according to some measure. 

There now exists a rather wide variety of entropic continuity bounds, including but not limited to~\cite{audenaert2007sharp,winter2016tight,marwah2022uniform,bluhm2026unified,cheng2026sharp,berta2026sharp,sekatski2021device,liu2023lipschitz}. However, a subtle limitation of most such existing results (e.g.~\cite{audenaert2007sharp,winter2016tight,marwah2022uniform,bluhm2026unified,cheng2026sharp,berta2026sharp}) is that they use trace distance~\cite{NC10} as the measure of closeness between the states. Under this choice, there exist explicit examples~\cite{bluhm2026unified,cheng2026sharp} demonstrating that the resulting continuity bound 
%must grow faster than any linear function of the trace distance (formally: it cannot be a Lipshitz continuity bound). 
grows very rapidly with trace distance.
This has unfortunate consequences for their applications in QKD security proofs, as a small imperfection in terms of trace distance results in a large change in entropy.

To circumvent this issue, \cite{sekatski2021device} proposed entropic continuity bounds based on fidelity~\cite{NC10}, another measure of the closeness between quantum states. We highlight that this change in the distinguishability measure is not merely an arbitrary or unmotivated choice --- in many QKD applications~\cite{kamin2025renyi,wiesemann2026security,upadhyaya2021dimension,upadhyaya2022improved,kanitschar2023finite,navarro2026effective,sekatski2021device} for instance, the more ``natural'' measure that arises is in fact basically fidelity rather than trace distance. From this perspective, one could say that results based on trace-distance entropic continuity bounds were therefore unnecessarily pessimistic, as they were using a suboptimal choice of closeness measure. It was indeed found in \cite{sekatski2021device} that for entropic continuity bounds based on fidelity, much better results could be obtained.
Unfortunately, not many fidelity-based entropic continuity bounds are currently known, and existing ones have the drawback of being limited to states with classical conditioning registers (making them less suitable for QKD)~\cite{liu2023lipschitz}, having dependencies on the dimension of the conditioning system~\cite{sekatski2021device} (making them unusable when the dimension of that system is unknown), or only applying to specific structured states~\cite{holevo2026log}.

Our contribution in this work is a variety of entropic continuity bounds based on fidelity, covering the von Neumann entropy $H(A|B)$ and the {\Renyi} entropy $\Hup(A|B)$ (more precisely, the ``uparrow sandwiched'' family of {\Renyi} entropies; we provide a detailed definition in Definition~\ref{def:Hup}). For ease of reference, our main result is summarized as follows, in terms of angular distance (a well-known metric based on fidelity~\cite{NC10}; see Definition~\ref{def:angle} for a self-contained presentation).

Note that for convenience of the proofs, all entropies in this work are defined with respect to the natural logarithm (i.e.~in base $e$) --- to obtain results for base-$2$ entropies, simply divide both sides of the expression by $\ln 2$.  
\begin{theorem}\label{thm:intro}
Consider any normalized quantum states $\rho_{AB},\sigma_{AB}$, and any $\alpha\in(\frac12,1)\cup(1,\infty)$. Let $\Lambda(\rho,\sigma)$ denote their angular distance (Definition~\ref{def:angle}). Then for $q=\frac{\alpha-1}{\alpha}$, we have (for entropies \emph{in base $e$}):
\begin{align}
\Big|\Hup(A|B)_\rho-\Hup(A|B)_\sigma\Big| \;\le\; 2\,\Lambda(\rho_{AB},\sigma_{AB})\,\frac{\sinh\big(|q|\arccosh (d_A^\chi)\big)}{|q|},
\end{align}
where $\chi=1/2$ if $\rho,\sigma$ are both classical on $A$, and $\chi=1$ otherwise. 
When $\alpha\to1$, this reduces to
\begin{align}
\big|H(A|B)_\rho-H(A|B)_\sigma\big| \;\le\; 2\,\Lambda(\rho_{AB},\sigma_{AB})\,\ln\!\Big(d_A^\chi+\sqrt{d_A^{2\chi}-1}\Big). \label{eq:introvNbound}
\end{align}
\end{theorem}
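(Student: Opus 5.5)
The plan is to reduce the statement to a \emph{local} Lipschitz estimate along a fidelity geodesic and then integrate. Let $\Lambda=\Lambda(\rho_{AB},\sigma_{AB})=\arccos F(\rho_{AB},\sigma_{AB})$, with $F$ the root fidelity. By Uhlmann's theorem I choose purifications $\ket{\psi},\ket{\phi}$ on $ABR$ with $\langle\psi|\phi\rangle=\cos\Lambda\ge0$. I then take the great-circle path
\[
\ket{\psi_t}=\frac{\sin(\Lambda-t)}{\sin\Lambda}\ket{\psi}+\frac{\sin t}{\sin\Lambda}\ket{\phi},\qquad t\in[0,\Lambda],
\]
and set $\rho^t_{AB}=\Tra_R\ketbra{\psi_t}$. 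These are normalized states with $\rho^0=\rho$ and $\rho^\Lambda=\sigma$. By monotonicity of fidelity under partial trace, $\Lambda(\rho^s,\rho^t)\le|s-t|$. It therefore suffices to show, for $g(t)=\Hup(A|B)_{\rho^t}$, that
\begin{align}
|g'(t)|\le K_\alpha:=\frac{2\sinh\big(|q|\arccosh(d_A^\chi)\big)}{|q|}.
\end{align}
Where differentiability is not clear, I would use the corresponding bound on the upper Dini derivative, or work with $\limsup_{s\to t}|g(s)-g(t)|/|s-t|$. Integrating over $[0,\Lambda]$ then gives the claim. If needed, I would also perturb to full-rank states to justify the derivatives.

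For the local estimate, I write $\Hup(A|B)_\rho=\sup_{\tau_B}\big(-\Dsand(\rho_{AB}\|\id_A\otimes\tau_B)\big)$. A difference of suprema is bounded by the supremum of the differences, so for each fixed $\Sig=\id_A\otimes\tau_B$ it is enough to bound the derivative of $t\mapsto\Dsand(\rho^t\|\Sig)$. In practice I would evaluate at the optimizer $\sigmaopt$ for $\rho^t$ (envelope theorem). This yields $g'(t)=-\partial_t\Dsand(\rho^t\|\Sig)\big|_{\Sig=\id\otimes\sigmaopt}$, which I would write as $\frac{1}{\alpha-1}\,\partial_t\ln Q_t$ with
\[
Q_t=\Tra\big[(\Sig^{-q/2}\rho^t\Sig^{-q/2})^\alpha\big]=\Tra\big[\rho^t\,X_t\big],\qquad X_t=\big(\Sig^{-q/2}\rho^t\Sig^{-q/2}\big)^{\alpha-1}\ \text{(suitably sandwiched)}.
\]
The derivative of $\ketbra{\psi_t}$ along the great circle is $\ket{\psi_t}\!\bra{\psi_t^\perp}+\text{h.c.}$, where $\psi_t^\perp$ is a unit vector orthogonal to $\psi_t$. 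Applying the Lieb–Thirring/derivative identity for trace powers, the derivative becomes $2\alpha\,\Re\bra{\psi_t^\perp}Y\ket{\psi_t}$ for an explicit positive operator $Y$ on $ABR$ with $\bra{\psi_t}Y\ket{\psi_t}=Q_t$. Cauchy–Schwarz with the orthogonality constraint then gives $|\partial_tQ_t|\le 2\alpha\sqrt{\bra{\psi_t}Y^2\ket{\psi_t}-Q_t^2}$, a standard-deviation bound. The rate of change of $\ln Q_t$ is thus controlled by the ratio of the second moment to the first moment of $Y$ in $\psi_t$.

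The main obstacle is bounding this variance term dimension-independently in $B$, using only $d_A$. The idea is to use the operator inequalities that give the entropy range, namely $d_A^{-\chi'}\le\ldots$ from $-\ln d_A\le\Hup\le\ln d_A$. For classical $A$ the range is $0\le\Hup\le\ln d_A$, which is why $\chi=1/2$ should enter after symmetrizing the range. These bounds confine the relevant eigenvalue ratios of $Y$ in the state $\psi_t$ to an interval of the form $[e^{-|q|L},e^{|q|L}]$ with $e^{L}=d_A^{\chi}$-type endpoints. Maximizing the standard deviation over distributions with fixed mean supported on that interval gives a two-point extremizer, and the optimal value should be exactly $\sinh(|q|\arccosh d_A^\chi)$. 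This comes from the hyperbolic identity $\sqrt{\cosh^2x-1}=\sinh x$, with $\cosh x$ identified with $d_A^{\chi}$ after rescaling by the exponent $|q|$. I expect this extremal computation, and in particular getting the correct normalization so that $\arccosh(d_A^\chi)$ (rather than $\ln d_A$) appears, to be the delicate step. For $\alpha\in(\tfrac12,1)$ and $\alpha>1$ the signs of $q$ must be tracked separately, which is why the bound involves $|q|$. Finally, the von Neumann case follows by letting $\alpha\to1$, where $\sinh(|q|x)/|q|\to x$ and $\arccosh(y)=\ln(y+\sqrt{y^2-1})$, using continuity of $\Hup$ in $\alpha$ at $1$. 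This yields the second bound in Theorem~\ref{thm:intro}.
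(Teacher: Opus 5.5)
Your skeleton (great-circle path between Uhlmann purifications, envelope/Danskin at the optimizer, Cauchy--Schwarz reducing $|g'(t)|$ to $\frac{2}{|q|}\sqrt{\Tra[\omega_t Y_\alpha^2]-1}$) is the paper's. The step you flag as delicate, however, fails as proposed.

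\textbf{The variance bound.} The entropy range $-\ln d_A\le\Hup\le\ln d_A$ constrains only the mean $Q$. It does not confine the spectrum of $Y_\alpha$, as seen by the state, to any interval $[e^{-x},e^{x}]$ with $x$ depending only on $d_A,\alpha$. Take $\omega=\rho_A\otimes\rho_B$ with $\rho_A=\mathrm{diag}(1-\epsilon,\epsilon)$ and $\rho_B>0$. The optimizer is $\rho_B$, and $Y_\alpha=\rho_A^{\alpha-1}\otimes\id_B/\Tra[\rho_A^\alpha]$. This has an eigenvalue $\approx\epsilon^{\alpha-1}$ of weight $\epsilon$, tending to $0$ for $\alpha>1$ and to $\infty$ for $\alpha<1$, while $\Hup(A|B)_\omega\in[0,\ln 2]$ throughout. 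Moreover, $\Tra[\omega Y_\alpha^2]=\sum_i\lambda_i^{2\alpha-1}/(\sum_i\lambda_i^\alpha)^2$ stays bounded only because $\alpha>\frac12$; on the same family it diverges for $\alpha<\frac12$. So an argument that uses only the entropy range, and never $\alpha>\frac12$, cannot produce the bound.

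The missing idea is to control the tails by \emph{exponential moments} rather than a support interval, and to use the fixed-point property of the optimizer. With $\xi=X^\alpha/Q$, Lemma~\ref{lem:product} gives $\Tra[\omega Y_\alpha^2]=M(q)M(-q)$, where $M(t)=\Tra[\xi^{1+t}\Sig^{-t}]=\EE_P[e^{tZ}]$ for the Nussbaum--Szko{\l}a distribution of $(\xi,\Sig)$. Then:
\begin{itemize}
\item $M(-1)\le\Tra\Sig=d_A$ trivially.
\item $M(1)=e^{-\overline{H}^{\downarrow}_2(A|B)_\xi}\le d_A$, precisely because \eqref{eq:fixedpoint} gives $\Tra_A\xi=\sigmaopt$. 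This fails for a non-optimal $\Sig$, which is why your ``for each fixed $\Sig$'' variant cannot work and the bound must be taken at $\sigmaopt$.
\item AM--GM after centring $Z$ by a shift $\mu$, plus Jensen for $x\mapsto\cosh(q\arccosh x)$, gives $\Tra[\omega Y_\alpha^2]\le\cosh^2\big(q\arccosh\sqrt{M(1)M(-1)}\big)$. That function is concave on $[1,\infty)$ precisely because $|q|\le1$, i.e.\ $\alpha\ge\frac12$.
\end{itemize}
This is also the real origin of the $\arccosh$. The extremal law is two-point, with $Z-\mu=\pm\arccosh d_A$, under the \emph{moment} constraint $\EE\cosh(Z-\mu)\le d_A$. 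It is not given by the endpoints $\pm\ln d_A$ of a support interval.

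\textbf{The classical case.} You keep the same great-circle path for classical-on-$A$ endpoints and plan to use $0\le\Hup$ (or a cq-specific estimate) at intermediate points. But the intermediate states $\Tra_R|\psi_t\rangle\!\langle\psi_t|$ are generally not classical on $A$. For $\rho_{AB}=|00\rangle\!\langle00|$ and $\sigma_{AB}=|11\rangle\!\langle11|$, an admissible geodesic passes through the entangled pure states $\cos t|00\rangle+\sin t|11\rangle$, which have negative conditional entropy. Since the derivative bound is needed at every point, you must first replace the path by one that stays classical on $A$ without increasing its length.

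The paper does this by pinching (Lemma~\ref{lem:pathcq}). It sets $\widetilde\omega_t=\mathcal P(\omega_t)$, with $\mathcal P$ the dephasing on $A$, purified by $(V\otimes\id)|\Psi_t\rangle$ with $V|a\rangle=|a\rangle|a\rangle$. This preserves the speed and the relation $\langle\Psi_t|\dot\Psi_t\rangle=0$, and fixes both endpoints. Along that path the improvement comes from $M(1)\le1$, i.e.\ non-negativity of $\overline{H}^{\downarrow}_2(A|B)_\xi$ for the cq state $\xi$. Hence $M(1)M(-1)\le d_A$, and $d_A$ is replaced by $\sqrt{d_A}$ inside the $\arccosh$.
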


In addition to this, we note that in some QKD applications, the quantity of interest is not strictly the {\Renyi} entropy itself, but rather a particular exponentiation of it. For the purposes of such applications, we present in Appendix~\ref{app:quasient} two variants (Corollaries~\ref{cor:Rstate}--\ref{cor:Radd}) of our above continuity bound, which can potentially give better results in such cases. Readers interested in QKD applications should study the results presented in that appendix. 

\begin{remark}
Throughout this work, for ease of analysis we take all systems to be finite-dimensional by default. However, for applications we highlight that our final continuity results (Theorem~\ref{thm:intro}, Proposition~\ref{prop:Rarcsin}, Corollaries~\ref{cor:Rstate}--\ref{cor:Radd}) also automatically hold if the conditioning system $B$ has countably infinite dimension (i.e.~is on a separable Hilbert space), via an essentially straightforward limiting argument we present in Appendix~\ref{app:limits}.

Similarly, while by default we mostly exclude the ``endpoints'' $\alpha \in \{1/2, \infty\}$ for ease of analysis, the final results in (Theorem~\ref{thm:intro}, Proposition~\ref{prop:Rarcsin}, Corollaries~\ref{cor:Rstate}--\ref{cor:Radd}) can be easily extended to those $\alpha$ values by taking the respective limits, as we elaborate on in Appendix~\ref{app:limits}. 
In the case of $\alpha=1/2$, this limit simply consists of literally substituting that value into the expressions. 
\end{remark}

Regarding the bound~\eqref{eq:introvNbound} for the special case of $\alpha=1$, we also performed some preliminary comparisons to the bound obtained in~\cite{liu2023lipschitz}, under the condition that the states are classical on \emph{both} registers $AB$, so the results of both works are applicable. 
% We summarize our results in Sec.~\ref{sec:compare_shannon}.
We find that for $d_A = 2$ or $d_A=3$, our bound~\eqref{eq:introvNbound} is slightly worse than the bound in~\cite{liu2023lipschitz}, suggesting perhaps some potential room for sharpening (albeit with the understanding that our result allows $B$ to be quantum while that work instead allows $A$ to be quantum, so the bounds are not quite directly comparable). However, the discrepancy is fairly small, so the prospective gains may be limited. For $d_A \geq 4$, our bound provides tighter results.

The rest of the paper is structured as follows. In Sec.~\ref{sec:prelim} we lay out our definitions and notation. In Sec.~\ref{sec:mainproofs} we present our proofs of our main results. 
% In Sec.~\ref{sec:compare_shannon} we compare our findings for $\alpha=1$ to~\cite{liu2023lipschitz}. 
Some technical points are deferred to the appendices, though we highlight Appendix~\ref{app:quasient} in particular for results specifically relevant to QKD.

\section{Preliminaries and Notation}
\label{sec:prelim}

Throughout we work over finite dimensional Hilbert spaces (unless otherwise stated) labelled with capital letters $A,B,C \cdots$ with dimension $d_A,d_B,d_C \cdots$. Bounded operators acting on a Hilbert space $A$ are denoted by $B(A)$. Positive semi-definite operators are denoted by $X \geq 0$, and positive definite operators denoted by $X >0$.  

Quantum states on $A$ (positive semi-definite operators of unit trace) are denoted by $S(A)$. Bipartite states are states on the tensor product space $S(AB)$. For a state $\rho_{AB}$, its marginals are denoted with the convention that $\rho_A\coloneqq\Tr_B[\rho_{AB}]$. We suppose that each Hilbert space is inherently equipped with a privileged orthonormal basis called its classical basis; when we say that some state(s) are classical on the corresponding register, such statements should be understood as always being with respect to that basis.
Quantum channels are completely positive trace-preserving maps between the space of bounded operators. 

Inverses and negative powers, when written for potentially singular operators, are defined on their support. $\norm{\cdot}_{p}$ refers to the Schatten $p$-norms, and when subscripts are omitted, the inner products and norms are always the one defined with respect to the chosen basis. For a matrix valued function $X:[a,b] \subset \mathbb{R} \to \mathbb{C}^{n \times n}$, we denote its derivative by $\dot{X}$. 

Throughout this work, we use natural logarithms to define divergences, entropies, and similar such quantities. 

\begin{definition}\label{def:Hup}
Let $\alpha\in(\frac12,1)\cup(1,\infty)$. For a state $\rho$ and $\sigma\ge0$, the sandwiched R\'enyi divergence is
\begin{align}
\Dsand(\rho\|\sigma)\coloneqq\frac{1}{\alpha-1}\ln\Tra\Big[\Big(\sigma^{\frac{1-\alpha}{2\alpha}}\rho\,\sigma^{\frac{1-\alpha}{2\alpha}}\Big)^{\alpha}\Big],
\end{align}
with $\Dsand(\rho\|\sigma)\coloneqq+\infty$ if $\alpha>1$ and $\mathrm{supp}(\rho)\not\subseteq\mathrm{supp}(\sigma)$. For $\rho_{AB}\in S(AB)$, the (uparrow) sandwiched R\'enyi conditional entropy is
\begin{align}
\Hup(A|B)_\rho\coloneqq-\min_{\sigma_B\in S(B)}\Dsand\big(\rho_{AB}\,\big\|\,\id_A\otimes\sigma_B\big).
\end{align}
Similarly, for $\alpha\in(0,1)\cup(1,\infty)$, the Petz R\'enyi divergence is
\begin{align}
\overline{D}_\alpha(\rho\|\sigma)\coloneqq\frac{1}{\alpha-1}\ln\Tra\big[\rho^{\alpha}\sigma^{1-\alpha}\big],
\end{align}
with $\overline{D}_\alpha(\rho\|\sigma)\coloneqq+\infty$ if $\alpha>1$ and $\mathrm{supp}(\rho)\not\subseteq\mathrm{supp}(\sigma)$, and for $\rho_{AB}\in S(AB)$ the (downarrow) Petz R\'enyi conditional entropy is
\begin{align}
\overline{H}^{\downarrow}_\alpha(A|B)_\rho\coloneqq-\overline{D}_\alpha\big(\rho_{AB}\,\big\|\,\id_A\otimes\rho_B\big).
\end{align}
\end{definition}
\medskip
We introduce also the following notation to simplify certain computations later. 
\noindent For $\Sig>0$ and a path of states $\omega_s$ (parametrized by a real parameter $s$), we write
\begin{align}
\gamma\coloneqq\frac{1-\alpha}{2\alpha},\qquad X_s\coloneqq\Sig^{\gamma}\omega_s\Sig^{\gamma},\qquad Q_s\coloneqq\Tra\big[X_s^{\alpha}\big],
\end{align}
so that $\Dsand(\omega_s\|\Sig) = \frac{1}{\alpha-1}\ln Q_s$.

\noindent To establish the bound for the uparrow Sandwiched Renyi entropies, we require the following characterization of its optimizer.

\begin{lemma}[{\cite[Lemma~5]{HT16}}]\label{lem:opt}
    Let $\rho_{AB} \in S(AB)$ and $\tau_A\ge0$ such that $\tau_A\gg\rho_A$, and let $\alpha>\frac{1}{2}$. Then there exists a unique state $\sigmaopt\gg\rho_B$ such that
    \begin{align}
    \sigmaopt
    = \frac{\Tra_A\!\left[\Big(\big(\tau_A\otimes\sigmaopt\big)^{\frac{1-\alpha}{2\alpha}}\,\rho_{AB}\,\big(\tau_A\otimes\sigmaopt\big)^{\frac{1-\alpha}{2\alpha}}\Big)^{\alpha}\right]}
           {\Tra\!\left[\Big(\big(\tau_A\otimes\sigmaopt\big)^{\frac{1-\alpha}{2\alpha}}\,\rho_{AB}\,\big(\tau_A\otimes\sigmaopt\big)^{\frac{1-\alpha}{2\alpha}}\Big)^{\alpha}\right]},
    \label{eq:fixedpoint}
    \end{align}
    which satisfies
    \begin{align}
    \widetilde{I}_\alpha(\rho_{AB}\|\tau_A) \;=\; \Dsand\big(\rho_{AB}\,\big\|\,\tau_A\otimes\sigmaopt\big).
    \end{align}
    For $\alpha=1$ the fixed point of this map is given uniquely by $\Tilde{\sigma}^{*}_{B}(1)=\rho_B$.
\end{lemma}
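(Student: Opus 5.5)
The plan is to derive the fixed-point equation~\eqref{eq:fixedpoint} as the first-order optimality condition of the convex problem $\min_{\sigma_B\in S(B)}\Dsand(\rho_{AB}\|\tau_A\otimes\sigma_B)$, and to get existence and uniqueness from compactness and strict convexity. Write $\Sig=\tau_A\otimes\sigma_B$ and $Q(\sigma_B)=\Tra[(\Sig^{\gamma}\rho_{AB}\Sig^{\gamma})^{\alpha}]$. Since $\Dsand=\frac{1}{\alpha-1}\ln Q$, the problem is: minimize $Q$ when $\alpha>1$, and maximize $Q$ when $\alpha<1$.

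\textbf{Step 1: existence and support.} Existence of an optimizer follows from compactness of $S(B)$ together with lower semicontinuity of $\sigma_B\mapsto\Dsand(\rho_{AB}\|\tau_A\otimes\sigma_B)$. The support condition $\sigmaopt\gg\rho_B$ is handled separately in the two regimes.
\begin{itemize}
\item For $\alpha>1$ it is automatic. Otherwise $\mathrm{supp}(\rho_{AB})\not\subseteq\mathrm{supp}(\tau_A\otimes\sigma_B)$, because $\tau_A\gg\rho_A$, and the divergence is $+\infty$.
\item For $\alpha\in(\frac12,1)$, I will show that moving weight onto $\mathrm{supp}(\rho_B)$ strictly increases $Q$, so an optimizer cannot leave part of $\mathrm{supp}(\rho_B)$ uncovered. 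The mechanism is the infinite one-sided derivative of $t\mapsto t^{2\gamma}$ at $0$, where $2\gamma=\frac{1-\alpha}{\alpha}\in(0,1)$.
\end{itemize}
It therefore suffices to work with $\sigma_B>0$ on $\mathrm{supp}(\rho_B)$, and we may restrict $B$ to that support.

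\textbf{Step 2: stationarity gives the fixed point.} First rewrite $Q(\sigma_B)=\Tra[(\rho^{1/2}\Sig^{2\gamma}\rho^{1/2})^{\alpha}]$, which holds because $X^*X$ and $XX^*$ share their nonzero spectrum. Next, perturb along $\sigma_t=\sigma_B+t\Delta$ with $\Tra\Delta=0$. The derivative $\frac{d}{dt}Q$ equals $\alpha\Tra[(\rho^{1/2}\Sig^{2\gamma}\rho^{1/2})^{\alpha-1}\rho^{1/2}\,\frac{d}{dt}\Sig_t^{2\gamma}\,\rho^{1/2}]$. The difficulty is that $\frac{d}{dt}\Sig_t^{2\gamma}$ is a Fréchet derivative and does not factor when $\Delta$ fails to commute with $\sigma_B$.

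To get around this, I would parametrize perturbations as $\sigma_t\propto \sigma_B^{1/2}e^{tH}\sigma_B^{1/2}$, or equivalently exploit that the homogeneity $Q(c\sigma)=c^{2\gamma\alpha}Q(\sigma)$ fixes the Lagrange multiplier. Then use the integral (Daleckii--Krein) representation together with cyclicity of the trace to move all $\sigma$-factors into one expression. The resulting condition should read: $\Tra_A[(\Sig^{\gamma}\rho\Sig^{\gamma})^{\alpha}]$ is proportional to $\sigma_B$. Taking the trace identifies the proportionality constant as $Q$, which is exactly~\eqref{eq:fixedpoint}. Conversely, any full-support solution of~\eqref{eq:fixedpoint} is a stationary point. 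The identification $\widetilde{I}_\alpha(\rho_{AB}\|\tau_A)=\Dsand(\rho_{AB}\|\tau_A\otimes\sigmaopt)$ then follows once we know that stationary points are global optimizers.

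\textbf{Step 3: convexity and uniqueness.} Convexity comes from the Frank--Lieb results.
\begin{itemize}
\item For $\alpha>1$, the map $\sigma\mapsto\Tra[(\rho^{1/2}\sigma^{-|2\gamma|}\rho^{1/2})^{\alpha}]$ is convex.
\item For $\alpha\in[\frac12,1)$, the map $\sigma\mapsto\Tra[(\rho^{1/2}\sigma^{2\gamma}\rho^{1/2})^{\alpha}]$ is concave.
\end{itemize}
These statements transfer to $\sigma_B\mapsto\tau_A\otimes\sigma_B$ because that map is linear. Hence every stationary point is a global optimizer. For uniqueness I would use strict convexity (respectively strict concavity) restricted to $\mathrm{supp}(\rho_B)$. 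An alternative route: take two solutions $\sigma,\sigma'$ and show the midpoint would strictly improve $Q$ unless $\sigma=\sigma'$.

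\textbf{Case $\alpha=1$.} Here the argument is direct via the chain rule
\begin{align}
D(\rho_{AB}\|\tau_A\otimes\sigma_B)=D(\rho_{AB}\|\tau_A\otimes\rho_B)+D(\rho_B\|\sigma_B),
\end{align}
which shows the unique minimizer is $\rho_B$.

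\textbf{Main obstacle.} I expect the hardest parts to be the derivative computation in Step 2, because of the non-commutativity of $\Delta$ with $\sigma_B$, and the strictness needed for uniqueness in Step 3. If strict concavity proves delicate for $\alpha<1$, I would fall back on the equality conditions in the Frank--Lieb/Hölder argument.
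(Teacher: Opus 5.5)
The paper does not prove this lemma; it imports it verbatim from \cite[Lemma~5]{HT16}. Your route is a self-contained alternative, and it does go through: existence by compactness, \eqref{eq:fixedpoint} as the first-order condition, then global optimality and uniqueness from convexity ($\alpha>1$) or concavity ($\alpha<1$).

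\textbf{Step~2.} As written, this step is only a hope (``should read''). The reparametrization $\sigma_B^{1/2}e^{tH}\sigma_B^{1/2}$ is not what resolves the non-commutativity. The following argument does.

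Restrict to $\mathrm{supp}(\tau_A)\otimes\mathrm{supp}(\rho_B)$, so that $\Sig=\tau_A\otimes\sigma_B>0$. Let $f(x)=x^{2\gamma}$, $Z=\rho_{AB}^{1/2}\Sig^{\gamma}$ and $W=\rho_{AB}^{1/2}(ZZ^*)^{\alpha-1}\rho_{AB}^{1/2}$. Then $\Sig^{\gamma}W\Sig^{\gamma}=Z^*(ZZ^*)^{\alpha-1}Z=X^\alpha$.

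By self-adjointness of the Daleckii--Krein map, the derivative of $Q$ in direction $\tau_A\otimes\Delta$ is $\alpha\Tra\big[Df(\Sig)[W]\,(\tau_A\otimes\Delta)\big]$. In the product eigenbasis of $\Sig$, only blocks diagonal in $A$ contribute. The divided differences are homogeneous, $f^{[1]}(tx,ty)=t^{2\gamma-1}f^{[1]}(x,y)$, so this equals $\alpha\Tra\big[Df(\sigma_B)[w]\,\Delta\big]$ with $w\coloneqq\Tra_A[(\tau_A^{2\gamma}\otimes\id_B)W]$.

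Stationarity over traceless $\Delta$ is therefore $Df(\sigma_B)[w]=\lambda\,\id_B$. Since $\id_B$ is diagonal in the eigenbasis of $\sigma_B$ and $f^{[1]}$ never vanishes, this Schur-multiplier equation inverts entrywise to $w=\frac{\lambda}{2\gamma}\sigma_B^{1-2\gamma}$. Equivalently, $\Tra_A[X^\alpha]=\sigma_B^{\gamma}w\,\sigma_B^{\gamma}=\frac{\lambda}{2\gamma}\sigma_B$, and taking the trace gives $\frac{\lambda}{2\gamma}=Q$. Every step is an equivalence, so this also yields your converse.

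\textbf{Step~3.} You do not need the Frank--Lieb joint results, and strictness is then easy. For fixed $\rho_{AB}$, $Q$ is the composition of two maps:
\begin{itemize}
\item $Y\mapsto\Tra[(\rho_{AB}^{1/2}Y\rho_{AB}^{1/2})^\alpha]$, which is monotone, convex for $\alpha>1$ and concave for $\alpha<1$;
\item $\sigma_B\mapsto\tau_A^{2\gamma}\otimes\sigma_B^{2\gamma}$, which is operator convex (resp.\ concave).
\end{itemize}
For $\sigma_1\neq\sigma_2$, the midpoint gap $D$ of $x^{2\gamma}$ is a nonzero positive semidefinite operator. To see this, write $x^{2\gamma}$ as an integral over resolvents, whose midpoint gap is $\frac12(K^{-1}-L^{-1})(K^{-1}+L^{-1})^{-1}(K^{-1}-L^{-1})$. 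Hence $\Tra[\rho_{AB}(\tau_A^{2\gamma}\otimes D)]>0$, since $\rho_B$ and $\tau_A$ are invertible on the restricted spaces. Strict monotonicity of $M\mapsto\Tra[M^\alpha]$ on $M\le M'$, $M\neq M'$ then gives strict convexity (resp.\ concavity). This uses $2\gamma\neq1$, i.e.\ $\alpha\neq\frac12$, consistent with the hypothesis $\alpha>\frac12$.

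\textbf{Uniqueness among all solutions.} The lemma claims uniqueness among \emph{all} $\sigma_B\gg\rho_B$ solving~\eqref{eq:fixedpoint}, not only those supported on $\mathrm{supp}(\rho_B)$. A rank count closes this. $X$ has range in $\mathcal{H}_A\otimes\sigma_B^{\gamma}(\mathrm{supp}\,\rho_B)$, so any solution has rank at most $\mathrm{rank}(\rho_B)$. Hence its support is exactly $\mathrm{supp}(\rho_B)$, which places it in your restricted problem.

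\textbf{Comparison.} Your argument makes the paper self-contained. It also shows directly that~\eqref{eq:fixedpoint} is exactly the stationarity condition, which the paper later exploits (e.g.\ $\Tra_A[\xi]=\sigma$ in Proposition~\ref{prop:Vbounds}). The citation buys brevity, and also supplies the continuity of the optimizer in $\alpha$ used in Proposition~\ref{prop:Vlimit}, which your argument does not cover.
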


\noindent Here $\tau\gg\rho$ means $\mathrm{supp}(\tau)\supseteq\mathrm{supp}(\rho)$, and $\widetilde{I}_\alpha(\rho_{AB}\|\tau_A)\coloneqq\min_{\sigma_B\in S(B)}\Dsand(\rho_{AB}\|\tau_A\otimes\sigma_B)$, where for $\alpha=1$ we read $\Dsand$ as the relative entropy. We only use the case $\tau_A=\id_A$, for which $\widetilde{I}_\alpha(\rho_{AB}\|\id_A)=-\Hup(A|B)_\rho$, so that
\begin{align}
\Hup(A|B)_\rho \;=\; -\Dsand\big(\rho_{AB}\,\big\|\,\id_A\otimes\sigmaopt\big), \label{eq:Hupopt}
\end{align}
and~\eqref{eq:fixedpoint} holds with $\tau_A=\id_A$.

\begin{definition}\label{def:angle}
For $\rho,\sigma\in S(A)$, let $F(\rho,\sigma)\coloneqq\big\|\sqrt{\rho}\sqrt{\sigma}\big\|_1$ denote the fidelity, and define
\begin{align}
\Lambda(\rho,\sigma)\coloneqq\arccos F(\rho,\sigma)\in\big[0,\tfrac{\pi}{2}\big].
\end{align}
Note that $\Lambda$ defines a metric on quantum states~\cite{NC10}; this fact is later used in proving our bounds when the $B$ system is of countably infinite dimension (after a suitable generalization to that scenario).

\end{definition}
\section{Entropic continuity bound}
\label{sec:mainproofs}
We first describe the overall strategy. For simplicity, we restrict the discussion to the von Neumann conditional entropy $H(A|B)$.

Let $\rho_{AB},\sigma_{AB}\in S(AB)$, and let $s\mapsto\omega_s$, $s\in[0,\theta]$, be a smooth path of full-rank states with $\omega_0=\rho_{AB}$ and $\omega_\theta=\sigma_{AB}$. By the fundamental theorem of calculus,
\begin{align}
\big|H(A|B)_\rho-H(A|B)_\sigma\big|\;\le\;\int_0^\theta\Big|\frac{d}{ds}H(A|B)_{\omega_s}\Big|\,ds,
\qquad
\frac{d}{ds}H(A|B)_{\omega_s}=-\Tra\big[\dot\omega_s L_s\big],
\end{align}
where $L_s\coloneqq\ln\omega_s-\ln(\id_A\otimes\omega_{B,s})$. Writing $\omega_s$ as the marginal of a smooth path of unit vectors $|\Psi_s\rangle$, the Cauchy--Schwarz inequality bounds this derivative by $2\|\dot\Psi_s\|\sqrt{V(A|B)_{\omega_s}}$.

A dimension-dependent bound on $V(A|B)$ therefore yields a continuity bound in terms of the length $\int_0^\theta\|\dot\Psi_s\|\,ds$ of the path. To make this length as small as possible, it is natural, in the spirit of~\cite{liu2023lipschitz}, to take the reduced great-circle path between Uhlmann purifications of $\rho_{AB}$ and $\sigma_{AB}$, whose length is $\Lambda(\rho_{AB},\sigma_{AB})$.

Dimension-dependent bounds on the entropy variance were previously obtained in~\cite{rw15} for the unconditional entropy, and in~\cite{FD18} for the conditional entropy. Our strategy is to generalise this approach to $\Hup$. We handle the optimisation over $\sigma_B$ in its definition with Danskin's theorem, as done in~\cite{cheng2026sharp}, so that the entropy variance is replaced by the $\alpha$-variance $V_\alpha$ at the optimizer (Definition~\ref{def:Valpha}), after which it remains to bound the variance term.

\begin{restatable}{lemma}{pathlemma}\label{lem:path}
Let $\rho_{AB},\sigma_{AB},\pi_{AB}\in S(AB)$ and $\theta\coloneqq\Lambda(\rho_{AB},\sigma_{AB})>0$. Let $\pi_{AB}>0$ and $\varepsilon\in(0,1)$. Then there exist a smooth path $s\mapsto\omega_s$ in $S(AB)$, $s\in[0,\theta]$, with $\omega_0=\rho_{AB}$ and $\omega_\theta=\sigma_{AB}$ (the reduced great-circle path between Uhlmann purifications of $\rho_{AB}$ and $\sigma_{AB}$), a purifying system $C$, and a smooth path of unit vectors $s\mapsto|\Psi_s^\varepsilon\rangle_{ABC}$ such that
\begin{align}
\Tra_C\big[|\Psi_s^\varepsilon\rangle\!\langle\Psi_s^\varepsilon|\big] &= (1-\varepsilon)\,\omega_s + \varepsilon\,\pi_{AB}, \\
\|\dot\Psi_s^\varepsilon\| &= \sqrt{1-\varepsilon}, \label{eq:pathspeed}\\
\langle\Psi_s^\varepsilon|\dot\Psi_s^\varepsilon\rangle &= 0.
\end{align}
\end{restatable}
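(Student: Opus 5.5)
We will build the path explicitly from Uhlmann purifications and then mix in $\pi_{AB}$ using an orthogonal ancillary branch.

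\textbf{Step 1: the unperturbed path.} Take a purifying system $C_0$ with $d_{C_0}\ge d_Ad_B$. By Uhlmann's theorem there are purifications $|\phi_0\rangle,|\phi_1\rangle\in ABC_0$ of $\rho_{AB},\sigma_{AB}$ with $\langle\phi_0|\phi_1\rangle=F(\rho_{AB},\sigma_{AB})=\cos\theta$. Multiplying $|\phi_1\rangle$ by a phase makes the overlap real and nonnegative. Let
\begin{align}
|\xi\rangle\coloneqq\frac{|\phi_1\rangle-\cos\theta\,|\phi_0\rangle}{\sin\theta},
\end{align}
which is a unit vector orthogonal to $|\phi_0\rangle$; here $\theta>0$ is used. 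Define $|\Phi_s\rangle\coloneqq\cos s\,|\phi_0\rangle+\sin s\,|\xi\rangle$ for $s\in[0,\theta]$. This is smooth, satisfies $\|\Phi_s\|=1$, $|\Phi_0\rangle=|\phi_0\rangle$ and $|\Phi_\theta\rangle=|\phi_1\rangle$. Its derivative is $|\dot\Phi_s\rangle=-\sin s\,|\phi_0\rangle+\cos s\,|\xi\rangle$, so $\|\dot\Phi_s\|=1$ and $\langle\Phi_s|\dot\Phi_s\rangle=0$, since $\phi_0\perp\xi$ and both are real-coefficient combinations. Set $\omega_s\coloneqq\Tra_{C_0}|\Phi_s\rangle\!\langle\Phi_s|$. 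This is a smooth path of states with the required endpoints, and it is the reduced great-circle path.

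\textbf{Step 2: mixing in $\pi_{AB}$.} Let $|\chi\rangle_{ABC_0}$ be any fixed purification of $\pi_{AB}$, which fits because $d_{C_0}\ge d_Ad_B$. Adjoin a qubit $R$ with orthonormal basis $\{|0\rangle,|1\rangle\}$ and set $C\coloneqq C_0R$. Define
\begin{align}
|\Psi_s^\varepsilon\rangle\coloneqq\sqrt{1-\varepsilon}\,|\Phi_s\rangle|0\rangle_R+\sqrt{\varepsilon}\,|\chi\rangle|1\rangle_R .
\end{align}
The two branches are orthogonal on $R$, so tracing out $C$ removes the cross terms. This gives $\Tra_C|\Psi^\varepsilon_s\rangle\!\langle\Psi^\varepsilon_s|=(1-\varepsilon)\omega_s+\varepsilon\pi_{AB}$, and the vector has unit norm. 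The second branch is constant in $s$, so $|\dot\Psi_s^\varepsilon\rangle=\sqrt{1-\varepsilon}\,|\dot\Phi_s\rangle|0\rangle$. Hence $\|\dot\Psi^\varepsilon_s\|=\sqrt{1-\varepsilon}$, and $\langle\Psi^\varepsilon_s|\dot\Psi^\varepsilon_s\rangle=(1-\varepsilon)\langle\Phi_s|\dot\Phi_s\rangle=0$. Smoothness is immediate.

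\textbf{Main obstacle.} There is little real difficulty here. The delicate points are bookkeeping. First, the Uhlmann overlap must be made real so that $\langle\Phi_s|\dot\Phi_s\rangle=0$ holds exactly and not only its real part. Second, $C_0$ must be large enough to host purifications of all three states simultaneously. Full rank of $\pi_{AB}$ is not needed for this lemma. It only guarantees that the perturbed states are full rank, which the later derivative computations require.
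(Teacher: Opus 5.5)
Your proof is correct and follows the paper's argument essentially step for step. It uses Uhlmann purifications with the overlap phase-fixed to the real value $\cos\theta$, the Gram--Schmidt great circle $\cos s\,|\phi_0\rangle+\sin s\,|\xi\rangle$, and an orthogonal qubit flag branch carrying a fixed purification of $\pi_{AB}$; you also spell out details the paper leaves implicit, namely the requirement $d_{C_0}\ge d_Ad_B$ for purifying the full-rank $\pi_{AB}$ and the direct checks of $\|\dot\Psi^\varepsilon_s\|=\sqrt{1-\varepsilon}$ and $\langle\Psi^\varepsilon_s|\dot\Psi^\varepsilon_s\rangle=0$.
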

\noindent The proof is given in Appendix~\ref{app:proofs}.

To apply the same argument to classical-quantum states, we need to upper-bound the (R\'enyi) entropy variance at \textit{some} point of the path from $\rho_{AB}$ to $\sigma_{AB}$. Since that point is not known in advance, the bound must hold uniformly along the path. If the state is classical on $A$, Proposition~\ref{prop:Vbounds} gives a strictly better bound on the variance (for $d_A\ge2$), so we obtain a strictly better continuity bound by applying the mean-value theorem along a path on which every state is classical on $A$. We therefore require the lemma below, which takes the path of Lemma~\ref{lem:path} and pinches it on $A$ in the classical basis common to $\rho_{AB}$ and $\sigma_{AB}$, while keeping a purification with the same properties.
\begin{restatable}{lemma}{pathlemmacq}\label{lem:pathcq}
Let
% $\{|a\rangle\}_a$  be the classical basis of $\cH_A$, and let 
$\rho_{AB},\sigma_{AB},\pi_{AB}\in S(AB)$ be classical on $A$. Let $\theta\coloneqq\Lambda(\rho_{AB},\sigma_{AB})>0$, $\pi_{AB}>0$ and $\varepsilon\in(0,1)$. Then there exist a smooth path $s\mapsto\widetilde\omega_s$ in $S(AB)$, $s\in[0,\theta]$, with $\widetilde\omega_0=\rho_{AB}$, $\widetilde\omega_\theta=\sigma_{AB}$ and $\widetilde\omega_s$ classical on $A$ 
% with respect to $\{|a\rangle\}_a$ 
for every $s$, a purifying system $C$, and a smooth path of unit vectors $s\mapsto|\widetilde\Psi_s^\varepsilon\rangle_{ABC}$ such that
\begin{align}
\Tra_C\big[|\widetilde\Psi_s^\varepsilon\rangle\!\langle\widetilde\Psi_s^\varepsilon|\big] &= (1-\varepsilon)\,\widetilde\omega_s + \varepsilon\,\pi_{AB}, \\
\|\dot{\widetilde\Psi}{}_s^\varepsilon\| &= \sqrt{1-\varepsilon}, \\
\langle\widetilde\Psi_s^\varepsilon|\dot{\widetilde\Psi}{}_s^\varepsilon\rangle &= 0.
\end{align}
\end{restatable}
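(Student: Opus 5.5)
We would build the path of Lemma~\ref{lem:path} and then pinch it on $A$ using a coherent isometry, so that the purification keeps its speed and orthogonality. Let $\{|a\rangle\}_a$ be the classical basis of $A$, and let $\mathcal{P}_A(X)=\sum_a (|a\rangle\!\langle a|\otimes\id)X(|a\rangle\!\langle a|\otimes\id)$ denote the pinching. Apply Lemma~\ref{lem:path} to $\rho_{AB},\sigma_{AB},\pi_{AB}$. This gives $\omega_s$, a purifying system $C$, and $|\Psi^\varepsilon_s\rangle_{ABC}$ with the stated properties. Define $\widetilde\omega_s\coloneqq\mathcal{P}_A(\omega_s)$.

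This $\widetilde\omega_s$ is smooth, is a state, and is classical on $A$ for every $s$. Since $\rho_{AB}$ and $\sigma_{AB}$ are classical on $A$, they are fixed by the pinching, so $\widetilde\omega_0=\rho_{AB}$ and $\widetilde\omega_\theta=\sigma_{AB}$. We do not need $\widetilde\omega_s$ to be a geodesic. Only the endpoints and the speed of the purification matter, and the lemma does not require anything further.

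For the purification, introduce a register $A'$ with $d_{A'}=d_A$ and the isometry $V\colon A\to AA'$, $V|a\rangle=|a\rangle_A|a\rangle_{A'}$. Set $|\widetilde\Psi^\varepsilon_s\rangle\coloneqq(V\otimes\id_{BC})|\Psi^\varepsilon_s\rangle$, with new purifying system $\widetilde C\coloneqq A'C$.

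Tracing out $A'$ after applying $V$ implements exactly the pinching $\mathcal{P}_A$ on $A$. Hence
\[
\Tra_{A'C}\big[|\widetilde\Psi^\varepsilon_s\rangle\!\langle\widetilde\Psi^\varepsilon_s|\big]=\mathcal{P}_A\big((1-\varepsilon)\omega_s+\varepsilon\pi_{AB}\big)=(1-\varepsilon)\widetilde\omega_s+\varepsilon\,\mathcal{P}_A(\pi_{AB}).
\]

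Because $V$ is a constant isometry, $\dot{\widetilde\Psi}{}^\varepsilon_s=(V\otimes\id)\dot\Psi^\varepsilon_s$. Therefore $\|\dot{\widetilde\Psi}{}^\varepsilon_s\|=\|\dot\Psi^\varepsilon_s\|=\sqrt{1-\varepsilon}$. Also $\langle\widetilde\Psi^\varepsilon_s|\dot{\widetilde\Psi}{}^\varepsilon_s\rangle=\langle\Psi^\varepsilon_s|V^\dagger V|\dot\Psi^\varepsilon_s\rangle=0$. Smoothness is inherited from $|\Psi^\varepsilon_s\rangle$.

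The main obstacle is the mixing term: we obtain $\mathcal{P}_A(\pi_{AB})$ rather than $\pi_{AB}$. If $\pi_{AB}$ is classical on $A$, as is natural in the cq application, then $\mathcal{P}_A(\pi_{AB})=\pi_{AB}$ and we are done. For general full-rank $\pi_{AB}$, we would try a different construction: take the unpinched path from Lemma~\ref{lem:path} with a vanishing mixture, pinch that, and add $\varepsilon\pi_{AB}$ separately by a second purification. Concretely, set
\[
|\widetilde\Psi\rangle=\sqrt{1-\varepsilon}\,(V\otimes\id)|\Psi_s\rangle|0\rangle_R+\sqrt{\varepsilon}\,|\phi_\pi\rangle|1\rangle_R,
\]
where $|\phi_\pi\rangle$ is a fixed purification of $\pi_{AB}$ on $ABA'C$ and $R$ is a flag register. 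The flag makes the cross terms vanish on tracing out $A'CR$, so the marginal is $(1-\varepsilon)\widetilde\omega_s+\varepsilon\pi_{AB}$. Here $|\Psi_s\rangle$ is a unit-speed great-circle purification, which is what Lemma~\ref{lem:path} gives in the $\varepsilon$-independent form. The derivative is then $\sqrt{1-\varepsilon}\,(V\otimes\id)\dot\Psi_s|0\rangle$, with norm $\sqrt{1-\varepsilon}$ and orthogonal to $|\widetilde\Psi\rangle$, since $\langle\Psi_s|\dot\Psi_s\rangle=0$ and the $|1\rangle_R$ component is constant. This flagged construction handles every full-rank $\pi_{AB}$, and we would present it as the proof.

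However, the resulting marginal $(1-\varepsilon)\widetilde\omega_s+\varepsilon\pi_{AB}$ is then not classical on $A$ unless $\pi_{AB}$ is. For the later application one would presumably choose $\pi_{AB}=\id/d_{AB}$, which is classical on $A$, so the pinched-plus-mixture marginal remains classical.
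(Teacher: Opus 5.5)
Your first construction, which pinches $\omega_s$ and applies the constant isometry $V|a\rangle=|a\rangle_A|a\rangle_{A'}$ to the whole purification $|\Psi^\varepsilon_s\rangle$ from Lemma~\ref{lem:path}, is exactly the paper's proof. It already finishes the argument: the statement assumes $\pi_{AB}$ is classical on $A$, so $\mathcal{P}_A(\pi_{AB})=\pi_{AB}$ and the ``obstacle'' you raise does not arise. Your flagged variant, which applies $V$ only to the great-circle branch, is also correct and would drop the classicality assumption on $\pi_{AB}$, but that generality is unnecessary, since the application in Theorem~\ref{thm:continuity} needs $\pi_{AB}$ classical on $A$ anyway so that $\omega^\varepsilon_s$ is classical on $A$.
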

\noindent The proof is given in Appendix~\ref{app:proofs}.

\begin{lemma}\label{lem:Dsmooth}
Let $s\mapsto\omega_s$ be a smooth path of full rank states. Then, for every fixed $\Sig>0$, the function $s\mapsto\Dsand(\omega_s\|\Sig)$ is differentiable with
\begin{align}
\frac{d}{ds}\Dsand(\omega_s\|\Sig) = \frac{\alpha}{\alpha-1}\,\frac{\Tra\big[X_s^{\alpha-1}\Sig^{\gamma}\dot\omega_s\Sig^{\gamma}\big]}{Q_s}. \label{eq:Dsderiv}
\end{align}
Moreover, both $\Dsand(\omega_s\|\Sig)$ and the right-hand side of~\eqref{eq:Dsderiv} are jointly continuous in $(s,\Sig)$ on the set of $\Sig>0$.
\end{lemma}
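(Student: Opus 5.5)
The argument is a chain-rule computation for the smooth matrix function $s\mapsto X_s=\Sig^\gamma\omega_s\Sig^\gamma$, followed by a continuity argument. Since $\Sig>0$ and $\omega_s>0$, the operator $X_s$ is positive definite for every $s$. It depends smoothly on $s$, with $\dot X_s=\Sig^\gamma\dot\omega_s\Sig^\gamma$.

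\textbf{Derivative of $Q_s$.} The map $X\mapsto X^\alpha$ is real-analytic on the open cone of positive definite matrices. For instance, one can write it via the holomorphic functional calculus as $X^\alpha=\frac{1}{2\pi i}\oint_\Gamma z^\alpha (z-X)^{-1}\,dz$, with $\Gamma$ a contour in the right half-plane enclosing the spectrum. Hence $s\mapsto Q_s=\Tra[X_s^\alpha]$ is differentiable. I will use the standard trace identity $\frac{d}{ds}\Tra[f(X_s)]=\Tra[f'(X_s)\dot X_s]$. It follows from the contour formula by differentiating the resolvent, $\frac{d}{ds}(z-X_s)^{-1}=(z-X_s)^{-1}\dot X_s(z-X_s)^{-1}$, and then using cyclicity of the trace, which gives $\Tra[(z-X_s)^{-2}\dot X_s]$. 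Integrating against $z^\alpha$ over $\Gamma$ then recovers $f'(X_s)$. Alternatively, one can verify the identity for polynomials and pass to the limit via uniform polynomial approximation of $f$ on a compact interval inside $(0,\infty)$ containing the spectra. Either way we get
\begin{align}
\dot Q_s=\alpha\,\Tra\big[X_s^{\alpha-1}\Sig^\gamma\dot\omega_s\Sig^\gamma\big].
\end{align}

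\textbf{Derivative of $\Dsand$.} Since $Q_s>0$ (as $X_s>0$), the function $\ln Q_s$ is differentiable with derivative $\dot Q_s/Q_s$. Multiplying by $\frac{1}{\alpha-1}$ gives exactly~\eqref{eq:Dsderiv}.

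\textbf{Joint continuity.} The maps $(s,\Sig)\mapsto\Sig^\gamma$ and $(s,\Sig)\mapsto\omega_s$ are continuous on $[0,\theta]\times\{\Sig>0\}$, because continuous functional calculus is continuous on positive definite matrices. Therefore $X_s$ is jointly continuous and stays positive definite. Applying the continuous functional calculus again, $X_s^\alpha$ and $X_s^{\alpha-1}$ are jointly continuous. Traces and products of continuous matrix functions are continuous, $\dot\omega_s$ is continuous in $s$, and $Q_s>0$. So both $\frac{1}{\alpha-1}\ln Q_s$ and the right-hand side of~\eqref{eq:Dsderiv} are jointly continuous.

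\textbf{Main obstacle.} The only nonroutine point is justifying the trace derivative formula for a non-integer power when $X_s$ and $\dot X_s$ do not commute. This is handled by the resolvent/cyclicity argument above. Positive definiteness, guaranteed by full rank of $\omega_s$ and $\Sig>0$, ensures that no support issues arise.
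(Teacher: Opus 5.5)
Your proposal is correct and follows the paper's proof essentially step for step: you differentiate $\frac{1}{\alpha-1}\ln Q_s$, use $\frac{d}{ds}\Tra[X_s^\alpha]=\alpha\Tra[X_s^{\alpha-1}\dot X_s]$ with $\dot X_s=\Sig^\gamma\dot\omega_s\Sig^\gamma$, and obtain joint continuity from continuity of the functional calculus on positive definite operators. The only difference is that you justify the trace-derivative identity via the resolvent/contour formula, which the paper simply asserts; your polynomial-approximation alternative would need approximation of $f'$ as well (i.e.\ in $C^1$), not merely uniform approximation of $f$, but the contour argument already suffices.
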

\begin{proof}
We have
\begin{align}
\frac{d}{ds}\Dsand(\omega_s\|\Sig) &= \frac{1}{\alpha-1}\frac{d}{ds}\ln(Q_s) \notag\\
&= \frac{1}{\alpha-1}\Big(\frac{\dot Q_s}{Q_s}\Big).
\end{align}
Since $s\mapsto X_s$ is smooth with $X_s>0$, we have
\begin{align}
\frac{d}{ds}\Tra\big[X_s^\alpha\big] = \alpha\,\Tra\big[X_s^{\alpha-1}\dot X_s\big], \label{eq:tracederiv}
\end{align}
 and $\dot X_s=\Sig^{\gamma}\dot\omega_s\Sig^{\gamma}$, so
\begin{align}
\dot Q_s = \alpha\,\Tra\big[X_s^{\alpha-1}\Sig^{\gamma}\dot\omega_s\Sig^{\gamma}\big],
\end{align}
which gives~\eqref{eq:Dsderiv}. For the continuity, note that $\Sig\mapsto\Sig^{\gamma}$, $X\mapsto X^{\alpha-1}$ and $X\mapsto X^{\alpha}$ are continuous on positive definite operators by the continuous functional calculus. Since $X_s=\Sig^{\gamma}\omega_s\Sig^{\gamma}$ is positive definite and continuous in $(s,\Sig)$, so are $Q_s>0$ and $X_s^{\alpha-1}$, and $\dot\omega_s$ is continuous in $s$.
\end{proof}

\begin{definition}[$\alpha$-variance]\label{def:Valpha}
Let $\alpha\in(\frac12,1)\cup(1,\infty)$, $\gamma=\frac{1-\alpha}{2\alpha}$ as above, and $q\coloneqq\frac{\alpha-1}{\alpha}$. For a state $\omega$ and $\Sig>0$, let
\begin{align}
X(\omega\|\Sig) \coloneqq \Sig^{\gamma}\omega\Sig^{\gamma}, \qquad
Q(\omega\|\Sig) \coloneqq \Tra\big[X(\omega\|\Sig)^{\alpha}\big], \qquad
Y_\alpha(\omega\|\Sig) \coloneqq \frac{\Sig^{\gamma}X(\omega\|\Sig)^{\alpha-1}\Sig^{\gamma}}{Q(\omega\|\Sig)}.
\end{align}
The $\alpha$-variance of $\omega$ relative to $\Sig$ is
\begin{align}
V_\alpha(\omega\|\Sig) \coloneqq \frac{\Tra\big[\omega Y_\alpha(\omega\|\Sig)^2\big]-1}{q^2}. \label{eq:Valpha}
\end{align}
\end{definition}
\noindent In this notation, $X_s=X(\omega_s\|\Sig)$ and $Q_s=Q(\omega_s\|\Sig)$, and by cyclicity of the trace~\eqref{eq:Dsderiv} reads
\begin{align}
\frac{d}{ds}\Dsand(\omega_s\|\Sig) = \frac{\alpha}{\alpha-1}\Tra\big[\dot\omega_s Y_\alpha(\omega_s\|\Sig)\big]. \label{eq:DsderivR}
\end{align}
\noindent $V_\alpha$ is a variance-like quantity: $Y_\alpha(\omega\|\Sig)$ is Hermitian and $\Tra[\omega Y_\alpha(\omega\|\Sig)] = \Tra[X^{\alpha}]/Q = 1$, so
\begin{align}
V_\alpha(\omega\|\Sig) = \Tra\Big[\omega\Big(\frac{Y_\alpha}{q}\Big)^2\Big] - \Tra\Big[\omega\,\frac{Y_\alpha}{q}\Big]^2 = \frac{\Tra\big[\omega(Y_\alpha-\id)^2\big]}{q^2} \;\ge\; 0, \label{eq:Vvariance}
\end{align}
i.e.\ $V_\alpha(\omega\|\Sig)$ is the variance of the observable $Y_\alpha/q$ in the state $\omega$. In the limit $\alpha\to1$ it recovers the usual relative entropy variance. For a state $\omega$ and $\Sig>0$, write (with $\ln$ the natural logarithm)
\begin{align}
D(\omega\|\Sig)\coloneqq\Tra\big[\omega(\ln\omega-\ln\Sig)\big], \qquad
V(\omega\|\Sig)\coloneqq\Tra\big[\omega(\ln\omega-\ln\Sig)^2\big]-D(\omega\|\Sig)^2,
\end{align}
and $V(A|B)_\rho\coloneqq V(\rho_{AB}\|\id_A\otimes\rho_B)$.

\begin{proposition}\label{prop:Vlimit}
Let $\omega>0$ be a state.
\begin{enumerate}[label=(\roman*)]
\item For every fixed $\Sig>0$, $\displaystyle\lim_{\alpha\to1}V_\alpha(\omega\|\Sig) = V(\omega\|\Sig)$.
\item For $\rho_{AB}>0$ and $\Sig_\alpha\coloneqq\id_A\otimes\sigmaopt$, with $\sigmaopt$ the optimizer of Lemma~\ref{lem:opt}, $\displaystyle\lim_{\alpha\to1}V_\alpha(\rho_{AB}\|\Sig_\alpha) = V(A|B)_\rho$.
\end{enumerate}
\end{proposition}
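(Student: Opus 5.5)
Part (i) is a direct expansion of $Y_\alpha$ around $\alpha=1$, and part (ii) reduces to (i) once we know $\sigmaopt\to\rho_B$ and that the expansion is uniform in $\Sig$.

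\textbf{Part (i).} Fix $\omega>0$ and $\Sig>0$, and set $\alpha=1+\delta$. Then $\gamma=-\delta/(2(1+\delta))$ and $q=\delta/(1+\delta)$, so both are $O(\delta)$.

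First we expand $X=\Sig^\gamma\omega\Sig^\gamma$ to first order in $\delta$:
\begin{align}
X=\omega+\gamma(\ln\Sig\,\omega+\omega\ln\Sig)+O(\delta^2).
\end{align}
Next we expand $X^{\alpha-1}=\exp(\delta\ln X)=\id+\delta\ln X+O(\delta^2)$, where $\ln X=\ln\omega+O(\delta)$. The $O(\delta)$ correction to $\ln X$ enters only at order $\delta^2$, because it is multiplied by $\delta$. This gives
\begin{align}
X^{\alpha-1}=\id+\delta\ln\omega+O(\delta^2).
\end{align}
Combining with $\Sig^\gamma=\id-\tfrac{\delta}{2}\ln\Sig+O(\delta^2)$ yields
\begin{align}
\Sig^\gamma X^{\alpha-1}\Sig^\gamma=\id+\delta(\ln\omega-\ln\Sig)+O(\delta^2).
\end{align}
The normalization is $Q=\Tra[X^\alpha]=1+\delta D(\omega\|\Sig)+O(\delta^2)$, since $Q$ is smooth in $\alpha$ with derivative at $\alpha=1$ equal to $\Tra[\omega\ln\omega]-\Tra[\omega\ln\Sig]$. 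Dividing, we obtain
\begin{align}
Y_\alpha=\id+\delta\big(\ln\omega-\ln\Sig-D(\omega\|\Sig)\big)+O(\delta^2).
\end{align}

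Now use the variance form~\eqref{eq:Vvariance}, $V_\alpha=\Tra[\omega(Y_\alpha-\id)^2]/q^2$. Since $q/\delta\to1$, we get
\begin{align}
V_\alpha\to\Tra\big[\omega(\ln\omega-\ln\Sig-D)^2\big]=V(\omega\|\Sig).
\end{align}
This is rigorous because every operator function involved is analytic in $(\delta,\Sig)$ on positive definite operators, so the remainder is genuinely $O(\delta^2)$.

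\textbf{Part (ii).} The main step is to make the remainder in (i) uniform in $\Sig$ over a compact set of positive definite operators. Write $g(\delta,\Sig)\coloneqq(Y_\alpha-\id)/\delta$. This function extends to $\delta=0$ by the expression above, and the extension is jointly continuous in $(\delta,\Sig)$ for $\Sig>0$: it is a difference quotient of a jointly smooth function vanishing at $\delta=0$. Hence
\begin{align}
V_\alpha(\rho\|\Sig)=(\delta/q)^2\,\Tra\big[\rho\, g(\delta,\Sig)^2\big]
\end{align}
is jointly continuous, and its value at $\delta=0$ is $V(\rho\|\Sig)$.

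It then suffices to show $\sigmaopt\to\rho_B$ as $\alpha\to1$. Since $\rho_{AB}>0$ implies $\rho_B>0$, the limit $\Sig_\alpha\to\id_A\otimes\rho_B$ stays positive definite, and joint continuity gives the claim.

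To prove $\sigmaopt\to\rho_B$, we use compactness of $S(B)$. Take any limit point $\bar\sigma$ of $\sigmaopt$ along a sequence $\alpha_n\to1$. We then pass to the limit in the fixed-point equation~\eqref{eq:fixedpoint}, or alternatively use that $\sigmaopt$ minimizes $\Dsand(\rho\|\id\otimes\sigma)$.

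The minimizer route goes as follows. The function $\Dsand(\rho\|\id\otimes\sigma)$ converges to $D(\rho\|\id\otimes\sigma)$ uniformly on compact sets of $\sigma>0$. Its minimizer is unique, namely $\rho_B$ by Lemma~\ref{lem:opt}. Standard argmin-continuity then gives $\bar\sigma=\rho_B$.

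For this argument we need the optimizers to stay bounded away from the boundary of $S(B)$. I expect this to be the main obstacle. The first approach I would try is the fixed-point equation itself, which shows that $\sigmaopt$ is a normalized partial trace of $X^\alpha$. Since $\rho_{AB}>0$, we have $\rho_{AB}\ge\lambda_{\min}\id$. This should give a lower bound of the form $\sigmaopt\ge c\,(\sigmaopt)^{\text{power}}$, which in turn yields a uniform lower bound on $\lambda_{\min}(\sigmaopt)$ for $\alpha$ near $1$.

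If that fails, the fallback is lower semicontinuity of the relative entropy: $D(\rho\|\id\otimes\sigma)=+\infty$ on singular $\sigma$ when $\rho>0$. This should force the limit points to be interior, after which the minimizer argument applies.
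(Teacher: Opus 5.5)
Your proof is correct and follows the paper's route. Part (i) is the same first-order expansion of $Y_\alpha$; you expand in $\alpha-1$ rather than $q$, which is immaterial. Part (ii) uses the same reduction: uniformity of that expansion over compact sets of positive definite $\Sig$, together with $\sigmaopt\to\rho_B$.

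The one difference is that the paper simply cites \cite[Appendix~D]{HT16} for continuity of $\alpha\mapsto\sigmaopt$ at $\alpha=1$, whereas you sketch a self-contained argmin-continuity proof. Your primary way of keeping the optimizers away from the boundary does work: the fixed-point equation bounds $\lambda_{\min}(\sigmaopt)$ below by a constant of order $d_A\lambda_{\min}(\rho_{AB})$, using operator monotonicity of $x^\alpha$ for $\alpha<1$ and Jensen on eigenvectors of $\sigmaopt$ for $\alpha>1$. Your lower-semicontinuity fallback, however, only covers $\alpha\downarrow1$ as stated, since $\Dsand\le D$ when $\alpha<1$.
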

\begin{proof}
(i) Write $L\coloneqq\ln\omega-\ln\Sig$ and $D\coloneqq D(\omega\|\Sig)=\Tra[\omega L]$. As $\alpha\to1$ we have $q\to0$, $\gamma=-q/2$ and $\alpha-1=q\alpha=q+O(q^2)$. Since $\omega,\Sig>0$,
\begin{align}
\Sig^{\gamma} = \id-\tfrac{q}{2}\ln\Sig+O(q^2), \qquad
X^{\alpha-1} = \id+q\ln\omega+O(q^2),
\end{align}
where for the second expansion we used $X=\Sig^\gamma\omega\Sig^\gamma=\omega+O(q)$. Hence
\begin{align}
\Sig^{\gamma}X^{\alpha-1}\Sig^{\gamma} = \id+qL+O(q^2), \qquad
Q=\Tra\big[\omega\,\Sig^{\gamma}X^{\alpha-1}\Sig^{\gamma}\big] = 1+qD+O(q^2),
\end{align}
so that $Y_\alpha = \id+q(L-D\,\id)+O(q^2)$. By~\eqref{eq:Vvariance},
\begin{align}
V_\alpha(\omega\|\Sig) = \frac{\Tra\big[\omega(Y_\alpha-\id)^2\big]}{q^2} = \Tra\big[\omega(L-D\,\id)^2\big]+O(q) \;\longrightarrow\; \Tra[\omega L^2]-D^2 = V(\omega\|\Sig).
\end{align}

(ii) The expansion in (i) is uniform for $\Sig$ in a compact set of strictly positive operators, so it suffices that $\Sig_\alpha\to\id_A\otimes\rho_B$ as $\alpha\to1$. By Lemma~\ref{lem:opt}, $\Tilde{\sigma}^{*}_{B}(1)=\rho_B$, and $\alpha\mapsto\sigmaopt$ is continuous at $\alpha=1$~\cite[Appendix~D]{HT16}. 
\end{proof}

\begin{proposition}\label{prop:Dderiv}
For every fixed $\Sig>0$, every smooth full rank path of states $\omega_s$, and every smooth path of unit vectors $|\Psi_s\rangle_{ABC}$ with $\Tra_C\big[|\Psi_s\rangle\!\langle\Psi_s|\big]=\omega_s$, we have
\begin{align}
\Big|\frac{d}{ds}\Dsand(\omega_s\|\Sig)\Big| \;\le\; 2\,\|\dot\Psi_s\|\sqrt{V_\alpha(\omega_s\|\Sig)}.
\end{align}
\end{proposition}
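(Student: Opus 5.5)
We start from the derivative formula~\eqref{eq:DsderivR} of Lemma~\ref{lem:Dsmooth}, which gives
\begin{align}
\frac{d}{ds}\Dsand(\omega_s\|\Sig) = \frac{\alpha}{\alpha-1}\Tra\big[\dot\omega_s Y_\alpha(\omega_s\|\Sig)\big] = \frac{1}{q}\Tra\big[\dot\omega_s Y_\alpha\big],
\end{align}
since $\frac{\alpha}{\alpha-1}=1/q$. The path $\omega_s$ consists of normalized states, so $\Tra[\dot\omega_s]=0$. We may therefore replace $Y_\alpha$ by the centred operator $Z_s\coloneqq (Y_\alpha-\id)/q$ without changing the value:
\begin{align}
\frac{d}{ds}\Dsand(\omega_s\|\Sig) = \Tra[\dot\omega_s Z_s].
\end{align}
By~\eqref{eq:Vvariance}, $\Tra[\omega_s Z_s^2]=V_\alpha(\omega_s\|\Sig)$, and $Z_s$ is Hermitian.

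Next we lift to the purification. Since $\omega_s=\Tra_C|\Psi_s\rangle\!\langle\Psi_s|$, we have
\begin{align}
\dot\omega_s=\Tra_C\big[|\dot\Psi_s\rangle\!\langle\Psi_s|+|\Psi_s\rangle\!\langle\dot\Psi_s|\big].
\end{align}
Hence
\begin{align}
\Tra[\dot\omega_s Z_s]=\langle\Psi_s|(Z_s\otimes\id_C)|\dot\Psi_s\rangle+\langle\dot\Psi_s|(Z_s\otimes\id_C)|\Psi_s\rangle = 2\Re\langle\dot\Psi_s|(Z_s\otimes\id_C)|\Psi_s\rangle.
\end{align}
Cauchy--Schwarz then bounds its absolute value by
\begin{align}
2\|\dot\Psi_s\|\,\|(Z_s\otimes\id_C)\Psi_s\| = 2\|\dot\Psi_s\|\sqrt{\langle\Psi_s|Z_s^2\otimes\id_C|\Psi_s\rangle} = 2\|\dot\Psi_s\|\sqrt{\Tra[\omega_s Z_s^2]},
\end{align}
and the last expression equals $2\|\dot\Psi_s\|\sqrt{V_\alpha(\omega_s\|\Sig)}$. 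No orthogonality condition $\langle\Psi_s|\dot\Psi_s\rangle=0$ is needed. The centring is what allows the variance to appear rather than the raw second moment $\Tra[\omega Y_\alpha^2]/q^2$.

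There is no real obstacle. The only points to check are the following:
\begin{enumerate}[label=(\roman*)]
\item The differentiability hypotheses of Lemma~\ref{lem:Dsmooth} hold, which they do because $\omega_s$ is full rank and smooth.
\item $\Tr_C$ commutes with differentiation, which is true in finite dimensions.
\item The identity $\Tra[\omega Y_\alpha]=1$ is already established, so that the centred second moment coincides with $V_\alpha$ as stated.
\end{enumerate}
Smoothness of $s\mapsto|\Psi_s\rangle$ is assumed in the hypothesis.
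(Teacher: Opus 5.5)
Your proof is correct and follows the paper's argument essentially step for step: derivative formula, lifting $\dot\omega_s$ to the purification, centring, Cauchy--Schwarz, and identifying the centred second moment with $V_\alpha$. The only cosmetic difference is how the centring is justified. You use $\Tra[\dot\omega_s]=0$ at the level of the state, whereas the paper uses $\Re\langle\dot\Psi_s|\Psi_s\rangle=0$ on the purification; these are the same fact, since $\Tra[\dot\omega_s]=2\Re\langle\dot\Psi_s|\Psi_s\rangle$.
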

\begin{proof}
By Lemma~\ref{lem:Dsmooth} in the form~\eqref{eq:DsderivR},
\begin{align}
\frac{d}{ds}\Dsand(\omega_s\|\Sig) = \frac{\alpha}{\alpha-1}\Tra\big[\dot\omega_s Y_\alpha(\omega_s\|\Sig)\big].
\end{align}
Let
\begin{align}
A \coloneqq \frac{\alpha}{\alpha-1}Y_\alpha.
\end{align}
Since $\omega_s = \Tra_C\big[|\Psi_s\rangle\!\langle\Psi_s|\big]$,
\begin{align}
\dot\omega_s = \Tra_C\big[|\dot\Psi_s\rangle\!\langle\Psi_s| + |\Psi_s\rangle\!\langle\dot\Psi_s|\big].
\end{align}
So a straightforward calculation gives
\begin{align}
\frac{\alpha}{\alpha-1}\Tra\big[\dot\omega_s Y_\alpha(\omega_s\|\Sig)\big] = 2\,\Re\langle\dot\Psi_s|A\otimes\id_C|\Psi_s\rangle,
\end{align}
Since $\alpha/(\alpha-1) = 1/q$, we have $A = Y_\alpha/q$. As $\|\Psi_s\|=1$, $\Re\langle\dot\Psi_s|\Psi_s\rangle = \frac12\frac{d}{ds}\|\Psi_s\|^2 = 0$, so we may centre $A$:
\begin{align}
2\,\Re\langle\dot\Psi_s|A\otimes\id_C|\Psi_s\rangle = 2\,\Re\big\langle\dot\Psi_s\big|\big(A-\tfrac{1}{q}\id\big)\otimes\id_C\big|\Psi_s\big\rangle.
\end{align}
By Cauchy--Schwarz,
\begin{align}
\Big|\frac{d}{ds}\Dsand(\omega_s\|\Sig)\Big| \;\le\; 2\,\|\dot\Psi_s\|\,\big\|\big((A-\tfrac1q\id)\otimes\id_C\big)|\Psi_s\rangle\big\|,
\end{align}
and, using $\Tra[\omega_s Y_\alpha]=1$,
\begin{align}
\big\|\big((A-\tfrac1q\id)\otimes\id_C\big)|\Psi_s\rangle\big\|^2
&= \Tra\big[\omega_s\big(A-\tfrac1q\id\big)^2\big]
= \frac{1}{q^2}\Tra\big[\omega_s(Y_\alpha-\id)^2\big] \notag\\
&= \frac{\Tra[\omega_s Y_\alpha^2] - 2\Tra[\omega_s Y_\alpha] + 1}{q^2}
= V_\alpha(\omega_s\|\Sig). \qedhere
\end{align}
\end{proof}

\begin{lemma}[Danskin~\cite{danskin1966theory}, Lemma 4.1.1 of \cite{cardaliaguet2021shortcourse}]\label{lem:danskin}
Let $K$ be a compact metric space, $I\subseteq\mathbb{R}$ an open interval, and $f:I\times K\to\mathbb{R}$ continuous, with $\partial_s f$ existing and continuous on $I\times K$. Let $\phi(s)\coloneqq\min_{x\in K}f(s,x)$. If for every $s\in I$ the minimiser $x_s$ is unique, then $\phi$ is differentiable on $I$ and
\begin{align}
\phi'(s) = \partial_s f(s,x)\big|_{x=x_s}.
\end{align}
\end{lemma}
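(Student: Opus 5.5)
The plan is the standard two-sided sandwich argument. I bound the difference quotient of $\phi$ above and below by values of $\partial_s f$, using the minimiser at $s$ for one bound and at $s+h$ for the other. Then I pass to the limit using joint continuity of $\partial_s f$ together with continuity of the minimiser map $s\mapsto x_s$.

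\emph{Step 1 (continuity of the minimiser).} Fix $s\in I$ and a sequence $s_n\to s$ in $I$. By compactness of $K$, any subsequence of $(x_{s_n})$ has a further subsequence converging to some $\bar x\in K$. Minimality gives $f(s_n,x_{s_n})\le f(s_n,x)$ for every $x\in K$. Passing to the limit along the subsequence, using joint continuity of $f$, yields $f(s,\bar x)\le f(s,x)$ for all $x\in K$. So $\bar x$ is a minimiser at $s$, and uniqueness forces $\bar x=x_s$. Since every subsequence has a further subsequence converging to $x_s$, the whole sequence satisfies $x_{s_n}\to x_s$. I expect this step to be the only genuinely non-routine point, since it is exactly where uniqueness of the minimiser is used.

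\emph{Step 2 (sandwich).} For $h\neq0$ small enough that $[s-|h|,s+|h|]\subset I$, minimality of $\phi$ gives
\begin{align}
f(s+h,x_{s+h})-f(s,x_{s+h}) \;\le\; \phi(s+h)-\phi(s) \;\le\; f(s+h,x_s)-f(s,x_s).
\end{align}
For each fixed $x$, the map $t\mapsto f(t,x)$ is differentiable, so the one-variable mean value theorem applies to both outer differences. This gives $\xi_h,\eta_h$ between $s$ and $s+h$ with
\begin{align}
h\,\partial_s f(\xi_h,x_{s+h}) \;\le\; \phi(s+h)-\phi(s) \;\le\; h\,\partial_s f(\eta_h,x_s).
\end{align}
Dividing by $h$ preserves the inequalities for $h>0$ and reverses them for $h<0$. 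In either case the difference quotient $\frac{\phi(s+h)-\phi(s)}{h}$ lies between $\partial_s f(\xi_h,x_{s+h})$ and $\partial_s f(\eta_h,x_s)$.

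\emph{Step 3 (limit).} As $h\to0$ we have $\xi_h,\eta_h\to s$, and $x_{s+h}\to x_s$ by Step 1. Joint continuity of $\partial_s f$ on $I\times K$ then sends both bounds to $\partial_s f(s,x_s)$. By squeezing, $\phi'(s)$ exists and equals $\partial_s f(s,x)\big|_{x=x_s}$, which is the claim of Lemma~\ref{lem:danskin}.
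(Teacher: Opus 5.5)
Your proof is correct. Continuity of $s\mapsto x_s$ follows from compactness plus uniqueness, and the bounds on $\phi(s+h)-\phi(s)$ from the minimisers at $s$ and at $s+h$, with the mean value theorem and joint continuity of $\partial_s f$, squeeze the difference quotient to $\partial_s f(s,x_s)$. The paper gives no proof of its own and only cites Danskin and Cardaliaguet's notes; your argument is essentially the standard proof behind that citation, specialised directly to the unique-minimiser case.
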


\begin{lemma}\label{lem:Hderiv}
Let $\rho_{AB},\sigma_{AB}\in S(AB)$ with $\theta\coloneqq\Lambda(\rho_{AB},\sigma_{AB})>0$, let $\pi_{AB}>0$, $\varepsilon\in(0,1)$ and $\alpha\in(\frac12,1)\cup(1,\infty)$. Let $\omega_s$ and $|\Psi^\varepsilon_s\rangle$ be as in Lemma~\ref{lem:path}, and set
\begin{align}
\omega^\varepsilon_s\coloneqq(1-\varepsilon)\,\omega_s+\varepsilon\,\pi_{AB}, \qquad \Sig^\varepsilon_s\coloneqq\id_A\otimes\sigma^\varepsilon_s,
\end{align}
where $\sigma^\varepsilon_s$ denotes the optimizer of Lemma~\ref{lem:opt} for the state $\omega^\varepsilon_s$. Then
\begin{align}
h_\varepsilon(s)\coloneqq\Hup(A|B)_{\omega^\varepsilon_s}
\end{align}
is continuous on $[0,\theta]$, differentiable on $(0,\theta)$, and
\begin{align}
|h_\varepsilon'(s)| \;\le\; 2\sqrt{1-\varepsilon}\,\sqrt{V_\alpha(\omega^\varepsilon_s\|\Sig^\varepsilon_s)} \qquad\text{for all } s\in(0,\theta). \label{eq:Hderiv}
\end{align}
\end{lemma}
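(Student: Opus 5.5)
The plan is to write $h_\varepsilon(s) = -\min_{\sigma_B\in S(B)} f(s,\sigma_B)$ with $f(s,\sigma_B)\coloneqq\Dsand(\omega^\varepsilon_s\|\id_A\otimes\sigma_B)$, and apply Danskin's theorem (Lemma~\ref{lem:danskin}). The regularization by $\varepsilon\,\pi_{AB}$ is what makes everything work. Since $\pi_{AB}>0$, each $\omega^\varepsilon_s$ is full rank, and it depends smoothly on $s$ because $\omega_s$ does (Lemma~\ref{lem:path}). Lemma~\ref{lem:opt} then gives a unique optimizer $\sigma^\varepsilon_s\gg\omega^\varepsilon_{B,s}$, so $\sigma^\varepsilon_s$ is full rank.

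The first step is to meet the hypotheses of Lemma~\ref{lem:danskin}, which requires $f$ to be continuous with $\partial_s f$ continuous on $I\times K$ for a compact $K$. Lemma~\ref{lem:Dsmooth} only gives joint continuity for $\Sig>0$. The full state space $S(B)$ includes singular $\sigma_B$, where the derivative formula can blow up, so $K=S(B)$ cannot be used directly. I would instead take $K_\delta\coloneqq\{\sigma_B\in S(B):\sigma_B\ge\delta\id_B\}$, which is compact and consists of positive definite operators.

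The second step is to show that for small enough $\delta>0$ the minimizer over $S(B)$ lies in $K_\delta$ for every $s\in[0,\theta]$. Then $\min_{S(B)}f=\min_{K_\delta}f$, and the minimizer over $K_\delta$ is still unique and equals $\sigma^\varepsilon_s$. I would argue this by compactness. Suppose $\lambda_{\min}(\sigma^\varepsilon_s)$ could be made arbitrarily small. Pick $s_n$ with $\lambda_{\min}(\sigma^\varepsilon_{s_n})\to0$, and pass to a subsequence with $s_n\to s_*$ and $\sigma^\varepsilon_{s_n}\to\bar\sigma$. Here I would use that $\sigma\mapsto\Dsand(\omega\|\id\otimes\sigma)$ is lower semicontinuous and jointly continuous near full-rank points. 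The minimum value $\min_\sigma f(s,\sigma)$ is continuous in $s$, because $\omega^\varepsilon_s$ is uniformly full rank and the objective is uniformly continuous on $K_\delta$-type sets. So $\bar\sigma$ minimizes at $s_*$. By uniqueness $\bar\sigma=\sigma^\varepsilon_{s_*}$, which is full rank, a contradiction.

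An alternative for this step is the fixed-point equation~\eqref{eq:fixedpoint}. It expresses $\sigma^\varepsilon_s$ as a partial trace of $X^\alpha$ with $X$ sandwiching a state $\ge\varepsilon\lambda_{\min}(\pi)\id$, which should give an explicit lower bound. I expect this uniform-in-$s$ lower bound on the optimizer to be the main technical obstacle, and I would try the compactness-plus-uniqueness argument first.

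With $K_\delta$ in hand, the rest is direct. Lemma~\ref{lem:Dsmooth} gives continuity of $f$ and $\partial_sf$ on an open interval containing $[0,\theta]$ times $K_\delta$; if needed, extend the smooth path slightly beyond the endpoints or restrict to $(0,\theta)$. Danskin then yields differentiability of $h_\varepsilon$ on $(0,\theta)$ with
\[
h_\varepsilon'(s)=-\frac{d}{ds}\Dsand(\omega^\varepsilon_t\|\Sig^\varepsilon_s)\Big|_{t=s}.
\]
Continuity of $h_\varepsilon$ on $[0,\theta]$ follows from continuity of a minimum of a jointly continuous function over a compact set.

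Finally, $|\Psi^\varepsilon_s\rangle$ purifies $\omega^\varepsilon_s$ by Lemma~\ref{lem:path}. Applying Proposition~\ref{prop:Dderiv} with fixed $\Sig=\Sig^\varepsilon_s>0$ and using $\|\dot\Psi^\varepsilon_s\|=\sqrt{1-\varepsilon}$ from~\eqref{eq:pathspeed} gives~\eqref{eq:Hderiv}.
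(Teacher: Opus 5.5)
Your proof is correct. The overall architecture matches the paper's: restrict the minimization to $K_\delta=\{\tau_B\in S(B):\tau_B\ge\delta\,\id_B\}$, get the joint continuity Danskin's theorem needs on $(0,\theta)\times K_\delta$ from Lemma~\ref{lem:Dsmooth}, and finish with Proposition~\ref{prop:Dderiv} applied to the frozen $\Sig^\varepsilon_s>0$ and the purification of speed $\sqrt{1-\varepsilon}$.

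The one real difference is the step you flagged as the obstacle: showing that all optimizers $\sigma^\varepsilon_s$, $s\in[0,\theta]$, lie in a single $K_\delta$. The paper uses what you listed as your alternative. It reads off explicit values from the fixed-point equation~\eqref{eq:fixedpoint}: $\delta=(\varepsilon\lambda_{\min}(\pi_{AB}))^\alpha d_A^{2-\alpha}$ for $\alpha>1$, and $\delta=\varepsilon\lambda_{\min}(\pi_{AB})\,d_A$ for $\alpha<1$. These come from $\omega^\varepsilon_s\ge\varepsilon\lambda_{\min}(\pi_{AB})\,\id$ together with $Q_s\le d_A^{|\alpha-1|}$, which in turn comes from $|\Hup|\le\ln d_A$.

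Your compactness-plus-uniqueness argument is softer. It uses only two facts: uniqueness of the minimizer, and joint lower semicontinuity of $(\omega,\tau_B)\mapsto\Dsand(\omega\|\id_A\otimes\tau_B)$.
\begin{itemize}
\item For $\alpha<1$ this map is in fact continuous, since $\gamma>0$ and $Q>0$ when $\omega>0$.
\item For $\alpha>1$ it is lower semicontinuous with value $+\infty$ at singular $\tau_B$, e.g.\ as the monotone limit of $\Dsand(\omega\|\id_A\otimes(\tau_B+\eta\,\id_B))$ as $\eta\downarrow0$. In that case uniqueness is not even needed, because the limit point would give an infinite value.
\end{itemize}
The price is a non-explicit $\delta$. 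That is harmless here, since $\delta$ is never used quantitatively. The paper's route is constructive and avoids any boundary behaviour of $\Dsand$, but it relies on the structure of the fixed point and on the dimension bound for $\Hup$.

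One small repair is needed. Your justification that $m(s)\coloneqq\min_{\tau_B}f(s,\tau_B)$ is continuous, via uniform continuity on $K_\delta$-type sets, presupposes the uniform lower bound you are trying to prove. You only need $\limsup_n m(s_n)\le m(s_*)$. This follows directly from $m(s_n)\le f(s_n,\sigma^\varepsilon_{s_*})\to f(s_*,\sigma^\varepsilon_{s_*})=m(s_*)$, using Lemma~\ref{lem:Dsmooth} and the fact that $\sigma^\varepsilon_{s_*}>0$.
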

\begin{proof}
The idea is to apply Danskin's theorem (Lemma~\ref{lem:danskin}). To do so, we need to choose a compact set $K\subseteq S(B)$ such that
\begin{enumerate}[label=\arabic*.]
\item $\sigma^\varepsilon_s\in K$ for all $s\in[0,\theta]$;
\item $\Dsand(\omega^\varepsilon_s\|\id_A\otimes\tau_B)<\infty$ for all $\tau_B\in K$ and $s\in[0,\theta]$.
\end{enumerate}
We first consider the case $\alpha>1$. From Lemma~\ref{lem:opt} we know that $\sigma^\varepsilon_s$ satisfies the fixed-point equation
\begin{align}
\sigma^\varepsilon_s=\frac{\Tra_A\big[X_s^{\alpha}\big]}{Q_s},
\qquad X_s\coloneqq(\Sig^\varepsilon_s)^{\gamma}\,\omega^\varepsilon_s\,(\Sig^\varepsilon_s)^{\gamma},
\qquad Q_s\coloneqq\Tra\big[X_s^{\alpha}\big].
\end{align}
So it suffices to pick
\begin{align}
\delta\coloneqq\big(\varepsilon\lambda_{\min}(\pi_{AB})\big)^{\alpha}d_A^{2-\alpha},
\qquad\text{and set}\qquad
K\coloneqq\big\{\tau_B\in S(B):\tau_B\ge\delta\,\id_B\big\}.
\end{align}
Now set
\begin{align}
f_\varepsilon(s,\tau_B)\coloneqq\Dsand\big(\omega^\varepsilon_s\,\big\|\,\id_A\otimes\tau_B\big), \label{eq:feps}
\end{align}
so that $h_\varepsilon(s)=-\min_{\tau_B\in S(B)}f_\varepsilon(s,\tau_B)$. Then, by Lemma~\ref{lem:Dsmooth} and~\eqref{eq:DsderivR}, $f_\varepsilon$ is continuous on $[0,\theta]\times K$ and differentiable in $s$ with
\begin{align}
\partial_sf_\varepsilon(s,\tau_B)=\frac{\alpha}{\alpha-1}\,\Tra\big[\dot\omega^\varepsilon_s\,Y_\alpha(\omega^\varepsilon_s\|\id_A\otimes\tau_B)\big],
\qquad \dot\omega^\varepsilon_s=(1-\varepsilon)\,\dot\omega_s. \label{eq:dsfeps}
\end{align}
Again by Lemma~\ref{lem:Dsmooth}, $\partial_sf_\varepsilon$ is continuous on $[0,\theta]\times K$.

By Lemma~\ref{lem:opt}, $\sigma^\varepsilon_s$ is the unique minimiser of $f_\varepsilon(s,\cdot)$ over $S(B)$. Since it lies in $K$, it is also the unique minimiser over $K$, and $h_\varepsilon(s)=-\min_{\tau_B\in K}f_\varepsilon(s,\tau_B)$. Lemma~\ref{lem:danskin}, applied to $f_\varepsilon$ on $(0,\theta)\times K$, therefore shows that $h_\varepsilon$ is differentiable on $(0,\theta)$ with
\begin{align}
h_\varepsilon'(s)=-\,\partial_sf_\varepsilon(s,\tau_B)\Big|_{\tau_B=\sigma^\varepsilon_s},
\end{align}
and $h_\varepsilon$ is continuous on $[0,\theta]$, being minus the minimum over the compact set $K$ of a continuous function. The right-hand side above is the derivative at $s'=s$ of $s'\mapsto-\Dsand(\omega^\varepsilon_{s'}\|\Sig^\varepsilon_s)$ with $\Sig^\varepsilon_s>0$ held fixed. By Lemma~\ref{lem:path}, $|\Psi^\varepsilon_{s'}\rangle$ is a smooth purification of the full-rank path $\omega^\varepsilon_{s'}$ with $\|\dot\Psi^\varepsilon_{s'}\|=\sqrt{1-\varepsilon}$, see~\eqref{eq:pathspeed}. Proposition~\ref{prop:Dderiv} therefore gives~\eqref{eq:Hderiv}.

For $\alpha\in(\frac12,1)$, we pick $\delta\coloneqq\varepsilon\lambda_{\min}(\pi_{AB})\,d_A$ and the proof follows analogously.
\end{proof}

\bigskip

\bigskip
\noindent\textbf{Notation.} Let $\rho_{AB}\in S(AB)$, $\alpha>\frac12$, $\alpha\neq 1$, and let $\sigmaopt$ be such that
\begin{align}
\Hup(A|B)_\rho = -\Dsand\big(\rho_{AB}\,\big\|\,\id_A\otimes\sigmaopt\big),
\end{align}
as in Lemma~\ref{lem:opt}. Introduce $\Sig = \id_A\otimes\sigmaopt$, and with the notation of Definition~\ref{def:Valpha}, write for shorthand
\begin{align}
X = X(\rho_{AB}\|\Sig), \qquad Q = Q(\rho_{AB}\|\Sig), \qquad Y_\alpha = Y_\alpha(\rho_{AB}\|\Sig), \qquad V_\alpha = V_\alpha(\rho_{AB}\|\Sig).
\end{align}

From the above lemma, what remains to get the desired continuity bound is basically an application of the mean-value theorem. For this, we require upper bounds on the $\alpha$-variance quantity that appeared above. The rough strategy for obtaining these bounds is the following: we first observed that the second moment $\Tr[\rho Y_{\alpha}^2]$ can be written as the product of exponentiated Petz-Rényi divergences between certain states. We then pass to the Nussbaum-Szko{\l}a distributions. Then, a combination of Jensen's and parameterized AM-GM allows us to bound the $\alpha$-variance in terms of the exponentiated Petz-Rényi divergences, for which we have bounds in terms of only the dimension of the $A$-system. 

\noindent We need the following moment inequality.

\begin{restatable}{lemma}{momentlemma}\label{lem:moment}
Let $q\in[-1,1]$ and $f_q(x)=\cosh(q\arccosh x)$ for $x\in[1,\infty)$. Then $f_q$ is increasing and concave on $[1,\infty)$, and for every real random variable $Z$ with finite support,
\begin{align}
\EE\cosh(qZ) \;\le\; f_q\big(\EE\cosh Z\big). \label{eq:moment}
\end{align}
\end{restatable}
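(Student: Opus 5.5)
Substitute $x=\cosh t$ with $t\ge0$, so that $f_q(\cosh t)=\cosh(qt)$; the function is even in $q$, so we may take $q\in[0,1]$. The cases $q=0$ ($f_q\equiv1$) and $q=1$ ($f_q(x)=x$) are trivial, so assume $q\in(0,1)$.

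\textbf{Monotonicity.} Since $\arccosh$ is increasing on $[1,\infty)$ and $\cosh(q\,\cdot)$ is increasing on $[0,\infty)$, the composition $f_q$ is increasing.

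\textbf{Concavity.} Differentiate in $t$ with $x=\cosh t$:
\begin{align}
f_q'(x)=\frac{q\sinh(qt)}{\sinh t}.
\end{align}
It suffices to show this is nonincreasing in $t>0$, equivalently in $x$. The logarithmic derivative in $t$ is $q\coth(qt)-\coth t$. Since $u\coth u$ is increasing on $(0,\infty)$ and $qt<t$, we get $qt\coth(qt)\le t\coth t$. Dividing by $t$ gives $q\coth(qt)\le\coth t$, so $f_q'$ is nonincreasing and $f_q$ is concave. At $x=1$ the derivative extends continuously to the value $q^2$, and concavity on the closed interval follows from continuity of $f_q$.

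\textbf{Moment inequality.} Write $W\coloneqq\cosh Z\ge1$. Because $\cosh$ is even, $\cosh(qZ)=\cosh(q|Z|)=f_q(W)$ pointwise, using $|Z|=\arccosh W$. Since $Z$ has finite support, $W$ is a random variable with finite support in $[1,\infty)$. Jensen's inequality for the concave function $f_q$ then gives
\begin{align}
\EE\cosh(qZ)=\EE f_q(W)\le f_q(\EE W)=f_q\big(\EE\cosh Z\big).
\end{align}

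The main obstacle is the concavity step. The reduction of the moment inequality to Jensen is immediate once we notice the pointwise identity $\cosh(qZ)=f_q(\cosh Z)$; the real work is the monotonicity of $u\coth u$. That follows from
\begin{align}
\frac{d}{du}\big(u\coth u\big)=\frac{\sinh u\cosh u-u}{\sinh^2 u}=\frac{\tfrac12\sinh 2u-u}{\sinh^2 u}>0
\end{align}
for $u>0$, since $\sinh v>v$ for $v>0$.
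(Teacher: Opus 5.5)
Your proof is correct and follows essentially the same route as the paper: the substitution $x=\cosh t$, the derivative $q\sinh(qt)/\sinh t$, and Jensen applied via the pointwise identity $\cosh(qZ)=f_q(\cosh Z)$. Your key inequality $q\coth(qt)\le\coth t$, obtained from monotonicity of $u\coth u$, is the same as the paper's $\tanh(qt)\ge q\tanh t$, which it derives from concavity of $\tanh$ with $\tanh 0=0$ and uses to sign the second derivative, so the difference is only cosmetic.
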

\noindent The proof is given in Appendix~\ref{app:proofs}.

\begin{lemma}\label{lem:product}
Let
\begin{align}
M(q) = \Tra\!\left[\Big(\frac{X^\alpha}{Q}\Big)^{1+q}\Sig^{-q}\right] \qquad\text{for } q=\frac{\alpha-1}{\alpha}.
\end{align}
Then $\Tra[\rho Y_\alpha^2] = M(q)\cdot M(-q)$.
\end{lemma}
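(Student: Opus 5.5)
The plan is a direct computation from the definitions in Definition~\ref{def:Valpha}. The key relations are $2\gamma=\frac{1-\alpha}{\alpha}=-q$, together with the fact that $\Sig=\id_A\otimes\sigmaopt$ lets us invert the sandwiching: $\rho_{AB}=\Sig^{-\gamma}X\Sig^{-\gamma}$. This holds on the support, and exactly if $\rho_{AB}>0$, which forces $\sigmaopt>0$ by Lemma~\ref{lem:opt}. I would first do everything for full-rank $\rho_{AB}$, where all powers are genuine. This is the case used in Lemma~\ref{lem:Hderiv}, since the states $\omega^\varepsilon_s$ there are full rank. Afterwards I would remark that the same manipulation goes through with inverses taken on supports, because $\sigmaopt\gg\rho_B$.

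\textbf{Left-hand side.} Substitute $Y_\alpha=\Sig^{\gamma}X^{\alpha-1}\Sig^{\gamma}/Q$ and $\rho_{AB}=\Sig^{-\gamma}X\Sig^{-\gamma}$ into $\Tra[\rho Y_\alpha^2]$. The outer factors $\Sig^{-\gamma}\Sig^{\gamma}$ cancel, and cyclicity of the trace gives
\begin{align}
\Tra[\rho Y_\alpha^2]=\frac{\Tra\big[X\,X^{\alpha-1}\Sig^{2\gamma}X^{\alpha-1}\big]}{Q^2}=\frac{\Tra\big[X^{2\alpha-1}\Sig^{-q}\big]}{Q^2}.
\end{align}
Here I also used that $X$ commutes with its own powers, so $X^{\alpha-1}X X^{\alpha-1}=X^{2\alpha-1}$ after cycling.

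\textbf{The two factors.} The exponents simplify because $\alpha(1+q)=2\alpha-1$ and $\alpha(1-q)=1$. Hence
\begin{align}
M(q)=\frac{\Tra\big[X^{2\alpha-1}\Sig^{-q}\big]}{Q^{1+q}},\qquad M(-q)=\frac{\Tra\big[X\,\Sig^{q}\big]}{Q^{1-q}}.
\end{align}
For the second factor, $\Tra[X\Sig^{q}]=\Tra[\Sig^{\gamma}\rho\Sig^{\gamma}\Sig^{-2\gamma}]=\Tra[\rho]=1$, so $M(-q)=Q^{q-1}$. Multiplying the two factors gives $M(q)M(-q)=\Tra[X^{2\alpha-1}\Sig^{-q}]/Q^2$, which is exactly the expression obtained for $\Tra[\rho Y_\alpha^2]$.

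Notably, the fixed-point equation~\eqref{eq:fixedpoint} is not needed for the identity itself. It will matter later, when $X^\alpha/Q$ is related to a state with marginal $\sigmaopt$, so that the $M(\pm q)$ can be read as exponentiated Petz divergences.

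\textbf{Main obstacle.} The only delicate point is the non-full-rank case. There the expressions $\Sig^{-\gamma}$ and $X^{\alpha-1}$ (with $\alpha<1$) must be understood on supports. One has to check that $\mathrm{supp}(X)$ is compatible with these generalized inverses, so that the cancellation $\Sig^{-\gamma}\Sig^{\gamma}$ acts as the identity on the relevant subspace. I would handle this by restricting to $\mathrm{supp}(\id_A\otimes\sigmaopt)\supseteq\mathrm{supp}(\rho_{AB})$. Failing that, I would simply state the lemma for $\rho_{AB}>0$ and, if needed, pass to general states by continuity, as in the $\varepsilon$-regularisation already used in Lemma~\ref{lem:Hderiv}.
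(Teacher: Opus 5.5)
Your proof is correct and is essentially the paper's own computation. You expand $\Tra[\rho Y_\alpha^2]$ to $\Tra[X^{2\alpha-1}\Sig^{-q}]/Q^2$ and use $\alpha(1+q)=2\alpha-1$ to identify this with $M(q)/Q^{1-q}$. Since $\alpha(1-q)=1$ and $\Tra[X\Sig^{q}]=\Tra[\rho]=1$, you get $M(-q)=Q^{q-1}$.

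The only detour is writing $\rho=\Sig^{-\gamma}X\Sig^{-\gamma}$. The paper instead cycles the outer $\Sig^{\gamma}$ so that $\Tra[\rho\,\Sig^{\gamma}(\cdot)\Sig^{\gamma}]=\Tra[X(\cdot)]$, which needs no inverse. So the support issue you flag does not arise for the left-hand side. Your inversion would in any case only need $\Sig>0$, the standing assumption in Lemma~\ref{lem:Vgeneral}, not $\rho_{AB}>0$.
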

\begin{proof}
\begin{align}
\Tra[\rho Y_\alpha^2]
&= \frac{1}{Q^2}\Tra\!\left[\rho\big(\Sig^{\gamma}X^{\alpha-1}\Sig^{\gamma}\Sig^{\gamma}X^{\alpha-1}\Sig^{\gamma}\big)\right] \notag\\
&= \frac{1}{Q^2}\Tra\!\left[X\,X^{\alpha-1}\Sig^{2\gamma}X^{\alpha-1}\right] \notag\\
&= \frac{1}{Q^2}\Tra\!\left[X^{2\alpha-1}\Sig^{2\gamma}\right] \notag\\
&= \frac{1}{Q^{1+q}Q^{1-q}}\Tra\!\left[X^{2\alpha-1}\Sig^{2\gamma}\right] \notag\\
&= \frac{1}{Q^{1+q}Q^{1-q}}\Tra\!\left[(X^{\alpha})^{q+1}\Sig^{-q}\right] = M(q)\,\frac{1}{Q^{1-q}}.
\end{align}
Now observing $M(-q) = \frac{1}{Q^{1-q}}$ finishes the proof.
\end{proof}

\begin{lemma}\label{lem:Vgeneral}
Let $\rho_{AB}\in S(AB)$, let $\alpha\in(\frac12,1)\cup(1,\infty)$, $q=\frac{\alpha-1}{\alpha}$, and let $\sigmaopt$ be the optimizer of Lemma~\ref{lem:opt}. Assume $\sigmaopt>0$, write $\Sig\coloneqq\id_A\otimes\sigmaopt$, and let $X\coloneqq X(\rho_{AB}\|\Sig)$ and $Q\coloneqq Q(\rho_{AB}\|\Sig)$ be as in Definition~\ref{def:Valpha}. For $t\in[-1,1]$ define
\begin{align}
M(t)\coloneqq\Tra\!\left[\Big(\frac{X^\alpha}{Q}\Big)^{1+t}\Sig^{-t}\right],
\quad\text{so that}\quad
M(1)=\Tra\!\left[\Big(\frac{X^\alpha}{Q}\Big)^{2}\Sig^{-1}\right],
\quad
M(-1)=\Tra\!\left[\Big(\frac{X^\alpha}{Q}\Big)^{0}\Sig\right],
\end{align}
where $\big(\frac{X^\alpha}{Q}\big)^{0}$ is the projection onto the support of $\frac{X^\alpha}{Q}$. Then
\begin{align}
V_\alpha\big(\rho_{AB}\,\big\|\,\id_A\otimes\sigmaopt\big)\;\le\;\frac{\sinh^2\!\Big(q\arccosh\sqrt{M(1)M(-1)}\Big)}{q^2}. \label{eq:Vgeneral}
\end{align}
\end{lemma}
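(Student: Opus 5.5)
The plan is to reduce \eqref{eq:Vgeneral} to the moment inequality of Lemma~\ref{lem:moment}. Three ingredients do the work: Lemma~\ref{lem:product}, a Nussbaum--Szko{\l}a-type classical representation of $M(t)$, and a parametrized AM--GM step.

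\emph{Step 1: reduce to a bound on $\sqrt{M(q)M(-q)}$.} By \eqref{eq:Valpha} and Lemma~\ref{lem:product},
\begin{align}
V_\alpha=\frac{M(q)M(-q)-1}{q^2}.
\end{align}
Since $\sinh^2=\cosh^2-1$, it suffices to show
\begin{align}
\sqrt{M(q)M(-q)}\le f_q\big(\sqrt{M(1)M(-1)}\big)=\cosh\big(q\arccosh\sqrt{M(1)M(-1)}\big).
\end{align}
Note that $\alpha\in(\frac12,1)\cup(1,\infty)$ gives $q\in(-1,1)$, so Lemma~\ref{lem:moment} is applicable.

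\emph{Step 2: classical representation.} Set $P\coloneqq X^\alpha/Q$, a state, with spectral decomposition $P=\sum_i p_i|i\rangle\!\langle i|$ (over $p_i>0$). Write $\Sig=\sum_j\theta_j|j\rangle\!\langle j|$ with all $\theta_j>0$, which holds because $\sigmaopt>0$ is assumed. Then for every $t\in[-1,1]$,
\begin{align}
M(t)=\sum_{i,j}p_i^{1+t}\theta_j^{-t}|\langle i|j\rangle|^2=\EE_\mu\big[e^{tZ}\big],
\end{align}
where $\mu(i,j)\coloneqq p_i|\langle i|j\rangle|^2$ and $Z\coloneqq\ln(p_i/\theta_j)$.
\begin{itemize}
\item $\mu$ is a probability distribution, since $\sum_j|\langle i|j\rangle|^2=1$ and $\sum_ip_i=1$.
\item $Z$ is a real random variable with finite support.
\item The convention that $P^0$ is the support projection is exactly what makes $t=-1$ match: $M(-1)=\sum_{i,j}\theta_j|\langle i|j\rangle|^2$ over $p_i>0$.
\end{itemize}

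\emph{Step 3: AM--GM, Jensen, and optimising the shift.} For any real $a$, AM--GM gives
\begin{align}
\sqrt{M(q)M(-q)}\le\tfrac12\big(e^{qa}M(q)+e^{-qa}M(-q)\big)=\EE_\mu\cosh\big(q(Z+a)\big).
\end{align}
Applying Lemma~\ref{lem:moment} to the shifted variable $Z+a$ yields
\begin{align}
\sqrt{M(q)M(-q)}\le f_q\big(\EE_\mu\cosh(Z+a)\big)=f_q\Big(\tfrac12\big(e^aM(1)+e^{-a}M(-1)\big)\Big).
\end{align}
The argument of $f_q$ is automatically $\ge1$, because $\cosh\ge1$, so $f_q$ is evaluated on its domain $[1,\infty)$. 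Since $f_q$ is increasing, we minimise the argument over $a$. The choice $e^{a}=\sqrt{M(-1)/M(1)}$ makes it equal to $\sqrt{M(1)M(-1)}$, which gives the target of Step 1. Squaring and subtracting $1$ then yields \eqref{eq:Vgeneral}.

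The step needing most care is Step 2. One has to check that the Nussbaum--Szko{\l}a-type representation is valid uniformly for $t\in[-1,1]$, including the endpoint $t=-1$ with its support-projection convention. This relies on $\Sig>0$ and on restricting the sum to $p_i>0$. One also has to check that $\mu$ is genuinely normalised, which uses $\Tra P=1$. Once this representation is in place, the rest is the AM--GM identity and a single application of Lemma~\ref{lem:moment}.
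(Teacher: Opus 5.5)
Your proposal is correct and follows essentially the same route as the paper. Both arguments write $\Tra[\rho Y_\alpha^2]=M(q)M(-q)$ via Lemma~\ref{lem:product}, represent $M(t)=\EE[e^{tZ}]$ through the Nussbaum--Szko{\l}a distribution, apply a shifted AM--GM to get $\EE\cosh(q(Z+a))$, invoke Lemma~\ref{lem:moment}, and choose the shift so that the argument of $f_q$ becomes $\sqrt{M(1)M(-1)}$. Your shift $a$ is just the paper's $-\mu$, and your side checks (that $q\in(-1,1)$ and that the argument of $f_q$ is at least $1$) are valid points the paper leaves implicit.
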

\begin{proof}
We use the Nussbaum--Szko{\l}a distributions~\cite{NS09}. Let
\begin{align}
\frac{X^\alpha}{Q} = \sum_i \lambda_i\, |u_i\rangle\!\langle u_i| \qquad\text{and}\qquad \Sig = \sum_j \mu_j\, |v_j\rangle\!\langle v_j|
\end{align}
be the spectral decompositions of $\frac{X^\alpha}{Q}$ and $\Sig$ respectively, and define
\begin{align}
P(i,j) = \lambda_i\,|\langle u_i|v_j\rangle|^2, \qquad Z(i,j) = \ln\Big(\frac{\lambda_i}{\mu_j}\Big)
\end{align}
for the pairs $(i,j)$ with $\lambda_i>0$. Since $\Sig>0$ and $\frac{X^\alpha}{Q}$ is a state, $P$ is a probability distribution, and $M(q)=\EE_P[e^{qZ}]$ for $q\in[-1,1]$. In particular, we have $M(1)=\EE_P[e^{Z}]$ and $M(-1)=\EE_P[e^{-Z}]$.
Now fix $\mu\in\mathbb{R}$. Since $e^{-q\mu}M(q)=\EE_P[e^{q(Z-\mu)}]$ and $e^{q\mu}M(-q)=\EE_P[e^{-q(Z-\mu)}]$, Lemma~\ref{lem:product} and the inequality $xy\le\big(\frac{x+y}{2}\big)^2$ give
\begin{align}
\Tra[\rho Y_\alpha^2]
&= \big(e^{-q\mu}M(q)\big)\big(e^{q\mu}M(-q)\big) \notag\\
&\le \Big(\EE_P\big[\cosh\big(q(Z-\mu)\big)\big]\Big)^2 \notag\\
&\le f_q\big(\EE_P[\cosh(Z-\mu)]\big)^2
  && \text{(Lemma~\ref{lem:moment})} \notag\\
&= f_q\Big(\tfrac12\big(e^{-\mu}M(1)+e^{\mu}M(-1)\big)\Big)^2.
\end{align}
Minimizing the right-hand side, we choose $e^{2\mu}=M(1)/M(-1)$. Thus, 
\begin{align}
V_\alpha = \frac{\Tra[\rho Y_\alpha^2]-1}{q^2}
\;\le\; \frac{f_q\big(\sqrt{M(1)M(-1)}\big)^2-1}{q^2}
= \frac{\sinh^2\!\Big(q\arccosh\sqrt{M(1)M(-1)}\Big)}{q^2}. \qedhere
\end{align}
where in the last line we used $f_q(x)^2-1=\sinh^2(q\arccosh x)$.
\end{proof}

\begin{proposition}\label{prop:Vbounds}
Let $\rho_{AB}\in S(AB)$, let $\alpha\in(\frac12,1)\cup(1,\infty)$, $q=\frac{\alpha-1}{\alpha}$, and let $\sigmaopt$ be the optimizer of Lemma~\ref{lem:opt}. Assume $\sigmaopt>0$, which holds for instance if $\rho_B>0$. Then
\begin{align}
V_\alpha\big(\rho_{AB}\,\big\|\,\id_A\otimes\sigmaopt\big)\;\le\;\frac{\sinh^2\!\big(q\arccosh d_A\big)}{q^2}. \label{eq:Vbound}
\end{align}
If moreover $\rho_{AB}$ is classical on $A$, then
\begin{align}
V_\alpha\big(\rho_{AB}\,\big\|\,\id_A\otimes\sigmaopt\big)\;\le\;\frac{\sinh^2\!\big(q\arccosh\sqrt{d_A}\big)}{q^2}. \label{eq:Vboundcq}
\end{align}
\end{proposition}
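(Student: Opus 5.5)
The plan is to combine Lemma~\ref{lem:Vgeneral} with dimension-only bounds on the two factors $M(1)$ and $M(-1)$. The function $x\mapsto \sinh^2(q\arccosh x)/q^2$ is nondecreasing on $[1,\infty)$, because $\arccosh$ is increasing and $y\mapsto\sinh^2(qy)$ is increasing in $y\ge0$. So it suffices to show
\begin{align}
1\le M(1)M(-1)\le d_A^2 \quad\text{in general}, \qquad M(1)M(-1)\le d_A \quad\text{if $\rho_{AB}$ is classical on $A$.}
\end{align}
The lower bound $M(1)M(-1)\ge1$ is needed for $\arccosh$ to be defined. It follows from the Nussbaum--Szko{\l}a representation in the proof of Lemma~\ref{lem:Vgeneral}: $M(1)M(-1)=\EE_P[e^{Z}]\,\EE_P[e^{-Z}]\ge1$ by Cauchy--Schwarz (or Jensen).

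\emph{The factor $M(-1)$.} Since $(X^\alpha/Q)^0$ is a projector and $\Sig\ge0$, we get $M(-1)=\Tra[(X^\alpha/Q)^0\Sig]\le\Tra[\Sig]=\Tra[\id_A\otimes\sigmaopt]=d_A$. This holds in both cases.

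\emph{The factor $M(1)$.} Set $\hat\rho_{AB}\coloneqq X^\alpha/Q$, which is a state. The fixed-point equation~\eqref{eq:fixedpoint} with $\tau_A=\id_A$ says exactly that $\hat\rho_B=\Tra_A[\hat\rho_{AB}]=\sigmaopt$. Hence
\begin{align}
M(1)=\Tra\big[\hat\rho_{AB}^2\,(\id_A\otimes\hat\rho_B)^{-1}\big]=\exp\!\big(-\overline{H}^{\downarrow}_2(A|B)_{\hat\rho}\big).
\end{align}
In the general case I would invoke the standard dimension bound $\overline{H}^{\downarrow}_\alpha(A|B)\ge-\ln d_A$, valid for Petz conditional entropies with $\alpha\in[0,2]$ (see e.g.~\cite{tomamichel2015quantum}). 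This gives $M(1)\le d_A$ and hence $M(1)M(-1)\le d_A^2$, which is~\eqref{eq:Vbound}.

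\emph{The classical case.} Since $\Sig=\id_A\otimes\sigmaopt$ commutes with the projectors $|a\rangle\!\langle a|\otimes\id_B$, and $\rho_{AB}=\sum_a|a\rangle\!\langle a|\otimes\rho_a$, the operator $X=\Sig^\gamma\rho\Sig^\gamma$ is block diagonal. So is $X^\alpha$, and therefore $\hat\rho_{AB}=\sum_a|a\rangle\!\langle a|\otimes\hat\rho_a$ with $\hat\rho_a\le\sum_{a'}\hat\rho_{a'}=\hat\rho_B$. Then
\begin{align}
M(1)=\sum_a\Tra\big[\hat\rho_a^{1/2}\,\big(\hat\rho_a^{1/2}\hat\rho_B^{-1}\hat\rho_a^{1/2}\big)\,\hat\rho_a^{1/2}\big]\le\sum_a\Tra[\hat\rho_a]=1.
\end{align}
The inequality uses that $\hat\rho_a\le\hat\rho_B$ implies $\hat\rho_a^{1/2}\hat\rho_B^{-1}\hat\rho_a^{1/2}\le\id$, since $\hat\rho_B^{-1/2}\hat\rho_a\hat\rho_B^{-1/2}\le\id$ and $YY^\dagger$ and $Y^\dagger Y$ have the same norm. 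Together with $M(-1)\le d_A$ this yields $M(1)M(-1)\le d_A$, which gives~\eqref{eq:Vboundcq}.

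The step I expect to be the main obstacle is the general-case bound $M(1)\le d_A$. A naive route through the operator inequality $\hat\rho_{AB}\le d_A\,\id_A\otimes\hat\rho_B$ does not directly control $\Tra[\hat\rho^2\Sig^{-1}]$, because $\hat\rho$ and $\Sig$ need not commute. I would therefore rely on the known Petz lower bound at $\alpha=2$. If a self-contained argument is wanted, I would try data processing of the Petz divergence (valid for $\alpha\in[0,2]$) under the partial trace. One way is to compare $\hat\rho_{AB}$ with a maximally entangled extension and use duality, or equivalently to twirl over a unitary $1$-design on $A$.
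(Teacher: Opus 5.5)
Your proposal is correct and follows the paper's proof essentially step for step: you reduce to $M(1)M(-1)$ via Lemma~\ref{lem:Vgeneral}, bound $M(-1)\le\Tra[\Sig]=d_A$, and use the fixed-point equation to write $M(1)=e^{-\overline{H}^{\downarrow}_2(A|B)_{\hat\rho}}$. The only difference is that in the classical case you verify $M(1)\le 1$ by hand, where the paper cites non-negativity of the Petz conditional entropy. Your worry about the general case is unfounded, though. Twirling over the $d_A^2$ Heisenberg--Weyl unitaries on $A$ gives $\hat\rho_{AB}\le d_A\,\id_A\otimes\hat\rho_B$, and then the same $Y^\dagger Y$ versus $YY^\dagger$ argument you used in the classical case gives $\hat\rho^{1/2}\Sig^{-1}\hat\rho^{1/2}\le d_A\,\id$, hence $M(1)\le d_A$ directly. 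This makes the whole proof self-contained, with no need to cite the Petz dimension bound.
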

\begin{proof}
Write $\sigma\coloneqq\sigmaopt$ and $\Sig\coloneqq\id_A\otimes\sigma$, and let $\xi\coloneqq X^\alpha/Q$, with $X$ and $Q$ as in Definition~\ref{def:Valpha} for the pair $(\rho_{AB},\Sig)$. By Lemma~\ref{lem:Vgeneral} it suffices to show that $M(1)M(-1)\le d_A^2$ in general, and that $M(1)M(-1)\le d_A$ if $\rho_{AB}$ is classical on $A$.

We first bound $M(-1)$ in both cases. By our convention for powers, $\xi^0$ is the projection onto the support of $\xi$, so $\xi^0\le\id$, and since $\Sig\ge0$,
\begin{align}
M(-1)=\Tra\big[\xi^0\,\Sig\big]\;\le\;\Tra[\Sig]=\Tra[\id_A\otimes\sigma]=d_A.
\end{align}

Now $\xi$ is a state on $AB$, and by the fixed-point equation~\eqref{eq:fixedpoint} its marginal on $B$ is $\Tra_A[\xi]=\sigma$, i.e.\ $\Sig=\id_A\otimes\Tra_A[\xi]$. Hence
\begin{align}
M(1)=\Tra\big[\xi^2\,\Sig^{-1}\big] &= e^{\overline{D}_2(\xi\|\Sig)} \notag\\
&= e^{-\overline{H}^{\downarrow}_2(A|B)_{\xi}} \notag\\
&\le e^{\ln(d_A)} = d_A,
\end{align}
where the second equality uses that $\Sig$ is $\id_A$ tensored with the $B$-marginal of $\xi$, and the last inequality follows from~\cite{tomamichel2015quantum}. This gives $M(1)M(-1)\le d_A^2$, and hence~\eqref{eq:Vbound}.

If $\rho_{AB}$ is classical on $A$, then so is $\xi$, since $\Sig=\id_A\otimes\sigma$ does not mix the blocks $|a\rangle\!\langle a|_A\otimes\id_B$ (letting $\ket{a}$ denote the classical basis of $A$), and hence neither $X$ nor $X^\alpha$ does. As above,
\begin{align}
M(1)=\Tra\big[\xi^2\,\Sig^{-1}\big] &= e^{\overline{D}_2(\xi\|\Sig)} \notag\\
&= e^{-\overline{H}^{\downarrow}_2(A|B)_{\xi}} \notag\\
&\le e^{0} = 1,
\end{align}
where the inequality holds because the conditional entropy of a state classical on $A$ is non-negative~\cite{tomamichel2015quantum}. Together with $M(-1)\le d_A$, this gives~\eqref{eq:Vboundcq}.
\end{proof}

\begin{corollary}\label{cor:variance}
For any full-rank $\rho_{AB}\in S(AB)$,
\begin{align}
V(A|B)_\rho \;\le\; \ln^2\!\Big(d_A + \sqrt{d_A^2-1}\Big).
\end{align}
\end{corollary}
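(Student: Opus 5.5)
The plan is to obtain the corollary as the $\alpha\to1$ limit of the general bound~\eqref{eq:Vbound} in Proposition~\ref{prop:Vbounds}. Since $\rho_{AB}$ is full rank, $\rho_B>0$, and the proposition's hypothesis $\sigmaopt>0$ holds for every $\alpha\in(\frac12,1)\cup(1,\infty)$. (This is the case the proposition flags explicitly; if needed it follows from the fixed-point equation~\eqref{eq:fixedpoint}, because $X^\alpha$ is full rank whenever $\rho_{AB}$ is.) Hence, for all such $\alpha$, with $\Sig_\alpha\coloneqq\id_A\otimes\sigmaopt$,
\begin{align}
V_\alpha(\rho_{AB}\|\Sig_\alpha)\;\le\;\frac{\sinh^2\big(q\arccosh d_A\big)}{q^2},\qquad q=\frac{\alpha-1}{\alpha}.
\end{align}

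Next I let $\alpha\to1$, so $q\to0$, on both sides. On the left, Proposition~\ref{prop:Vlimit}(ii) applies because $\rho_{AB}>0$, and gives $V_\alpha(\rho_{AB}\|\Sig_\alpha)\to V(A|B)_\rho$. On the right, I use $\sinh(qc)/q\to c$ as $q\to0$ with $c=\arccosh d_A$, which is finite since $d_A\ge1$. The right-hand side therefore tends to $\arccosh^2 d_A$. Passing to the limit in the inequality gives $V(A|B)_\rho\le\arccosh^2(d_A)$.

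Finally, I rewrite this using the identity $\arccosh x=\ln\big(x+\sqrt{x^2-1}\big)$ for $x\ge1$, which gives the stated form $\ln^2\big(d_A+\sqrt{d_A^2-1}\big)$. Nothing here should be a real obstacle. The only delicate point is the convergence of the optimizer $\sigmaopt\to\rho_B$, and that is already packaged inside Proposition~\ref{prop:Vlimit}(ii), which is exactly why full rank is assumed.
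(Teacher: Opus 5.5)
Your proposal is correct and is essentially the paper's own proof: take $\alpha\to1$ in the general bound~\eqref{eq:Vbound} of Proposition~\ref{prop:Vbounds}, use Proposition~\ref{prop:Vlimit}(ii) for the left-hand side and $\sinh(qx)/q\to x$ on the right, then rewrite $\arccosh$ as a logarithm. One small point: your parenthetical fixed-point argument for $\sigmaopt>0$ is circular, since full rank of $X^\alpha$ already assumes $\Sig$ is full rank. It is also unnecessary, because Lemma~\ref{lem:opt} gives $\sigmaopt\gg\rho_B>0$ directly.
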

\begin{proof}
For general $\rho_{AB}$, let $\alpha\to1$ (so $q\to0$) in Proposition~\ref{prop:Vbounds}, using $V_\alpha\to V(A|B)_\rho$ (Proposition~\ref{prop:Vlimit}) and $\sinh(qx)/q\to x$, which gives $V(A|B)_\rho\le\arccosh^2(d_A) = \ln^2\big(d_A+\sqrt{d_A^2-1}\big)$. 

Note that since $V$ and the right-hand side both scale by $1/\ln^2 b$ when changing to base-$b$ logarithms, the bound holds in any base. 
\end{proof}

Since $d_A + \sqrt{d_A^2-1} < 2d_A$, Corollary~\ref{cor:variance} improves on the variance bounds of \cite[Corollary~III.5]{FD18}, which were $V(A|B)_\rho\le\ln^2(2d_A^2+1)$ for general states and $V(A|B)_\rho\le\ln^2(2d_A+1)$ when the state is classical on $A$.

\bigskip
\noindent Combining Lemma~\ref{lem:Hderiv} with Proposition~\ref{prop:Vbounds} gives our continuity bound.

\begin{theorem}\label{thm:continuity}
Let $\rho_{AB},\sigma_{AB}\in S(AB)$, $\alpha\in(\frac12,1)\cup(1,\infty)$ and $q=\frac{\alpha-1}{\alpha}$. Then
\begin{align}
\Big|\Hup(A|B)_\rho-\Hup(A|B)_\sigma\Big| \;\le\; 2\,\Lambda(\rho_{AB},\sigma_{AB})\,\frac{\sinh\big(|q|\arccosh d_A\big)}{|q|}. \label{eq:continuity}
\end{align}
If moreover $\rho_{AB}$ and $\sigma_{AB}$ are classical on $A$,
% with respect to the same orthonormal basis of $\cH_A$, 
then
\begin{align}
\Big|\Hup(A|B)_\rho-\Hup(A|B)_\sigma\Big| \;\le\; 2\,\Lambda(\rho_{AB},\sigma_{AB})\,\frac{\sinh\big(|q|\arccosh \sqrt{d_A}\big)}{|q|}. \label{eq:continuitycq}
\end{align}
\end{theorem}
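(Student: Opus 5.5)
The plan is to combine Lemma~\ref{lem:Hderiv} (or its classical-on-$A$ analogue built from Lemma~\ref{lem:pathcq}) with Proposition~\ref{prop:Vbounds}, then remove the regularization $\varepsilon\to0$. If $\theta=\Lambda(\rho_{AB},\sigma_{AB})=0$ then $\rho_{AB}=\sigma_{AB}$ and there is nothing to prove, so we assume $\theta>0$.

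First, fix a full-rank $\pi_{AB}$ and $\varepsilon\in(0,1)$. For the classical case we choose $\pi_{AB}$ classical on $A$, e.g.\ $\pi_{AB}=\id_{AB}/(d_Ad_B)$. Take the path $\omega_s$ and purifications $|\Psi^\varepsilon_s\rangle$ from Lemma~\ref{lem:path}, and set $\omega^\varepsilon_s=(1-\varepsilon)\omega_s+\varepsilon\pi_{AB}$. Each $\omega^\varepsilon_s$ is full rank, so its $B$-marginal is full rank. Since the optimizer satisfies $\sigma^\varepsilon_s\gg(\omega^\varepsilon_s)_B$ by Lemma~\ref{lem:opt}, we get $\sigma^\varepsilon_s>0$, and Proposition~\ref{prop:Vbounds} applies at every $s$. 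Lemma~\ref{lem:Hderiv} then gives, for $s\in(0,\theta)$,
\begin{align}
|h_\varepsilon'(s)|\le 2\sqrt{1-\varepsilon}\,\frac{\sinh(|q|\arccosh d_A)}{|q|},
\end{align}
using $\sqrt{\sinh^2(qx)}/|q|=\sinh(|q|x)/|q|$ for $x\ge0$.

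Second, since $h_\varepsilon$ is continuous on $[0,\theta]$ and differentiable on the interior, the mean value theorem yields
\begin{align}
|h_\varepsilon(\theta)-h_\varepsilon(0)|\le 2\theta\sqrt{1-\varepsilon}\,\frac{\sinh(|q|\arccosh d_A)}{|q|}.
\end{align}
For the classical-on-$A$ case, we repeat the argument verbatim with Lemma~\ref{lem:pathcq} in place of Lemma~\ref{lem:path}. The same Danskin argument as in Lemma~\ref{lem:Hderiv} goes through, because its proof used only the properties of the purifying path that are shared by both lemmas. Now every $\widetilde\omega^\varepsilon_s$ is classical on $A$, so the sharper bound~\eqref{eq:Vboundcq} applies uniformly along the path, which replaces $d_A$ by $\sqrt{d_A}$.

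Finally, we let $\varepsilon\to0$. Here $h_\varepsilon(0)=\Hup(A|B)_{(1-\varepsilon)\rho+\varepsilon\pi}$ and $h_\varepsilon(\theta)=\Hup(A|B)_{(1-\varepsilon)\sigma+\varepsilon\pi}$, so we need continuity of $\omega\mapsto\Hup(A|B)_\omega$ along the segment $\varepsilon\mapsto(1-\varepsilon)\rho+\varepsilon\pi$ as $\varepsilon\to0$. This is the main obstacle, since $\rho_{AB}$ may be rank-deficient.

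To handle it, I would invoke the known continuity of $\Hup$ on the full (finite-dimensional) state space. One route is that $\Hup$ is $-\min_{\tau_B}$ of a jointly continuous function of $(\omega,\tau_B)$ over the compact set $S(B)$, and a minimum over a compact set of a continuous function is continuous. For $\alpha>1$ the divergence can be infinite off-support, so there I would instead cite the uniform trace-distance continuity bounds for $\Hup$ (e.g.~\cite{bluhm2026unified}). Either way, both endpoint terms converge as $\varepsilon\to0$ and $\sqrt{1-\varepsilon}\to1$, which gives \eqref{eq:continuity} and \eqref{eq:continuitycq}.
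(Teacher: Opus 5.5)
Your proposal is correct and follows essentially the same route as the paper. You regularize with $\varepsilon\pi_{AB}$ (classical on $A$ in the cq case, together with the path of Lemma~\ref{lem:pathcq}), combine Lemma~\ref{lem:Hderiv} with Proposition~\ref{prop:Vbounds}, apply the mean value theorem, and let $\varepsilon\to0$. The only difference is the justification of continuity of $\Hup$ in the last step: the paper cites \cite{marwah2022uniform}, while your compactness argument for $\alpha<1$ (most cleanly applied to $\max_{\tau_B}\Tra[X^\alpha]$, which is finite and strictly positive, since $\Dsand$ itself can be $+\infty$ on orthogonal supports) together with a cited trace-distance bound for $\alpha>1$ works equally well.
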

\begin{proof}
Let $\theta=\Lambda(\rho_{AB},\sigma_{AB})$. If $\theta=0$ then $\rho_{AB}=\sigma_{AB}$ and there is nothing to prove, so assume $\theta>0$. Fix $\pi_{AB}>0$ and $\varepsilon\in(0,1)$, and let $\omega^\varepsilon_s$, $\Sig^\varepsilon_s$ and $h_\varepsilon$ be as in Lemma~\ref{lem:Hderiv}. Applying Proposition~\ref{prop:Vbounds} to the state $\omega^\varepsilon_s$ gives, for every $s\in(0,\theta)$,
\begin{align}
\sqrt{V_\alpha(\omega^\varepsilon_s\|\Sig^\varepsilon_s)} \;\le\; \frac{\sinh\big(|q|\arccosh d_A\big)}{|q|},
\end{align}
and the right-hand side does not depend on $s$. Together with~\eqref{eq:Hderiv},
\begin{align}
|h_\varepsilon'(s)| \;\le\; L\coloneqq 2\sqrt{1-\varepsilon}\,\frac{\sinh\big(|q|\arccosh d_A\big)}{|q|} \qquad\text{for all } s\in(0,\theta).
\end{align}
Since $h_\varepsilon$ is continuous on $[0,\theta]$ and differentiable on $(0,\theta)$, the mean value theorem gives $s^*\in(0,\theta)$ with $h_\varepsilon(\theta)-h_\varepsilon(0)=h_\varepsilon'(s^*)\,\theta$, and hence
\begin{align}
|h_\varepsilon(\theta)-h_\varepsilon(0)| \;\le\; L\,\theta.
\end{align}
As $\omega_0=\rho_{AB}$ and $\omega_\theta=\sigma_{AB}$, this reads
\begin{align}
\Big|\Hup(A|B)_{(1-\varepsilon)\sigma+\varepsilon\pi}-\Hup(A|B)_{(1-\varepsilon)\rho+\varepsilon\pi}\Big| \;\le\; 2\,\theta\,\sqrt{1-\varepsilon}\,\frac{\sinh\big(|q|\arccosh d_A\big)}{|q|}.
\end{align}
Letting $\varepsilon\to0$ and using the continuity of $\omega_{AB}\mapsto\Hup(A|B)_\omega$ (in terms of any norm) \cite{marwah2022uniform} gives~\eqref{eq:continuity}.

If $\rho_{AB}$ and $\sigma_{AB}$ are classical on $A$, we choose $\pi_{AB}$ classical on $A$ as well (e.g.\ $\pi_{AB}=\id_{AB}/(d_Ad_B)$), and replace the path of Lemma~\ref{lem:path} by that of Lemma~\ref{lem:pathcq}. Since the proof of Lemma~\ref{lem:Hderiv} only uses the properties of the path listed in Lemma~\ref{lem:path}, which the path of Lemma~\ref{lem:pathcq} shares, \eqref{eq:Hderiv} continues to hold. Moreover, $\omega^\varepsilon_s$ is now classical on $A$ for every $s$, so the classical-quantum bound~\eqref{eq:Vboundcq} of Proposition~\ref{prop:Vbounds} applies, and the same mean-value argument with $d_A$ replaced by $\sqrt{d_A}$ gives~\eqref{eq:continuitycq}.
\end{proof}
\begin{corollary}\label{cor:continuityvN}
Let $\rho_{AB},\sigma_{AB}\in S(AB)$. Then
\begin{align}
\big|H(A|B)_\rho-H(A|B)_\sigma\big| \;\le\; 2\,\Lambda(\rho_{AB},\sigma_{AB})\,\ln\!\Big(d_A+\sqrt{d_A^2-1}\Big). \label{eq:continuityvN}
\end{align}
If moreover $\rho_{AB}$ and $\sigma_{AB}$ are classical on $A$, 
% with respect to the same orthonormal basis of $\cH_A$, 
then
\begin{align}
\big|H(A|B)_\rho-H(A|B)_\sigma\big| \;\le\; 2\,\Lambda(\rho_{AB},\sigma_{AB})\,\ln\!\Big(\sqrt{d_A}+\sqrt{d_A-1}\Big). \label{eq:continuityvNcq}
\end{align}
\end{corollary}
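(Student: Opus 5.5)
Corollary~\ref{cor:continuityvN} is the $\alpha\to1$ limit of Theorem~\ref{thm:continuity}. Both sides of~\eqref{eq:continuity} behave well as $q\to0$: $\sinh(|q|x)/|q|\to x$ with $x=\arccosh d_A=\ln(d_A+\sqrt{d_A^2-1})$ (respectively $x=\arccosh\sqrt{d_A}=\ln(\sqrt{d_A}+\sqrt{d_A-1})$ in the classical case). So it suffices to show that $\Hup(A|B)_\omega\to H(A|B)_\omega$ as $\alpha\to1$ for every fixed state $\omega_{AB}$. Then one takes the limit on both sides of the inequality, applied to the fixed pair $\rho_{AB},\sigma_{AB}$.

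For the convergence, I would use $\Hup(A|B)_\omega=-\min_{\tau_B}\Dsand(\omega_{AB}\|\id_A\otimes\tau_B)$ and two facts:
\begin{itemize}
\item $\Dsand$ is monotone nondecreasing in $\alpha$;
\item $\Dsand\to D$ as $\alpha\to1$ pointwise~\cite{tomamichel2015quantum}.
\end{itemize}
Monotonicity in $\alpha$ shows that $\Hup$ is monotone in $\alpha$, so the one-sided limits exist. The standard result $\lim_{\alpha\to1}\Hup(A|B)_\omega=H(A|B)_\omega$ is in~\cite{tomamichel2015quantum}, and I would simply cite it. As a sanity check, it also follows from the fixed-point characterization: $\sigmaopt$ is continuous at $\alpha=1$ with limit $\omega_B$ (the continuity fact already used in Proposition~\ref{prop:Vlimit}~(ii)). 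Combined with joint continuity of $\Dsand(\omega\|\Sig)$ in $(\alpha,\Sig)$ near $\Sig=\id_A\otimes\omega_B$ when $\omega$ is full rank, this gives $\Hup\to-D(\omega\|\id_A\otimes\omega_B)=H(A|B)_\omega$.

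The main obstacle is the case of rank-deficient $\rho$ or $\sigma$, where the continuity argument around $\sigmaopt$ is delicate. I would sidestep it as in the proof of Theorem~\ref{thm:continuity}. First replace $\rho,\sigma$ by $\rho^\varepsilon=(1-\varepsilon)\rho+\varepsilon\pi$ and $\sigma^\varepsilon=(1-\varepsilon)\sigma+\varepsilon\pi$ with $\pi$ full rank (classical on $A$ in the second case). Apply Theorem~\ref{thm:continuity} to these full-rank states and take $\alpha\to1$ using the full-rank convergence above.

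It remains to remove the regularization. Joint concavity of fidelity gives $F(\rho^\varepsilon,\sigma^\varepsilon)\ge(1-\varepsilon)F(\rho,\sigma)+\varepsilon$, so $\limsup_{\varepsilon\to0}\Lambda(\rho^\varepsilon,\sigma^\varepsilon)\le\Lambda(\rho,\sigma)$. Alternatively, continuity of fidelity gives this directly. Continuity of $H(A|B)$ in trace norm (Alicki--Fannes--Winter) then yields $H(A|B)_{\rho^\varepsilon}\to H(A|B)_\rho$, and likewise for $\sigma$. Letting $\varepsilon\to0$ gives~\eqref{eq:continuityvN} and~\eqref{eq:continuityvNcq}.
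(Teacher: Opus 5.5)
Your proposal is correct and matches the paper's proof. The paper takes $\alpha\to1$ in Theorem~\ref{thm:continuity}, uses $\sinh(|q|x)/|q|\to x$ on the right-hand side and cites~\cite{tomamichel2015quantum} for $\Hup(A|B)_\omega\to H(A|B)_\omega$ on the left, with the closed forms from $\arccosh x=\ln(x+\sqrt{x^2-1})$; your $\varepsilon$-regularization and fixed-point sanity check are valid but unnecessary, since the cited convergence already holds for every state $\omega_{AB}$, rank-deficient ones included.
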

\begin{proof}
Let $\alpha\to1$ in Theorem~\ref{thm:continuity}. Then $q\to0$, and $\sinh(|q|x)/|q|\to x$ for every $x\ge0$, so the right-hand sides of~\eqref{eq:continuity} and~\eqref{eq:continuitycq} tend to those of~\eqref{eq:continuityvN} and~\eqref{eq:continuityvNcq}. On the left-hand sides, $\Hup(A|B)_\omega\to H(A|B)_\omega$ as $\alpha\to1$ for every $\omega_{AB}\in S(AB)$ \cite{tomamichel2015quantum}. The closed forms follow from $\arccosh x=\ln\big(x+\sqrt{x^2-1}\big)$.
\end{proof}

Theorem~\ref{thm:intro} in the introduction is simply a summary of Theorem~\ref{thm:continuity} and Corollary~\ref{cor:continuityvN}.

\section*{Acknowledgements}
M.S.T.\ acknowledges support from the National Quantum Scholarships Scheme (Masters) scholarship. 
We thank the many people who have discussed and considered this question over the past years, including 
Mario Berta,
Andreas Bluhm,
Cheng Hao-Chung, 
Omar Fawzi,
Michael Jabbour,
Ludovico Lami,
Bartosz Regula,
Pavel Sekatski,
and
Marco Tomamichel.

\section*{Statement on AI use}
Large Language Models (LLMs) were used in the exploration of proof strategies, refining proofs, and preparation of the manuscript. The authors take responsibility for any errors. 

% \section{Comparison of results for Shannon entropy}
% \label{sec:compare_shannon}

\appendix
\section{Proof of auxiliary lemmas}\label{app:proofs}

\pathlemma*

\begin{proof}
By Uhlmann's theorem, and multiplying $|\Psi_\theta\rangle$ by a suitable phase, we get purifications $|\Psi_0\rangle_{ABC}$ of $\rho_{AB}$ and $|\Psi_\theta\rangle_{ABC}$ of $\sigma_{AB}$ such that
\begin{align}
\langle\Psi_0|\Psi_\theta\rangle = F(\rho_{AB},\sigma_{AB}) = \cos\theta .
\end{align}
Now let
\begin{align}
|\Psi_0^\perp\rangle_{ABC} \coloneqq \frac{|\Psi_\theta\rangle-\cos\theta\,|\Psi_0\rangle}{\big\|\,|\Psi_\theta\rangle-\cos\theta\,|\Psi_0\rangle\big\|}.
\end{align}
Then we have $\langle\Psi_0|\Psi_0^\perp\rangle=0$ by construction. Let
\begin{align}
|\Psi_s\rangle_{ABC} \coloneqq \cos s\,|\Psi_0\rangle + \sin s\,|\Psi_0^\perp\rangle \qquad\text{for } s\in[0,\theta].
\end{align}
This is a (manifestly smooth) path of vectors with
\begin{align}
\Tra_C\big[|\Psi_0\rangle\!\langle\Psi_0|_{ABC}\big]=\rho_{AB},\qquad
\Tra_C\big[|\Psi_\theta\rangle\!\langle\Psi_\theta|_{ABC}\big]=\sigma_{AB},\qquad
\|\Psi_s\|=1 \ \text{ for all } s\in[0,\theta].
\end{align}
Indeed, since $\langle\Psi_0|\Psi_\theta\rangle=\cos\theta$ is real and both vectors are unit vectors,
\begin{align}
\big\|\,|\Psi_\theta\rangle-\cos\theta\,|\Psi_0\rangle\big\|^2
&=\langle\Psi_\theta|\Psi_\theta\rangle-2\cos\theta\,\langle\Psi_0|\Psi_\theta\rangle+\cos^2\theta\,\langle\Psi_0|\Psi_0\rangle \notag\\
&=1-2\cos^2\theta+\cos^2\theta \notag\\
&=\sin^2\theta ,
\end{align}
which is non-zero since we may assume $\theta>0$ (if $\theta=0$ then $\rho_{AB}=\sigma_{AB}$ and there is nothing to prove). So $|\Psi_0^\perp\rangle=\big(|\Psi_\theta\rangle-\cos\theta\,|\Psi_0\rangle\big)/\sin\theta$, and $|\Psi_0\rangle,|\Psi_0^\perp\rangle$ are orthonormal. In particular, at $s=\theta$,
\begin{align}
\cos\theta\,|\Psi_0\rangle+\sin\theta\,|\Psi_0^\perp\rangle
&=\cos\theta\,|\Psi_0\rangle+|\Psi_\theta\rangle-\cos\theta\,|\Psi_0\rangle \notag\\
&=|\Psi_\theta\rangle ,
\end{align}
so the path ends at the purification of $\sigma_{AB}$, and $\|\Psi_s\|^2=\cos^2s+\sin^2s=1$ by orthonormality. We set $\omega_s\coloneqq\Tra_C\big[|\Psi_s\rangle\!\langle\Psi_s|_{ABC}\big]$.

Now let $|\Pi\rangle_{ABC}$ be a purification of $\pi_{AB}$, consider a qubit $C'$ with orthonormal basis $\{|0\rangle,|1\rangle\}$, and define
\begin{align}
|\Psi^\varepsilon_s\rangle_{ABCC'} \coloneqq \sqrt{1-\varepsilon}\,|\Psi_s\rangle_{ABC}|0\rangle_{C'} + \sqrt{\varepsilon}\,|\Pi\rangle_{ABC}|1\rangle_{C'} .
\end{align}
So
\begin{align}
\Tra_{CC'}\big[|\Psi^\varepsilon_s\rangle\!\langle\Psi^\varepsilon_s|\big] = (1-\varepsilon)\,\omega_s + \varepsilon\,\pi_{AB},
\end{align}
and the other properties hold with analogous checks.
\end{proof}
\pathlemmacq*
\begin{proof}
Let $\rho_{AB}$ and $\sigma_{AB}$ be classical on $A$ with respect to $\{|a\rangle\}_a$, and let $\omega_s$ and $|\Psi^\varepsilon_s\rangle_{ABCC'}$ be the path and purification constructed in the proof of Lemma~\ref{lem:path}. The states $\omega_s$ need not be classical on $A$, so we pinch with respect to the basis $\{|a\rangle\}_a$. Consider the pinching map
\begin{align}
\mathcal{P}(X_{AB})\coloneqq\sum_a\big(|a\rangle\!\langle a|_A\otimes\id_B\big)\,X_{AB}\,\big(|a\rangle\!\langle a|_A\otimes\id_B\big).
\end{align}
Then $\mathcal{P}(X_{AB})$ is classical on $A$ for every $X_{AB}$, and $\mathcal{P}$ fixes every state that is classical on $A$ with respect to $\{|a\rangle\}_a$. In particular $\mathcal{P}(\rho_{AB})=\rho_{AB}$, $\mathcal{P}(\sigma_{AB})=\sigma_{AB}$ and $\mathcal{P}(\pi_{AB})=\pi_{AB}$. A Stinespring dilation of $\mathcal{P}$ is given by the isometry $V:\cH_A\to\cH_A\otimes\cH_{A'}$, $V|a\rangle_A=|a\rangle_A|a\rangle_{A'}$, where $A'$ is a copy of $A$. Now let
\begin{align}
\widetilde\omega_s\coloneqq\mathcal{P}(\omega_s),\qquad
|\widetilde\Psi^\varepsilon_s\rangle_{ABCC'A'}\coloneqq(V\otimes\id_{BCC'})\,|\Psi^\varepsilon_s\rangle_{ABCC'} ,
\end{align}
with purifying system $CC'A'$. Then $\widetilde\omega_s$ is classical on $A$ for every $s$, $\widetilde\omega_0=\mathcal{P}(\rho_{AB})=\rho_{AB}$ and $\widetilde\omega_\theta=\mathcal{P}(\sigma_{AB})=\sigma_{AB}$. Both paths are smooth, since $\mathcal{P}$ and $V$ are linear and do not depend on $s$. So
\begin{align}
\Tra_{CC'A'}\big[|\widetilde\Psi^\varepsilon_s\rangle\!\langle\widetilde\Psi^\varepsilon_s|\big]
&=\Tra_{A'}\Big[(V\otimes\id_B)\,\Tra_{CC'}\big[|\Psi^\varepsilon_s\rangle\!\langle\Psi^\varepsilon_s|\big]\,(V\otimes\id_B)^\dagger\Big] \notag\\
&=\mathcal{P}\big((1-\varepsilon)\,\omega_s+\varepsilon\,\pi_{AB}\big) \notag\\
&=(1-\varepsilon)\,\widetilde\omega_s+\varepsilon\,\pi_{AB}.
\end{align}
Finally, $V\otimes\id_{BCC'}$ is an isometry that does not depend on $s$, so $|\dot{\widetilde\Psi}{}^\varepsilon_s\rangle=(V\otimes\id_{BCC'})|\dot\Psi^\varepsilon_s\rangle$, and since isometries preserve inner products,
\begin{align}
\|\dot{\widetilde\Psi}{}^\varepsilon_s\|=\|\dot\Psi^\varepsilon_s\|=\sqrt{1-\varepsilon},\qquad
\langle\widetilde\Psi^\varepsilon_s|\dot{\widetilde\Psi}{}^\varepsilon_s\rangle=\langle\Psi^\varepsilon_s|\dot\Psi^\varepsilon_s\rangle=0. \qedhere
\end{align}
\end{proof}

\momentlemma*
\begin{proof}
Since $\cosh$ is even, $f_q=f_{|q|}$, and $f_0\equiv1$, so we may assume $q\in(0,1]$. Substituting $x=\cosh t$ with $t>0$, we have $f_q(\cosh t)=\cosh(qt)$, and hence
\begin{align}
\frac{d f_q}{dx}=\frac{q\sinh(qt)}{\sinh t}\;\ge\;0,
\qquad
\frac{d^2 f_q}{dx^2}=\frac{q\big(q\cosh(qt)\sinh t-\sinh(qt)\cosh t\big)}{\sinh^3 t}.
\end{align}
Since $\tanh$ is concave on $[0,\infty)$ with $\tanh(0)=0$, we have $\tanh(qt)\ge q\tanh t$, i.e.\ $\sinh(qt)\cosh t\ge q\cosh(qt)\sinh t$, so the second derivative is non-positive. Thus $f_q$ is increasing and concave on $[1,\infty)$. Finally, $\arccosh(\cosh Z)=|Z|$, so $\cosh(qZ)=\cosh(q|Z|)=f_q(\cosh Z)$, and Jensen's inequality for the concave function $f_q$ finishes the proof. 
\end{proof}
\section{Continuity of ``quasi-entropy-type'' quantities}
\label{app:quasient}
In this section we discuss continuity bounds for a particular exponentiation of the {\Renyi} entropy, namely (for $q=\frac{\alpha-1}{\alpha}$):
\begin{equation}
  R_\alpha(A|B)_\omega \coloneqq e^{-q\widetilde{H}^\uparrow_\alpha(A|B)_\omega}.
\end{equation}
This quantity is independent of the base used to define logarithms and entropies (as long as the base is used consistently), i.e.~it can be equivalently written as
\begin{equation}
  R_\alpha(A|B)_\omega = 2^{-q\widetilde{H}^\uparrow_{\alpha, \text{base-$2$}}(A|B)_\omega},
\end{equation}
where $\widetilde{H}^\uparrow_{\alpha, \text{base-$2$}}$ denotes the base-$2$ entropy.

This quantity often arises in QKD applications; see e.g.~\cite{kamin2025renyi}. Qualitatively, it is roughly similar to what is sometimes called a ``quasi-entropy'', namely the argument of the logarithm in the {\Renyi} divergences (after specializing to conditional entropies). However, we have a slightly different factor $q=\frac{\alpha-1}{\alpha}$ in the exponent, which roughly arises from the fact that we considered the ``uparrow'' version of the entropy.

 The overall strategy is the same as for the entropic continuity bound: we purify, follow the path of Lemma~\ref{lem:path} (or Lemma~\ref{lem:pathcq}), and bound the derivative along it. The difference is that we do not bound $V_\alpha$ by a constant. Instead, Lemma~\ref{lem:Rvariance} below keeps track of the term $M(-q)$ that is discarded in the proof of Lemma~\ref{lem:Vgeneral}. Since $M(-q)=1/R_\alpha(A|B)_\omega$, this gives a bound on $V_\alpha$ that depends on the current value of $R_\alpha(A|B)_\omega$, and the resulting differential inequality can be integrated exactly.

Throughout this section, we take $\alpha\in(\frac12,1)\cup(1,\infty)$, $q=\frac{\alpha-1}{\alpha}$, $\chi=\frac12$ if the states under consideration are classical on $A$ 
% with respect to the same orthonormal basis of $\cH_A$ 
and $\chi=1$ otherwise, and
\begin{equation}\label{eq:cdefn}
  c \coloneqq \cosh\big(q\arccosh d_A^{\chi}\big),
  \qquad\text{so that}\qquad
  c\pm\sqrt{c^2-1}=e^{\pm|q|\arccosh d_A^{\chi}} .
\end{equation}
Since $q\neq0$ and $d_A^\chi>1$, we have $c>1$. To avoid some singularities in the subsequent expressions, we also restrict to $d_A \geq 2$.

\begin{lemma}\label{lem:Rvariance}
Let $\omega_{AB}\in S(AB)$ with $\omega_{AB}>0$, and let $\Sig\coloneqq\id_A\otimes\sigmaopt$, where $\sigmaopt$ is the optimizer of Lemma~\ref{lem:opt} for $\omega_{AB}$. Then (for $\chi=1/2$ if $\omega$ is classical on $A$, and $\chi=1$ otherwise):
\begin{equation}
  q^2\,d_A^{2q(1-\chi)}\,R_\alpha(A|B)_\omega^2\;V_\alpha(\omega_{AB}\|\Sig)
  \;\le\;(c^2-1)-\big(d_A^{q(1-\chi)}R_\alpha(A|B)_\omega-c\big)^2 ,
\end{equation}
for $c$ as defined in~\eqref{eq:cdefn}.
\end{lemma}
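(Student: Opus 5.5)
The plan is to reduce the claim to a linear inequality in $M(q)$ and $M(-q)$, and then reuse the Nussbaum--Szko{\l}a, Jensen and AM-GM argument from Lemma~\ref{lem:Vgeneral}, this time with a specific shift $\mu$ instead of the optimal one. Throughout, write $k\coloneqq d_A^{q(1-\chi)}$ and $R\coloneqq R_\alpha(A|B)_\omega$.

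\textbf{Step 1: identify $M(-q)$ with $1/R$.} With $Q=Q(\omega_{AB}\|\Sig)$ we have $Q^{1-q}=Q^{1/\alpha}=e^{\frac{\alpha-1}{\alpha}\Dsand(\omega\|\Sig)}$. By~\eqref{eq:Hupopt} this equals $e^{-q\Hup(A|B)_\omega}=R$. Hence $M(-q)=1/R$, which is the identity noted in the proof of Lemma~\ref{lem:product}. Combining Lemma~\ref{lem:product} with~\eqref{eq:Valpha} gives $q^2V_\alpha=M(q)M(-q)-1=M(q)/R-1$, and therefore
\begin{align}
q^2k^2R^2V_\alpha = k^2R\,M(q)-k^2R^2 .
\end{align}
Expanding the right-hand side of the claim gives $(c^2-1)-(kR-c)^2=2ckR-1-k^2R^2$. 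The $k^2R^2$ terms cancel, and dividing by $kR>0$ shows the lemma is equivalent to
\begin{align}
k\,M(q)+k^{-1}M(-q)\;\le\;2c. \label{eq:planRtarget}
\end{align}

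\textbf{Step 2: Jensen with a fixed shift.} As in Lemma~\ref{lem:Vgeneral}, pass to the Nussbaum--Szko{\l}a distribution $P$ and variable $Z$, so that $M(t)=\EE_P[e^{tZ}]$ for $t\in[-1,1]$. Choose $\mu$ by $e^{-q\mu}=k$, that is $\mu=-(1-\chi)\ln d_A$, which is independent of $q$. Then the left side of~\eqref{eq:planRtarget} equals $2\EE_P\cosh(q(Z-\mu))$. By Lemma~\ref{lem:moment},
\begin{align}
2\EE_P\cosh(q(Z-\mu))\le 2f_q\big(\tfrac12(e^{-\mu}M(1)+e^{\mu}M(-1))\big).
\end{align}
Since $c=f_q(d_A^\chi)$ and $f_q$ is increasing, it suffices to show
\begin{align}
\tfrac12\big(e^{-\mu}M(1)+e^{\mu}M(-1)\big)\le d_A^\chi .
\end{align}

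\textbf{Step 3: plug in the bounds on $M(\pm1)$.} The proof of Proposition~\ref{prop:Vbounds} gives $M(-1)\le d_A$, together with $M(1)\le d_A$ in general and $M(1)\le1$ when $\omega$ is classical on $A$. The hypothesis $\omega_{AB}>0$ guarantees $\sigmaopt>0$, so those bounds apply.
\begin{itemize}
\item For $\chi=1$: $\mu=0$ and $k=1$, so the quantity is at most $\tfrac12(d_A+d_A)=d_A$.
\item For $\chi=\tfrac12$: $e^{-\mu}=\sqrt{d_A}$, so the quantity is at most $\tfrac12\big(\sqrt{d_A}\cdot1+d_A/\sqrt{d_A}\big)=\sqrt{d_A}$.
\end{itemize}
Both cases give exactly $d_A^\chi$, which proves~\eqref{eq:planRtarget}.

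I expect the main obstacle to be recognising that the quadratic form on the right-hand side collapses to the linear condition~\eqref{eq:planRtarget}. The factor $k$ is precisely the AM-GM weight $e^{-q\mu}$, with $\mu$ chosen to balance the classical-quantum asymmetry $M(1)\le1$ versus $M(-1)\le d_A$. Once that is seen, the rest is bookkeeping. One point to check is that the sign of $q$ causes no trouble: $\cosh$ is even, $f_q=f_{|q|}$, and $\mu$ does not depend on $q$, so the argument is uniform in the sign of $q$.
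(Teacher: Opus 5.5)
Your proposal is correct and follows the paper's proof essentially step for step. It uses the same identification $M(-q)=1/R_\alpha(A|B)_\omega$, the same shift $\mu=(\chi-1)\ln d_A$, and the same application of Lemma~\ref{lem:moment} with the bounds on $M(\pm1)$ from Proposition~\ref{prop:Vbounds}. The only difference is the order: you reduce the quadratic claim to the equivalent linear inequality $kM(q)+k^{-1}M(-q)\le 2c$ and then prove it, whereas the paper proves that linear inequality and then multiplies through by $e^{q\mu}M(-q)$ to recover the quadratic form.
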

\begin{proof}
Since $\omega_{AB}>0$ we have $\sigmaopt>0$. Let $X$, $Q$, $M$, $P$ and $Z$ be as in Lemma~\ref{lem:Vgeneral} and its proof, for the state $\omega_{AB}$. By the proof of Lemma~\ref{lem:product}, $M(-q)=Q^{-(1-q)}=Q^{-1/\alpha}$, and by~\eqref{eq:Hupopt}, $Q=e^{(\alpha-1)\Dsand(\omega_{AB}\|\Sig)}=e^{-(\alpha-1)\Hup(A|B)_\omega}$. Hence
\begin{equation}
  M(-q)=e^{q\Hup(A|B)_\omega}=\frac{1}{R_\alpha(A|B)_\omega}.
\end{equation}
Let $\mu\coloneqq(\chi-1)\ln d_A$. By the proof of Proposition~\ref{prop:Vbounds}, $M(-1)\le d_A$ and $M(1)\le d_A$, and moreover $M(1)\le1$ if $\omega_{AB}$ is classical on $A$. In both cases,
\begin{equation}
  \EE_P\big[\cosh(Z-\mu)\big]=\tfrac12\big(e^{-\mu}M(1)+e^{\mu}M(-1)\big)\le d_A^{\chi}.
\end{equation}
Since $e^{-q\mu}M(q)=\EE_P[e^{q(Z-\mu)}]$ and $e^{q\mu}M(-q)=\EE_P[e^{-q(Z-\mu)}]$, Lemma~\ref{lem:moment} and the monotonicity of $x\mapsto\cosh(q\arccosh x)$ give
\begin{align}
  e^{-q\mu}M(q)+e^{q\mu}M(-q)
  &=2\,\EE_P\big[\cosh\big(q(Z-\mu)\big)\big] \notag\\
  &\le2\cosh\Big(q\arccosh\EE_P\big[\cosh(Z-\mu)\big]\Big) \notag\\
  &\le2\cosh\big(q\arccosh d_A^{\chi}\big) \notag\\
  &=2c .
\end{align}
Hence $e^{-q\mu}M(q)\le2c-e^{q\mu}M(-q)$. Multiplying by $e^{q\mu}M(-q)\ge0$ and using Lemma~\ref{lem:product},
\begin{equation}
  \Tra\big[\omega_{AB}Y_\alpha(\omega_{AB}\|\Sig)^2\big]=M(q)M(-q)\le e^{q\mu}M(-q)\big(2c-e^{q\mu}M(-q)\big).
\end{equation}
Now $e^{q\mu}M(-q)=1/\big(d_A^{q(1-\chi)}R_\alpha(A|B)_\omega\big)$ and $q^2V_\alpha(\omega_{AB}\|\Sig)=\Tra\big[\omega_{AB}Y_\alpha(\omega_{AB}\|\Sig)^2\big]-1$. Multiplying by $d_A^{2q(1-\chi)}R_\alpha(A|B)_\omega^2$ therefore gives
\begin{align}
  q^2\,d_A^{2q(1-\chi)}R_\alpha(A|B)_\omega^2\,V_\alpha(\omega_{AB}\|\Sig)
  &\le2c\,d_A^{q(1-\chi)}R_\alpha(A|B)_\omega-1-d_A^{2q(1-\chi)}R_\alpha(A|B)_\omega^2 \notag\\
  &=(c^2-1)-\big(d_A^{q(1-\chi)}R_\alpha(A|B)_\omega-c\big)^2 . \qedhere
\end{align}
\end{proof}

\begin{proposition}\label{prop:Rarcsin}
Let $\rho_{AB},\sigma_{AB}\in S(AB)$. Then (for $\chi=1/2$ if both states are classical on $A$, and $\chi=1$ otherwise):
\begin{equation}\label{eq:Rarcsin}
  \left|\arcsin\frac{d_A^{q(1-\chi)}R_\alpha(A|B)_\rho-c}{\sqrt{c^2-1}}-\arcsin\frac{d_A^{q(1-\chi)}R_\alpha(A|B)_\sigma-c}{\sqrt{c^2-1}}\right|
  \;\le\;2\Lambda(\rho_{AB},\sigma_{AB}),
\end{equation}
for $c$ as defined in~\eqref{eq:cdefn}.
\end{proposition}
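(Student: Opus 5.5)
We follow the same path-plus-regularization scheme as in the proof of Theorem~\ref{thm:continuity}. Set $g(\omega)\coloneqq \arcsin\big((d_A^{q(1-\chi)}R_\alpha(A|B)_\omega-c)/\sqrt{c^2-1}\big)$. Assume $\theta=\Lambda(\rho_{AB},\sigma_{AB})>0$, and fix $\pi_{AB}>0$ (classical on $A$ when $\chi=1/2$) and $\varepsilon\in(0,1)$. Take the path $\omega^\varepsilon_s$ of Lemma~\ref{lem:path} (respectively Lemma~\ref{lem:pathcq}), and write $r_\varepsilon(s)\coloneqq d_A^{q(1-\chi)}R_\alpha(A|B)_{\omega^\varepsilon_s}=d_A^{q(1-\chi)}e^{-qh_\varepsilon(s)}$. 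Lemma~\ref{lem:Rvariance}, applied to the full-rank state $\omega^\varepsilon_s$ (which is classical on $A$ in the cq case), gives $(c^2-1)-(r_\varepsilon(s)-c)^2\ge q^2r_\varepsilon^2V_\alpha\ge0$. Hence $u_\varepsilon(s)\coloneqq (r_\varepsilon(s)-c)/\sqrt{c^2-1}\in[-1,1]$, and $\arcsin u_\varepsilon$ is well defined.

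By Lemma~\ref{lem:Hderiv}, $h_\varepsilon$ is continuous on $[0,\theta]$ and differentiable on $(0,\theta)$, so the same holds for $r_\varepsilon$, with
\[
|r_\varepsilon'(s)|=|q|\,r_\varepsilon(s)\,|h_\varepsilon'(s)|\le 2\sqrt{1-\varepsilon}\,|q|\,r_\varepsilon(s)\sqrt{V_\alpha(\omega^\varepsilon_s\|\Sig^\varepsilon_s)}.
\]
Combining this with Lemma~\ref{lem:Rvariance} yields the differential inequality
\[
|r_\varepsilon'(s)|\le 2\sqrt{1-\varepsilon}\sqrt{(c^2-1)-(r_\varepsilon(s)-c)^2},
\qquad\text{equivalently}\qquad
|u_\varepsilon'(s)|\le 2\sqrt{1-\varepsilon}\sqrt{1-u_\varepsilon(s)^2}.
\]
Wherever $|u_\varepsilon(s)|<1$, the chain rule then gives $|\frac{d}{ds}\arcsin u_\varepsilon(s)|\le 2\sqrt{1-\varepsilon}$.

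The main obstacle is the set of points where $|u_\varepsilon|=1$, since $\arcsin$ is not differentiable at $\pm1$ and the mean value theorem cannot be applied directly. We would avoid this as follows. The function $\phi(s)\coloneqq\arcsin u_\varepsilon(s)$ is continuous on $[0,\theta]$. On any open subinterval where $|u_\varepsilon|<1$, it is differentiable with derivative bounded by $2\sqrt{1-\varepsilon}$, so it is $2\sqrt{1-\varepsilon}$-Lipschitz there. To pass to the whole interval, compare $\phi$ against the linear barrier $\psi(s)=\phi(0)\pm 2\sqrt{1-\varepsilon}(1+\eta)s$ by a first-crossing argument. At a first crossing point where $|u|<1$, the derivative bound gives a contradiction. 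At a point where $u=\pm1$ (so $\phi=\pm\pi/2$), the crossing would require $\phi$ to exceed $\pi/2$ (or fall below $-\pi/2$), which is impossible. Alternatively, one can argue that the closed set $\{|u_\varepsilon|=1\}$ contributes nothing, since $\phi$ is constant on it and its complement is a countable union of open intervals on which the Lipschitz bound holds; the total variation then sums to at most $2\sqrt{1-\varepsilon}\,\theta$. Either way, $|\phi(\theta)-\phi(0)|\le 2\sqrt{1-\varepsilon}\,\theta$.

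Finally, let $\varepsilon\to0$. Because $\Hup(A|B)$ is continuous in the state \cite{marwah2022uniform} and $\arcsin$ is continuous on $[-1,1]$, we get $\phi(0)\to g(\rho_{AB})$ and $\phi(\theta)\to g(\sigma_{AB})$. Since the arguments $u_\varepsilon$ lie in $[-1,1]$, so do their limits, and in particular $g(\rho),g(\sigma)$ are well defined. This yields~\eqref{eq:Rarcsin}.
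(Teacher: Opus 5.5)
Your proof is correct and follows the paper's scheme: the regularized path from Lemma~\ref{lem:path} or Lemma~\ref{lem:pathcq}, Lemma~\ref{lem:Hderiv}, Lemma~\ref{lem:Rvariance}, the chain rule for the arcsine, the mean value theorem, and $\varepsilon\to0$. The one genuine difference is how you treat the endpoints $u=\pm1$ of the arcsine.

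The paper shows this obstacle never arises. Since $-\ln d_A\le\Hup(A|B)_\omega\le\ln d_A$, and $\Hup(A|B)_\omega\ge0$ when $\omega_{AB}$ is classical on $A$, one has $d_A^{q(1-\chi)}R_\alpha(A|B)_\omega\in[d_A^{-|q|\chi},d_A^{|q|\chi}]$ for every state. Because $\arccosh x>\ln x$ for $x>1$, this interval lies strictly inside $(c-\sqrt{c^2-1},\,c+\sqrt{c^2-1})=(e^{-|q|\arccosh d_A^{\chi}},\,e^{|q|\arccosh d_A^{\chi}})$. Hence $|u_\varepsilon(s)|<1$ for all $s$, so $\arcsin u_\varepsilon$ is differentiable on $(0,\theta)$ and the mean value theorem applies verbatim. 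The endpoint values $g(\rho_{AB}),g(\sigma_{AB})$ are then well defined directly rather than through a limit.

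Your route needs only the weaker fact $|u_\varepsilon|\le1$, which you extract from Lemma~\ref{lem:Rvariance} itself. It therefore dispenses with these a priori entropy bounds, but at the price of a real-analysis step that is more delicate than your sketch suggests. If you keep it, tighten two points.

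\textbf{Barrier version.} The upper barrier can also be first crossed at a point where $u_\varepsilon=-1$. This forces $s_0=0$ and $\phi(0)=-\pi/2$ (and symmetrically $u_\varepsilon=1$ for the lower barrier). Your case analysis does not cover this. There you must argue, for small $s$, from the last point of $[0,s]$ at which $u_\varepsilon=-1$; beyond it $|u_\varepsilon|<1$ and the Lipschitz bound applies.

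\textbf{Decomposition version.} $\phi$ is not constant on $\{|u_\varepsilon|=1\}$; it only takes values in $\{\pm\pi/2\}$. You need the observation that any passage between $-\pi/2$ and $\pi/2$ occurs across a single complementary interval, on which $\phi$ is $2\sqrt{1-\varepsilon}$-Lipschitz. With this, splitting any $[a,b]$ at the first and last points of $\{|u_\varepsilon|=1\}$ in it shows that $\phi$ is $2\sqrt{1-\varepsilon}$-Lipschitz on all of $[0,\theta]$, and your conclusion follows.
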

\begin{proof}
We first check that the arcsine is differentiable at every value that occurs. We have $-\ln d_A\le\Hup(A|B)_\omega\le\ln d_A$, and $\Hup(A|B)_\omega\ge0$ if $\omega_{AB}$ is classical on $A$~\cite{tomamichel2015quantum}. Hence $d_A^{q(1-\chi)}R_\alpha(A|B)_\omega\in\big[d_A^{-|q|\chi},d_A^{|q|\chi}\big]$ in both cases. Since $\arccosh x=\ln\big(x+\sqrt{x^2-1}\big)>\ln x$ for $x>1$, this interval lies strictly inside $\big(e^{-|q|\arccosh d_A^\chi},e^{|q|\arccosh d_A^\chi}\big)=\big(c-\sqrt{c^2-1},\,c+\sqrt{c^2-1}\big)$.

Let $\theta\coloneqq\Lambda(\rho_{AB},\sigma_{AB})$. If $\theta=0$ there is nothing to prove, so assume $\theta>0$. Fix $\pi_{AB}>0$ and $\varepsilon\in(0,1)$, and let $\omega^\varepsilon_s$, $\Sig^\varepsilon_s=\id_A\otimes\sigma^\varepsilon_s$ and $h_\varepsilon$ be as in Lemma~\ref{lem:Hderiv}. If $\rho,\sigma$ are both classical on $A$, take $\pi_{AB}=\id_{AB}/(d_Ad_B)$ and the path of Lemma~\ref{lem:pathcq} instead of that of Lemma~\ref{lem:path}, so that every $\omega^\varepsilon_s$ is classical on $A$; as noted in the proof of Theorem~\ref{thm:continuity}, the conclusions of Lemma~\ref{lem:Hderiv} continue to hold for this path.

By definition, $R_\alpha(A|B)_{\omega^\varepsilon_s}=e^{-q\,h_\varepsilon(s)}$. Since $h_\varepsilon$ is continuous on $[0,\theta]$ and differentiable on $(0,\theta)$, so is $s\mapsto R_\alpha(A|B)_{\omega^\varepsilon_s}$, and by the chain rule and~\eqref{eq:Hderiv},
\begin{align}
  \Big|\frac{d}{ds}R_\alpha(A|B)_{\omega^\varepsilon_s}\Big|
  &=|q|\,R_\alpha(A|B)_{\omega^\varepsilon_s}\,|h_\varepsilon'(s)| \notag\\
  &\le2\sqrt{1-\varepsilon}\;|q|\,R_\alpha(A|B)_{\omega^\varepsilon_s}\sqrt{V_\alpha(\omega^\varepsilon_s\|\Sig^\varepsilon_s)}
\end{align}
for all $s\in(0,\theta)$. Each $\omega^\varepsilon_s$ is full rank, 
% and classical on $A$ if the original states $\rho,\sigma$ were so, 
thus Lemma~\ref{lem:Rvariance} applies.
Moreover, if the original states $\rho,\sigma$ were classical on $A$ then each $\omega^\varepsilon_s$ is also classical on $A$, so in that case we can validly use the $\chi=1/2$ case in that lemma. (If $\rho,\sigma$ are not both classical on $A$, the following analysis should be understood as always taking $\chi=1$ even if some $\omega^\varepsilon_s$ are classical on $A$, since this is a valid relaxation of the $\chi=1/2$ bound.) 
Multiplying by $d_A^{q(1-\chi)}$,
\begin{equation}
  \Big|\frac{d}{ds}\Big(d_A^{q(1-\chi)}R_\alpha(A|B)_{\omega^\varepsilon_s}\Big)\Big|
  \le2\sqrt{1-\varepsilon}\,\sqrt{(c^2-1)-\big(d_A^{q(1-\chi)}R_\alpha(A|B)_{\omega^\varepsilon_s}-c\big)^2} .
\end{equation}
By the chain rule, it follows that
\begin{align}
  \left|\frac{d}{ds}\arcsin\frac{d_A^{q(1-\chi)}R_\alpha(A|B)_{\omega^\varepsilon_s}-c}{\sqrt{c^2-1}}\right|
  &=\frac{\Big|\frac{d}{ds}\Big(d_A^{q(1-\chi)}R_\alpha(A|B)_{\omega^\varepsilon_s}\Big)\Big|}{\sqrt{(c^2-1)-\big(d_A^{q(1-\chi)}R_\alpha(A|B)_{\omega^\varepsilon_s}-c\big)^2}} \notag\\
  &\le2\sqrt{1-\varepsilon}
\end{align}
for all $s\in(0,\theta)$. The mean value theorem bounds the change of the arcsine between $s=0$ and $s=\theta$ by $2\sqrt{1-\varepsilon}\,\theta$. Letting $\varepsilon\to0$ and using the continuity of $\omega_{AB}\mapsto\Hup(A|B)_\omega$ as in the proof of Theorem~\ref{thm:continuity} gives~\eqref{eq:Rarcsin}.
\end{proof}

We now present two corollaries that are consequences of Proposition~\ref{prop:Rarcsin}, which should be useful for QKD applications. For the first, Corollary~\ref{cor:Rstate}, we upper bound $R_\alpha(A|B)_\rho$ in terms of a concave nondecreasing function of $R_\alpha(A|B)_\sigma$ (given some constant upper bound $\theta$ on the angular distance between the states). For the purposes of QKD numerical optimizations, this means that e.g.~a maximization problem where the objective function is $R_\alpha(A|B)_\rho$ (or a positive multiple thereof) can be relaxed to one where the objective function is concave in $R_\alpha(A|B)_\sigma$. This would, in principle, preserve the concavity/convexity properties typically needed for these numerical optimization methods. However, there is the drawback that this concave function is slightly elaborate, and may not be supported by standard solvers. Thus as an alternative, we present also Corollary~\ref{cor:Radd}, which simply gives a bound on the ``additive'' difference $\left|R_\alpha(A|B)_\rho - R_\alpha(A|B)_\sigma\right|$, which should be easier to implement numerically.

\begin{corollary}\label{cor:Rstate}
Under the assumptions of Proposition~\ref{prop:Rarcsin}, let $\Lambda(\rho_{AB},\sigma_{AB})\le\theta\le\frac\pi2$. Then
% \begin{equation}
%   d_A^{q(1-\chi)}R_\alpha(A|B)_\rho
%   \le\min\left\{d_A^{|q|\chi},\;c+\sqrt{c^2-1}\,\sin\left[\min\left\{\frac\pi2,\;\arcsin\frac{d_A^{q(1-\chi)}R_\alpha(A|B)_\sigma-c}{\sqrt{c^2-1}}+2\theta\right\}\right]\right\}.
% \end{equation}
\begin{equation}
R_\alpha(A|B)_\rho \leq F_\theta(R_\alpha(A|B)_\sigma),
\end{equation}
where $F_\theta$ is the following concave nondecreasing function on $\left[
d_A^{q(\chi-1)-|q|\chi},
d_A^{q(\chi-1)+|q|\chi}
\right]$:
\begin{equation}\label{eq:concavebnd}
F_\theta(r)
\coloneqq d_A^{q(\chi-1)}
\min\Bigg\{d_A^{|q|\chi}, 
c+\sqrt{c^2-1}\,
\sin\!\left[\min\!\left\{
\frac{\pi}{2},
\arcsin\!\left(
\frac{d_A^{q(1-\chi)}r-c}{\sqrt{c^2-1}}
\right)+2\theta
\right\}\right]\Bigg\},
\end{equation}
for $c$ as defined in~\eqref{eq:cdefn}.
\end{corollary}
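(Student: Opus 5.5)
The plan is to read the bound directly off Proposition~\ref{prop:Rarcsin}, combined with the a priori range of $R_\alpha$ and the monotonicity of $\sin$ on $[-\frac\pi2,\frac\pi2]$. Write $u_\omega\coloneqq d_A^{q(1-\chi)}R_\alpha(A|B)_\omega$ and $\phi(u)\coloneqq\arcsin\frac{u-c}{\sqrt{c^2-1}}$.

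\emph{Range of $u_\omega$.} From the proof of Proposition~\ref{prop:Rarcsin} we have $u_\omega\in[d_A^{-|q|\chi},d_A^{|q|\chi}]$. This interval lies strictly inside $(c-\sqrt{c^2-1},c+\sqrt{c^2-1})$, so $\phi(u_\omega)\in(-\frac\pi2,\frac\pi2)$ is well defined. The interval for $u$ also translates into exactly the stated domain of $F_\theta$ after multiplying by $d_A^{q(\chi-1)}$.

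\emph{Main inequality.} Proposition~\ref{prop:Rarcsin} gives
\begin{equation}
\phi(u_\rho)\le\phi(u_\sigma)+2\Lambda(\rho_{AB},\sigma_{AB})\le\phi(u_\sigma)+2\theta.
\end{equation}
Since also $\phi(u_\rho)\le\frac\pi2$, we get $\phi(u_\rho)\le\min\{\frac\pi2,\phi(u_\sigma)+2\theta\}$. Both sides of this inequality lie in $[-\frac\pi2,\frac\pi2]$, where $\sin$ is increasing. The left side is at least $-\frac\pi2$ trivially. For the right side, $\phi(u_\sigma)+2\theta>-\frac\pi2$ because $\theta\ge0$. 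Applying $\sin$ yields
\begin{equation}
u_\rho\le c+\sqrt{c^2-1}\sin\big[\min\{\tfrac\pi2,\phi(u_\sigma)+2\theta\}\big].
\end{equation}
Combining this with $u_\rho\le d_A^{|q|\chi}$ and multiplying by $d_A^{q(\chi-1)}>0$ gives $R_\alpha(A|B)_\rho\le F_\theta(R_\alpha(A|B)_\sigma)$.

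\emph{Monotonicity and concavity of $F_\theta$.} This is the only step requiring some care. $F_\theta$ is nondecreasing because it is a composition of nondecreasing maps: $r\mapsto\phi(d_A^{q(1-\chi)}r)$, then $x\mapsto\min\{\frac\pi2,x+2\theta\}$, then $\sin$ on $[-\frac\pi2,\frac\pi2]$, then an affine map with positive slope, then a minimum with a constant.

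For concavity, a minimum of concave functions is concave, so it suffices to show that
\begin{equation}
g(x)\coloneqq\sin\big(\min\{\tfrac\pi2,\arcsin x+2\theta\}\big),\qquad x\in[-1,1],
\end{equation}
is concave, since $x$ is an affine, increasing function of $r$. Where the minimum equals $\frac\pi2$, $g$ is the constant $1$. Otherwise,
\begin{equation}
g(x)=x\cos2\theta+\sqrt{1-x^2}\,\sin2\theta,
\end{equation}
which is concave because $\sin2\theta\ge0$ for $\theta\in[0,\frac\pi2]$ and $\sqrt{1-x^2}$ is concave. The cutoff occurs at $\arcsin x=\frac\pi2-2\theta$, which is exactly where this smooth branch attains its maximum value $1$. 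So $g=\min\{1,\,x\cos2\theta+\sqrt{1-x^2}\sin2\theta\}$ on the relevant range: before the cutoff the smooth branch is $\le1$, and after it $g$ is constant. When $\theta>\frac\pi4$, the cutoff point may lie to the left of the domain, in which case $g\equiv1$. In every case $g$ is a minimum of two concave functions, hence concave.
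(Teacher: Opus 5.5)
Your derivation of $R_\alpha(A|B)_\rho\le F_\theta(R_\alpha(A|B)_\sigma)$ follows the paper's proof: Proposition~\ref{prop:Rarcsin}, the cap at $\frac\pi2$, monotonicity of $x\mapsto c+\sqrt{c^2-1}\sin x$ on $[-\frac\pi2,\frac\pi2]$, and the a priori bound $u_\rho\le d_A^{|q|\chi}$. Your domain check, your monotonicity argument for $F_\theta$, and your reduction of concavity to concavity of $g$ are also fine. The gap is in the last step. The identity $g=\min\{1,h\}$, with $h(x)\coloneqq x\cos2\theta+\sqrt{1-x^2}\sin2\theta$, is false. Since $h(x)=\sin(\arcsin x+2\theta)\le 1$ for every $x\in[-1,1]$, the cap at $1$ is never active, so $\min\{1,h\}=h$ everywhere. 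But $g$ caps the \emph{angle} $\arcsin x+2\theta$ at $\frac\pi2$, not the \emph{value} $h(x)$ at $1$. Past $x^*=\cos2\theta$ the sine strictly decreases below $1$, while $g\equiv 1$ there. For instance, at $\theta=\frac\pi4$ you have $h(x)=\sqrt{1-x^2}$, so $g(\tfrac12)=1$ but $\min\{1,h(\tfrac12)\}=\tfrac{\sqrt3}{2}$. Your case $\theta>\frac\pi4$ with the cutoff left of the domain fails the same way: $g\equiv1$ there, but $\min\{1,h\}=h$ is not constant. Note also that $\min\{1,h\}=h$ is not nondecreasing, which contradicts your own monotonicity argument. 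So ``minimum of two concave functions'' does not prove concavity of $g$.

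The fact you already noticed, that the cutoff $x^*=\cos2\theta$ is exactly the maximiser of $h$, is the right ingredient; it just has to be used differently. One fix is to write $g(x)=h(\min\{x,x^*\})$. The map $x\mapsto\min\{x,x^*\}$ is concave with values in $[-1,x^*]$. On that interval $h$ is concave and nondecreasing, since $h'$ is nonincreasing and $h'(x^*)=\cos2\theta-\cos2\theta=0$. A concave nondecreasing function composed with a concave function is concave. Alternatively, argue as the paper does: on the branch $x\le x^*$ one has $g''(x)=-\sin2\theta\,(1-x^2)^{-3/2}\le0$ and $g'(x^*)=0$. Hence $g'$ is nonnegative, nonincreasing, and joins continuously to the zero derivative of the constant branch, so $g$ is concave and nondecreasing.
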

\begin{proof}
By Proposition~\ref{prop:Rarcsin} and $\Lambda(\rho_{AB},\sigma_{AB})\le\theta$, the arcsine evaluated at $\rho_{AB}$ is at most the arcsine evaluated at $\sigma_{AB}$ plus $2\theta$, and it is also at most $\frac\pi2$. Applying the map $x\mapsto c+\sqrt{c^2-1}\sin x$, which is increasing on $[-\frac\pi2,\frac\pi2]$ and inverts the arcsine, and using $d_A^{q(1-\chi)}R_\alpha(A|B)_\rho\le d_A^{|q|\chi}$ from the proof of Proposition~\ref{prop:Rarcsin}, gives the claimed bound.

To prove the claimed properties of $F_\theta$, we first cover the trivial ``endpoint'' values: $F_0(r)=r$ and $F_{\pi/2}(r)=d_A^{q(\chi-1)+|q|\chi}$ trivially have the claimed properties. Then consider the following function for 
 $0<\theta<\pi/2$ and $x\in[-1,1]$:
\begin{equation}
g_\theta(x)
\coloneqq
% \sin\!\left(\min\{\pi/2,\arcsin x+2\theta\}\right)=
\begin{cases}
x\cos(2\theta)+\sqrt{1-x^2}\sin(2\theta),
& x\le\cos(2\theta),\\
1, & x\ge\cos(2\theta).
\end{cases}
\end{equation}
Note that
\begin{equation}
F_\theta(r)
= d_A^{q(\chi-1)}
\min\left\{
d_A^{|q|\chi},
c+\sqrt{c^2-1}\,
g_\theta\!\left(
\frac{d_A^{q(1-\chi)}r-c}{\sqrt{c^2-1}}
\right)
\right\},
\end{equation}
i.e.~$F_\theta$ is the minimum of a constant and an affine transformation of $g_\theta$ (with positive coefficients). Thus it suffices to prove that $g_\theta$ is concave and nondecreasing, after which $F_\theta$ inherits those properties. 

For the $x\le\cos(2\theta)$ branch in the definition of $g_\theta$, we have
\begin{equation}
g_\theta''(x)
=-\frac{\sin(2\theta)}{(1-x^2)^{3/2}}\le0,
\qquad
g_\theta'\bigl(\cos(2\theta)\bigr)=0.
\end{equation}
Thus the derivative is nonnegative and nonincreasing, and
joins continuously to the derivative of the $x\ge\cos(2\theta)$
branch (which is just zero, since $g_\theta$ is constant there). Hence $g_\theta$ is concave and nondecreasing.
\end{proof}

\begin{corollary}\label{cor:Radd}
Under the assumptions of Proposition~\ref{prop:Rarcsin},
\begin{equation}
  \big|R_\alpha(A|B)_\rho-R_\alpha(A|B)_\sigma\big|
  \le2\,d_A^{q(\chi-1)}\sqrt{c^2-1}\,\sqrt{1-F(\rho_{AB},\sigma_{AB})^2},
\end{equation}
for $c$ as defined in~\eqref{eq:cdefn}.
\end{corollary}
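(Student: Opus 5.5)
The plan is to derive this from Proposition~\ref{prop:Rarcsin} together with an elementary fact about the sine: it is $1$-Lipschitz, and in fact $|\sin a-\sin b|\le 2\sin\frac{|a-b|}{2}$. Write $x_\omega\coloneqq d_A^{q(1-\chi)}R_\alpha(A|B)_\omega$ and $\phi_\omega\coloneqq\arcsin\frac{x_\omega-c}{\sqrt{c^2-1}}\in[-\frac\pi2,\frac\pi2]$. The values $\phi_\omega$ are well defined, since the proof of Proposition~\ref{prop:Rarcsin} shows that $x_\omega$ lies strictly inside $(c-\sqrt{c^2-1},c+\sqrt{c^2-1})$. Inverting the arcsine gives $x_\omega=c+\sqrt{c^2-1}\,\sin\phi_\omega$. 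Therefore
\begin{align}
\big|R_\alpha(A|B)_\rho-R_\alpha(A|B)_\sigma\big| = d_A^{q(\chi-1)}\sqrt{c^2-1}\,\big|\sin\phi_\rho-\sin\phi_\sigma\big|.
\end{align}

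Next I would apply the sum-to-product identity:
\begin{align}
|\sin\phi_\rho-\sin\phi_\sigma| = 2\Big|\cos\tfrac{\phi_\rho+\phi_\sigma}{2}\Big|\,\Big|\sin\tfrac{\phi_\rho-\phi_\sigma}{2}\Big| \le 2\sin\tfrac{|\phi_\rho-\phi_\sigma|}{2}.
\end{align}
Proposition~\ref{prop:Rarcsin} gives $|\phi_\rho-\phi_\sigma|\le 2\Lambda(\rho_{AB},\sigma_{AB})$. Since $\Lambda\le\frac\pi2$ and both $\phi$'s lie in $[-\frac\pi2,\frac\pi2]$, the half-difference lies in $[0,\frac\pi2]$, where $\sin$ is increasing. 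This yields the bound $2\sin\Lambda(\rho_{AB},\sigma_{AB})$. Finally, $\sin\Lambda=\sqrt{1-\cos^2\Lambda}=\sqrt{1-F(\rho_{AB},\sigma_{AB})^2}$ by Definition~\ref{def:angle}. Combining these steps gives the claimed inequality.

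The only subtle point is to use the sharper half-angle form rather than the crude Lipschitz bound $|\sin a-\sin b|\le|a-b|$. The crude bound would give $2\Lambda$ instead of $2\sin\Lambda$; the half-angle form, together with monotonicity of $\sin$ on $[0,\frac\pi2]$, is what produces the fidelity expression. One should also note that the bound remains valid when $2\Lambda$ exceeds the available range of $\phi$. In that case the estimate $2\sin\frac{|\phi_\rho-\phi_\sigma|}{2}\le 2\sin\Lambda$ still holds, because $|\phi_\rho-\phi_\sigma|/2\le\min\{\Lambda,\frac\pi2\}$. I expect no real obstacle beyond this bookkeeping.
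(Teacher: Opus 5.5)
Your proposal is correct and matches the paper's proof essentially step for step: you write both rescaled quantities as $c+\sqrt{c^2-1}\sin\phi$, apply the sum-to-product identity with $|\cos|\le 1$, use $|\phi_\rho-\phi_\sigma|/2\le\Lambda\le\frac{\pi}{2}$ together with monotonicity of $\sin$ on $[0,\frac{\pi}{2}]$, and identify $\sin\Lambda=\sqrt{1-F^2}$. The extra case discussion in your last paragraph is harmless but unnecessary, since $\Lambda\le\frac{\pi}{2}$ already holds by Definition~\ref{def:angle}.
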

\begin{proof}
Both $d_A^{q(1-\chi)}R_\alpha(A|B)_\rho$ and $d_A^{q(1-\chi)}R_\alpha(A|B)_\sigma$ are of the form $c+\sqrt{c^2-1}\sin x$, with $x$ the respective arcsine in~\eqref{eq:Rarcsin}. By the identity $\sin x-\sin y=2\cos\frac{x+y}2\sin\frac{x-y}2$, their difference is at most $2\sqrt{c^2-1}\,\big|\sin\frac{x-y}2\big|$ in absolute value. By Proposition~\ref{prop:Rarcsin}, $\frac{|x-y|}2\le\Lambda(\rho_{AB},\sigma_{AB})\le\frac\pi2$, so this is at most $2\sqrt{c^2-1}\,\sin\Lambda(\rho_{AB},\sigma_{AB})=2\sqrt{c^2-1}\sqrt{1-F(\rho_{AB},\sigma_{AB})^2}$. Dividing by $d_A^{q(1-\chi)}$ gives the claim.
\end{proof}

\begin{remark}
While the above two results have been presented for ease of applicability, strictly speaking they might not be the tightest possible way to apply this general proof technique to QKD numerical work. For instance, when it comes to the additive form in Corollary~\ref{cor:Radd}: rather than $R_\alpha(A|B)$ specifically, one could consider the continuity of $e^{-q'\widetilde{H}^\uparrow_\alpha(A|B)}$ for some other value $q'$ (e.g.~$q'=\alpha-1$ would be a natural guess). Some preliminary analysis indicates that neither approach is strictly better, and a case-by-case analysis may generally be needed. However, we defer such a fine-grained analysis to future work.
\end{remark}

\section{Limiting arguments}
\label{app:limits}

While the limiting arguments are basically routine, we choose to minimize the extent of dependency on infinite-dimensional results by \emph{first} extending to the ``endpoint'' values of $\alpha$, \emph{then} extending to countably-infinite-dimensional $B$. 

Specifically, to handle $\alpha=\infty$ we simply observe that \emph{by definition} the {\Renyi} divergence at that value $\alpha=\infty$ is given by that limit, and our continuity bound formulas converge to well-defined values in that limit as well. 
There is the minor subtlety that $\Hup$ involves an optimization over the divergence, but the interchange between this optimization and the $\alpha\to\infty$ limit is established by~\cite{mosonyi2023strong}.
For $\alpha=1/2$, we invoke the fact that the {\Renyi} divergences are continuous in $\alpha$ at that value~\cite[Chapter~4.1.3]{tomamichel2015quantum}, so we can perform an analogous argument but also literally substitute $\alpha=1/2$ at the end.

To handle (countably) infinite-dimensional $B$, the intuition is also simple: we approximate  $(\rho,\sigma)$ with a sequence of states $(\rho_n, \sigma_n)$ that have finite support on $B$ (preserving the classical structure on $A$ if necessary), in such a way that
\begin{equation}
\begin{gathered}\label{eq:convergecriteria}
\widetilde H_\alpha^\uparrow(A|B)_{\rho_n}
\to
\widetilde H_\alpha^\uparrow(A|B)_\rho, \qquad
\widetilde H_\alpha^\uparrow(A|B)_{\sigma_n}
\to
\widetilde H_\alpha^\uparrow(A|B)_\sigma, \\
\Lambda(\rho_n,\sigma_n) \to \Lambda(\rho,\sigma).
\end{gathered}
\end{equation}
Since each such pair $(\rho_n, \sigma_n)$ has finite support on $B$, all our bounds apply to them, e.g.~for Theorem~\ref{thm:intro} we get
\begin{align}
\Big|\Hup(A|B)_{\rho_n}-\Hup(A|B)_{\sigma_n}\Big| \;\le\; 2\,\Lambda(\rho_{n},\sigma_{n})\,\frac{\sinh\big(|q|\arccosh (d_A^\chi)\big)}{|q|},
\end{align}
and then taking limits on both sides of the equation yields the desired claim for the original states $(\rho,\sigma)$ with infinite-dimensional $B$ (implicitly using the fact that $|x-y|$ is a continuous function, and the coefficient ${\sinh\big(|q|\arccosh (d_A^\chi)\big)}/{|q|}$ on the RHS has no dependence on the states $(\rho_n, \sigma_n)$ except via the dimension $d_A$, which remains constant throughout). A similar argument holds for our other continuity bounds.

The main technicality is simply in defining $\Hup(A|B)$ for countably-infinite-dimensional $B$ in the first place, and then ensuring that this desired sequence of approximations indeed exists. This was already basically established in previous works, although a bit of extra fine-tuning is required to obtain the exact version we want here. We thus list a few options below.

\subsection{Construction of approximations}

To define $\Hup(A|B)$ for countably infinite dimension, we follow~\cite{mosonyi2023strong,hiai2023quantum}:
they provide several equivalent definitions of $\Dsand$ (one of them being, in fact, precisely a limiting definition), then defines $\Hup$ via the same optimization as in Definition~\ref{def:Hup} (albeit with infimum instead of minimum). The following discussions apply to this choice of definition.

For $\alpha\in(1,\infty)$, one approach is to use a recent work~\cite{kamin2026unconditional}, which gave a continuity bound for $\Hup$ (depending only on $d_A$) in terms of trace distance for infinite-dimensional $B$, by building on earlier convergence results from~\cite{mosonyi2023strong}. This implies continuity in terms of angular distance as well, via Fuchs--van de Graaf: while the resulting bound is rather loose, it is sufficient to simply establish continuity, which is all we need for the remainder of this argument. With this, we can simply pick any sequence of states $(\rho_n, \sigma_n)$ (with finite support on $B$) converging in angular distance to $(\rho,\sigma)$, and invoke continuity of $\Hup$ to say that their entropies $\Hup$ converge as required in~\eqref{eq:convergecriteria}. Their angular distance also converges as desired, by a standard triangle-inequality argument for metrics, i.e.
\begin{equation}
\begin{gathered}
\Lambda(\rho_n,\sigma_n) \leq \Lambda(\rho_n,\rho) + \Lambda(\rho,\sigma) + \Lambda(\sigma,\sigma_n) ,\\
\Lambda(\rho,\sigma) \leq \Lambda(\rho,\rho_n) + \Lambda(\rho_n,\sigma_n) + \Lambda(\sigma_n,\sigma) 
% \quad\implies\quad \Lambda(\rho_n,\sigma_n) \geq \Lambda(\rho,\sigma) - \Lambda(\rho,\rho_n) -\Lambda(\sigma_n,\sigma)
,
\end{gathered}
\end{equation}
so 
\begin{equation}
\begin{gathered}
\left|\Lambda(\rho_n,\sigma_n) - \Lambda(\rho,\sigma)\right| \leq \Lambda(\rho_n,\rho) + \Lambda(\sigma,\sigma_n) \to 0,
\end{gathered}
\end{equation}
as desired.

Another approach for $\alpha\in(1,\infty)$ is to use an earlier work~\cite{beigi2023operator}, which also established continuity in trace distance: while it did not explicitly claim its result holds for countably-infinite-dimensional $B$, an inspection of the proof shows that it applies directly to that setting. Then we can apply the same argument as described above.

For $\alpha=1$, the required convergence follows from
\cite[Proposition~2]{kuznetsova2011conditional}, since $A$ is
finite dimensional. For $\alpha=\infty$, the required convergence
follows from \cite[Proposition~1]{furrer2011min} by considering
normalized finite-dimensional projections.

For $1/2\leq\alpha<1$, we need to do slightly more work on top of existing results from~\cite{hiai2023quantum}, basically to ensure the optimization in $\Hup$ interacts correctly with their convergence results for $\Dsand$.
Let
$P_n\uparrow I_B$ be finite-rank projectors converging strongly to $I_B$, and extend them onto $A$ via $\Pi_n=I_A\otimes P_n$. Now for any $\rho$, we construct a sequence of normalized projected versions $\rho_n$ via
\begin{equation}
\rho_n=\frac{\Pi_n\rho\Pi_n}{p_n} , \quad \text{ where }
p_n \coloneqq \operatorname{Tr}(\Pi_n\rho).
\end{equation}
Strictly speaking, there may be finitely many  terms with $p_n=0$ so the above formula is technically ill-defined; let us restrict our attention to only values of $n$ beyond which all $p_n$ are nonzero.
For these states $\rho_n$, every state $\sigma_B$
supported on $P_n$ satisfies
\begin{gather}
(I_A\otimes\sigma_B)^{\frac{1-\alpha}{2\alpha}}\rho\,(I_A\otimes\sigma_B)^{\frac{1-\alpha}{2\alpha}} = p_n (I_A\otimes\sigma_B)^{\frac{1-\alpha}{2\alpha}}\rho_n\,(I_A\otimes\sigma_B)^{\frac{1-\alpha}{2\alpha}}\\
% \Tra\Big[\Big((I_A\otimes\sigma_B)^{\frac{1-\alpha}{2\alpha}}\rho\,(I_A\otimes\sigma_B)^{\frac{1-\alpha}{2\alpha}}\Big)^{\alpha}\Big] = p_n^\alpha \,\Tra\Big[\Big((I_A\otimes\sigma_B)^{\frac{1-\alpha}{2\alpha}}\rho_n\,(I_A\otimes\sigma_B)^{\frac{1-\alpha}{2\alpha}}\Big)^{\alpha}\Big] \\
\implies \widetilde D_\alpha(\rho\Vert I_A\otimes\sigma_B)
=
\widetilde D_\alpha(\rho_n\Vert I_A\otimes\sigma_B)
+\frac{\alpha}{\alpha-1}\ln p_n.\label{eq:renyi-projection-reference-below-one}
\end{gather}
Thus for each $n$, letting $S_n(B)$ denote the set of states on $B$ supported on $P_n$, we have
\begin{gather}
\inf_{\sigma_B \in S_n(B)} \widetilde D_\alpha(\rho\Vert I_A\otimes\sigma_B)
=
-\Hup(A|B)_{\rho_n}
+\frac{\alpha}{\alpha-1}\ln p_n, \label{eq:projected_entropy}
\end{gather}
where in simplifying the RHS we made the observation that second-argument optimization in the definition of $\Hup(A|B)_{\rho_n}$ can be restricted to $S_n(B)$ without loss of generality. 

Note the LHS need not equal $-\Hup(A|B)_{\rho}$ because the infimum is over a restricted subset, not the whole of $S(B)$. However, in the $n\to\infty$ limit it does converge to $-\Hup(A|B)_{\rho} 
% = \inf_{\sigma_B \in S(B)} \widetilde D_\alpha(\rho\Vert I_A\otimes\sigma_B)
$. This follows from the fact that for any $\sigma_B \in S(B)$ (possibly with infinite support), its normalized projected versions
$\sigma_{B,n}=P_n\sigma_BP_n/\operatorname{Tr}(P_n\sigma_B)$ (where again the denominator will be non-zero for all sufficiently large $n$) satisfy
$\widetilde D_\alpha(\rho\Vert I_A\otimes\sigma_{B,n})
\to\widetilde D_\alpha(\rho\Vert I_A\otimes\sigma_B)$, by the convergence result in~\cite[Proposition~3.4]{hiai2023quantum} together with an application of \eqref{eq:renyi-projection-reference-below-one}.
Consequently, the infima over $S_n(B)$ converge to the infimum over $S(B)$. This fact, together with~\eqref{eq:projected_entropy} and the fact that $p_n\to1$, yields the desired property
$\widetilde H_\alpha^\uparrow(A|B)_{\rho_n}
\to\widetilde H_\alpha^\uparrow(A|B)_\rho$.

\bibliography{References.bib}

\end{document}